\documentclass[12pt]{article}
\usepackage{amsmath,amssymb,url,mathrsfs, nicefrac, color}
\usepackage[semicolon]{natbib}
\usepackage[final]{microtype} %avoids text overflow
\usepackage[capposition=bottom]{floatrow}
\usepackage{mwe}
\usepackage[skins]{tcolorbox}
\usepackage{lipsum}
\usepackage{comment}
\definecolor{myred}{RGB}{213,94,0}
\definecolor{mygreen}{RGB}{0,158,115}
\definecolor{myblue}{rgb}{0,0,0.75}
\definecolor{bcblue}{RGB}{0,30,52}

\usepackage{hyperref}
\hypersetup{
  colorlinks   = true,    % Colours links instead of ugly boxes
	urlcolor     = myblue,    % Colour for external hyperlinks
	  linkcolor    = myblue,    % Colour of internal links
		citecolor    = myblue, % Colour of citations
		 breaklinks=true,   % splits links across lines
           colorlinks=true,   % displays links as colored text instead of blocks
           pdfusetitle=true  % \title and \author values into pdf metadata
	}
\usepackage{tikz}
\usetikzlibrary{patterns, calc,shapes,decorations.pathreplacing}
\usepackage{pgfplots}
\usepackage{subfig}

\usetikzlibrary{arrows.meta} %various arrows
\usepackage{amsthm}
\makeatletter
\def\th@plain{%
\thm@notefont{}% same as heading font
  \itshape % body font
}
\def\th@definition{%
  \thm@notefont{}% same as heading font
	\normalfont % body font
}
\makeatother
\usepackage{mathtools}
\usepackage[english]{babel}
\usepackage{color}
\usepackage{bm}
\usepackage{accents}	
\usepackage{graphicx}
\usepackage[semicolon]{natbib}
\theoremstyle{plain}
\newtheorem{theorem}{Theorem}
\newtheorem{lemma}{Lemma}
\newtheorem{proposition}{Proposition}
\newtheorem{corollary}{Corollary}

\theoremstyle{definition}
\newtheorem{definition}{Definition}
\newtheorem*{definition*}{Definition}
\newtheorem*{assumption*}{Assumption}
\newtheorem{assumption}{Assumption}
\newtheorem*{conjecture*}{Conjecture}

\newtheorem{example}{Example}

\usepackage{url}

\usepackage[top=1in, bottom=1in, left=1in, right=1in]{geometry}
\usepackage{setspace}
\usepackage[bottom]{footmisc}
\newcommand{\reals}{\ensuremath{\mathbb{R}}}
\newcommand{\expect}{\ensuremath{\mathbb{E}}}

\newcommand{\val}{\ensuremath{\tilde{U}}}

\newcommand{\muh}{\ensuremath{b}}
\newcommand{\supp}{\ensuremath{\mathrm{supp}}} 
\newcommand{\id}{\ensuremath{\mathrm{id}}} 
\newcommand{\muu}{\ensuremath{\underline{\mu}}}
\newcommand{\muo}{\ensuremath{\overline{\mu}}}

\newcommand{\w}{\ensuremath{\mathrm{w}}} 

\newcommand{\sigmah}{\ensuremath{\hat{\sigma}}}

\newcommand{\M}{\ensuremath{\Delta(\Omega)}}

\newcommand{\al}{\ensuremath{\alpha}}
\newcommand{\alh}{\ensuremath{\hat{\al}}}
\newcommand{\rad}{\ensuremath{r}}
\newcommand{\rank}{\ensuremath{\mathrm{rank}\,}}

\newcommand{\Sen}{\ensuremath{B}}
\newcommand{\tech}{\ensuremath{\pi}}
\newcommand{\Tech}{\ensuremath{\Pi}}
\newcommand{\tauv}{\ensuremath{\hat{\tau}}}
\newcommand{\MVA}{\ensuremath{\mathrm{MVA}}}

\definecolor{screencolor}{RGB}{253,232,216}
\definecolor{noscreencolor}{RGB}{219,234,254}
\definecolor{boundarycolor}{RGB}{30,41,59}
\definecolor{axiscolor}{RGB}{100,116,139}
\definecolor{dashcolor}{RGB}{148,163,184}
\definecolor{trustcolor}{RGB}{219,234,254}

\newcommand{\sz}{5}

\newcommand{\drawaxesandlabels}[1]{%
  \draw[color=black!20, thin] (0,0) rectangle (\sz,\sz);
  \draw[->, >=stealth, thick, color=axiscolor] (0,0) -- (\sz+0.6,0);
  \node[below, font=\small\itshape, color=axiscolor] at (0.5*\sz, -0.7)
    {doctor's signal};
  \draw[->, >=stealth, thick, color=axiscolor] (0,0) -- (0,\sz+0.6)
    node[above, font=\small\itshape, color=axiscolor, align=center] {AI's\\[-2pt]signal};
  \foreach \v/\lbl in {0/0, 0.5/{1/2}, 1/1}{
    \draw[thin,color=axiscolor] (\v*\sz, 0.06) -- (\v*\sz, -0.06);
    \node[below, font=\scriptsize, color=axiscolor] at (\v*\sz, 0) {$\lbl$};
  }
  \foreach \v/\lbl in {0.5/{1/2}, 1/1}{
    \draw[thin,color=axiscolor] (0.06, \v*\sz) -- (-0.06, \v*\sz);
    \node[left, font=\scriptsize, color=axiscolor] at (0, \v*\sz) {$\lbl$};
  }
  \node[font=\small\bfseries, color=black] at (0.5*\sz, \sz+0.65) {#1};
}
\pgfplotsset{compat=1.17} 
  \title{Robust Trust\thanks{Dworczak: Department of Economics, Northwestern University and Group for Research in Applied Economics, piotr.dworczak@northwestern.edu. Smolin: Toulouse School of Economics, alexey.v.smolin@gmail.com. 
We thank Nageeb Ali, Ricardo Alonso, Ben Brooks, Laura Doval, Tan Gan, Alexis Ghersengorin, Marina Halac, Jason Hartline, Nicole Immorlica, Emir Kamenica, David Levine, Annie Liang, Stephen Morris, Jacopo Perego, Bal\'azs Szentes, and Mark Whitmeyer for helpful conversations.
  Alex Smolin gratefully acknowledges funding from the French National Research Agency (ANR) under the Investments for the Future program
(grant ANR-17-EURE-0010) and the AI Interdisciplinary Institute ANITI (grant ANR-23-IACL-0002). Part of the analysis in this paper was conducted while Alex Smolin was visiting Northwestern University and Columbia Business School, and we thank both institutions for their hospitality.}}
    \author{Piotr Dworczak \textcircled{r} Alex Smolin}
    \date{March 19, 2026}

\begin{document}
    \maketitle
\thispagestyle{empty}

\begin{abstract}
An agent chooses an action based on her private information and a recommendation from an informed but potentially misaligned adviser. With a known probability, the adviser truthfully reports his signal; with the remaining probability, he can send any message. We characterize optimal robust decision rules that maximize the agent's worst-case expected payoff. Every optimal rule is equivalent to a trust-region policy in belief space:  
the adviser's reported beliefs are taken at face value if they fall within the trust region but are otherwise clipped to the trust region's boundary. We derive alignment thresholds above which advice is strictly valuable and fully characterize the solution in both binary-state and binary-action environments.

\bigskip

    \noindent\textbf{Keywords:} robustness, information design, misalignment, human-AI interactions.\\
    \noindent\textbf{JEL Codes:} C72, D81, D83
\end{abstract}

\newpage
\thispagestyle{empty}
\newpage
\setcounter{page}{1}
\section{Introduction}
Modern AI systems increasingly influence decisions with large and sometimes irreversible consequences, including autonomous driving, medical triage, hiring, and credit or security screening (see, e.g., \cite{MaslejEtAl_AIIndex_2025_AnnualReport}). Their appeal is straightforward: they can synthesize information at scale and provide recommendations that exceed unaided human performance in many tasks. The central risk is also well-recognized: when a system is opaque, complex, and trained or deployed under imperfect objectives, users may not be able to tell whether a recommendation is merely noisy, systematically biased, or actively harmful. The misalignment problem is a particularly serious concern in high-stakes environments, and its mitigation is key to ensuring the safe adoption of AI-aided decision making (see, e.g., \cite{Russell2019HumanCompatible}).

In this paper, we study how a decision-maker should use AI when the system may be misaligned. Taking the AI's information structure and an exogenous alignment probability as given, we characterize the optimal robust decision rule that maximizes the decision-maker's payoff under the assumption of worst-case AI behavior in case of misalignment.

Our model features an \textit{agent} who chooses an action under uncertainty about the state of the world. The agent has access to a private signal reflecting her expertise or contextual information, but she can additionally rely on reports from an \textit{adviser}. The adviser observes complementary information about the state and sends a message to the agent.

Crucially, the adviser is aligned and reports his information truthfully only with some known alignment probability. With the remaining probability, the adviser is misaligned and can send an arbitrary message. In practice, AI misalignment could take several distinct forms with ambiguous implications for the agent's decision (see, e.g., \cite{AmodeiEtAl2016ConcreteProblemsAISafety}). We therefore adopt a robust approach: we assume that the agent is not willing to make any assumptions about the behavior of the misaligned adviser and hence chooses a policy that maximizes her expected payoff guarantee across all possible forms of misalignment. In the model, this is conceptualized as the misaligned adviser attempting to \textit{minimize} the agent's payoff. We emphasize, however, that the robust optimality criterion reflects the inability to rule out any output of the misaligned AI, rather than a direct concern that misaligned AI is actively adversarial.

Our main structural result shows that the optimal robust policy  can be summarized by a single, interpretable object that we call the ``trust region.'' The trust region is a connected set of reported beliefs about the state that the agent takes at face value. When the adviser's reported belief falls inside this region, the agent behaves as if the adviser were truthful: she combines the reported belief with her private information using Bayes' rule and chooses the corresponding Bayes-optimal action. When the reported belief lies outside the trust region, the agent replaces it with the ``closest safe interpretation'' (formalized by the notion of Bregman distance), which is a belief lying on the boundary of the trust region; she then behaves as if that belief had been reported. Operationally, this is an endogenous form of clipping: moderate recommendations are followed while extreme recommendations are discounted and converted into boundary recommendations that the agent is still willing to accept.

Intuitively, if the agent reacted sharply to extreme reports, the misaligned adviser could exploit that sensitivity to induce large losses. The robust policy responds by limiting how far any recommendation can push behavior. The trust region identifies exactly which recommendations are safe to act upon without additional skepticism, and the boundary mapping formalizes how skepticism should be applied outside that set. On one extreme, a trust region equal to the entire belief simplex corresponds to applying the Bayes-optimal response to all reports; on the other extreme, a trust region only containing the prior belief corresponds to ignoring the adviser's reports. Thus, the shape and size of the trust region yield a disciplined answer to a practical design question: when an AI system outputs highly confident or highly unusual recommendations, optimal robust use requires treating those outputs as ``too informative to be trusted'' and translating them into safer boundary inputs before acting.

An implication of the characterization is that the optimal robust action rule used by the agent must be defensible as optimal for some coherent set of beliefs about the state of the world---the agent never benefits from distorted use of her own private information. An optimal robust rule simply restricts the set of Bayes-optimal action rules that the agent uses. A further consequence is that implementing the optimal trust region policy does not require commitment. Under mild technical assumptions, we prove a minimax theorem which implies existence of a \textit{trust region equilibrium} in the zero-sum game between the agent and the misaligned adviser. In a trust region equilibrium, the agent's policy and the misaligned adviser's strategy form a saddle point: after every on-path report, the agent's response is Bayes-optimal given the belief induced by the adviser's strategy, and the misaligned adviser's strategy minimizes the agent's expected payoff. Substantively, this means that robust optimal behavior provides the same payoff guarantee that the agent could obtain had she perfectly known the misaligned adviser's strategy. Practically, this result provides a certification tool: to verify that a proposed policy is optimal, it suffices to exhibit a corresponding adversarial reporting strategy that makes that policy a best response at every recommendation.

We then ask when  consulting a potentially misaligned adviser is worthwhile for the agent.
We formalize this question by examining ``minimal viable alignment'', defined as the threshold alignment probability above which the agent can guarantee a strictly higher payoff than the one she could achieve by only relying on her own information. We derive sharp bounds on this threshold that depend only on the richness of the state space and the adviser's signal distribution. As long as information is useful to the agent, alignment probability  exceeding half is sufficient for the agent to benefit from the presence of the adviser. This bound is tight in binary-state problems. In multidimensional settings, however, minimal viable alignment can be much lower---in some cases as low as the reciprocal of the number of states. Thus, when the state space is rich, AI advice can be robustly valuable even when alignment is very unlikely.

Our general characterization becomes particularly sharp when the state space is binary, so that the ground truth is whether a given statement is true or false. In this setting, an adviser's message can be summarized by the implied probability of the statement being true. The trust region is an interval containing the prior belief. Recommendations inside the interval are trusted and acted upon as reported. Recommendations outside the interval are mapped into the nearest endpoint. The misaligned adviser sends messages that push the induced belief to the endpoint that is most harmful for the agent. This structure delivers a sharp phase transition. If the alignment probability is below one half, the optimal interval collapses to the prior belief and the agent ignores the adviser. If the alignment probability is above one half and the agent's decision problem is sufficiently rich in the sense that every piece of information is valuable, there is a unique  trust interval. This interval  expands monotonically with alignment, approaching full trust as the alignment probability approaches one. In addition, we show that the location of the trust region---whether it is skewed towards high or low beliefs---depends on the relative curvature of the agent's indirect utility function. That curvature can be interpreted as a measure of sensitivity of the agent's optimal action to information.

Our characterization also yields a closed-form solution when the agent's downstream choice is binary (e.g., to accept or reject an application) and the agent has no private information. In such problems, the optimal robust use of advice is generically all-or-nothing: either the trust region is the entire belief simplex or it collapses to the prior belief. 
Which regime obtains is determined by an alignment threshold that depends only on the relative value of the adviser's information across the two actions. In particular, if the alignment probability is below one half, the agent cannot robustly benefit from the presence of the adviser.

Finally, we examine environments where uncertainty concerns many possible states and actions. Here, the geometry of the trust region plays a major role: some directions of belief change are far more consequential than others because they trigger actions whose payoffs are highly sensitive to the true state. In a robust solution, the misaligned adviser chooses recommendations that are farthest from the truth within the trusted set in an incentive-based sense, again formalized by Bregman distance.
In general, the trust region may have a complex shape, for example, it need not be convex.
In symmetric environments, however, the trust region inherits the symmetry of incentives and information and can be tractably characterized. 

The primary application of our framework is the AI alignment problem; in \autoref{sec.triage}, we develop a parametric example to illustrate how recommendations from an AI system should be combined with human expertise in applications such as medical triage. However, our model is general, and could be applied in other  contexts where the information source is not fully trusted. For example, our framework can be seen as a theory of behavioral belief updating, in which the decision-maker wants to ensure some degree of protection against potential misspecification; from this perspective, our main result provides a robust-optimality foundation for the phenomenon known as ``extreme-belief aversion'' (see \cite{BENJAMIN201969} for a review of the experimental evidence and \cite{Whitmeyer2026} for a related theoretical framework). The assumption that the misaligned adviser attempts to minimize the agent's payoff could also be interpreted literally in some contexts, such as when messages exchanged between two allies can be intercepted and manipulated by an adversary. 

\subsection{Literature review}

Our model is closely related to two foundational models in information economics: the cheap talk model  (\cite{crawford1982}) and the Bayesian-persuasion model (\cite{kamenica2011}). Relative to the cheap-talk model, our framework effectively assumes that the Sender maximizes the Receiver's utility with some probability $\alpha$ and minimizes the Receiver's utility with the complementary probability $1-\alpha$.\footnote{Strictly speaking, we assume that with probability $\alpha$ the Sender reveals his signal truthfully, but this can be shown to be optimal for the Receiver.} Our characterization of trust region equilibria shows that equilibrium behavior is very different from that arising in the more standard constant-bias case; in particular, information transmission is perfect for intermediate beliefs of the Sender but completely blocked for extreme beliefs.  Relative to the Bayesian-persuasion model, due to the minimax theorem which we prove in our setting, our framework is equivalent to the case in which the Sender tries to minimize the Receiver's payoff but is committed to revealing his signal truthfully with probability $\alpha.$ One of our contributions is to provide a characterization of threshold levels of $\alpha$ above which the adversarial Sender cannot prevent the Receiver from learning some information. 

More recently, several papers in information economics have studied models in which the Sender is truthful---or committed to an information structure---with some probability, but may otherwise send fake or manipulated messages. To the best of our knowledge, this literature did not consider the case that is central to our AI-alignment motivation: a misaligned Sender who is adversarial and seeks to minimize the Receiver's payoff.
Instead, the misaligned Sender is modeled as having known and often state-independent  preferences. \cite*{weak_institutions} and \cite{Min2021BayesianPersuasionPartialCommitment} study settings in which the Sender is committed to an information structure with some probability and sends a cheap-talk message otherwise. \cite*{galzeretal} and \cite{LahrWinkelmann2019FakeExperts} analyze communication games where some senders are truthful and others have state-independent preferences, such as pushing the Receiver's beliefs upward. 
\cite{AlonsoPadroIMiquel2025CompetitiveCapture} model competitive capture of public opinion by assuming that informative signals about a binary state may be manipulated (i.e., replaced by an arbitrary message) by two opposed ``interested parties,'' one of which wants the induced beliefs to be as high as possible and the other one as low as possible. They characterize a \textit{communication equilibrium} in which citizens correctly update beliefs given the equilibrium strategies of the interested parties.  Interestingly, the structure of their communication equilibria shares similarities with our trust region equilibria in the special case of a binary state: messages in some intermediate interval are interpreted at face value, while messages outside of that interval induce beliefs at the endpoints of the interval.

Our modeling of uncertainty about the behavior of the adviser is inspired by the classical \textit{Hurwicz criterion}, also known as the alpha-max-min approach (\cite{Hurwicz}), under which the decision-maker maximizes a weighted sum of her best-case and worst-case payoffs. We interpret $\alpha$ as the probability of alignment. A similar criterion has recently been applied in the context of information design by \cite{dworczak_pavan}. 

The version of our model in which the agent does not have private information is related to the delegation literature in that the agent effectively chooses which decisions to delegate to an informed adviser. Within that literature, the closest paper is \cite{frankel2014aligned} who adopts a worst-case approach with respect to the adviser's preferences, assumed to lie in a known set. More recently, \cite*{AlonsoGanHu2025RobustDelegation} show optimality of convex delegation sets under max-min preferences when the principal only knows the agent's preferred action in every state, but not the exact shape of the agent's quasi-concave utility function. Our setting differs both in primitives and in methods: we require robustness to the adviser's behavior in case of misalignment and the resulting optimization problem has a different structure.

Regarding the AI alignment problem, a few approaches have recently been proposed in the microeconomic theory literature. \cite*{ChenGhersengorinPetersen2024ImperfectRecallAIDelegation} develop a model of screening for alignment in an environment in which the decision-maker can simulate the task and impose imperfect recall on AI, obscuring whether the task is real or part of a test.\footnote{See also \cite{LevySzentes2025} for a related model of AI under imperfect recall.} Closer to our approach, \cite{FudenbergLiang2025FriendOrFoe} assume that the AI system is aligned with some known probability and that it performs adversarially in case of misalignment. Unlike us, \cite{FudenbergLiang2025FriendOrFoe} assume that the decision-maker can impose the true unconditional distribution of optimal actions but faces non-Bayesian uncertainty about the \textit{correlation} of the optimal action with a set of covariates she controls.
Correspondingly, their main research question is which covariates should be revealed to AI. In our framework, the decision-maker knows the distribution of AI's signals---uncertainty is only about AI's behavior in case of misalignment---and thus she would never optimally disclose her private information to AI. Overall, these papers focus on different (and complementary) aspects of the misalignment problem: \cite{ChenGhersengorinPetersen2024ImperfectRecallAIDelegation} ask how to test AI's alignment; \cite{FudenbergLiang2025FriendOrFoe} ask how to provide information to a misaligned AI; and we ask how the decision-maker should combine advice from a misaligned AI with her private information.

More broadly, our framework is part of a rapidly growing literature trying to understand optimal human-AI interactions. Closest to our paper are  \cite{DreyfussHoong2025CalibratedCoarsening} and \cite*{AgarwalMoehringWolitzky2026DesigningHumanAICollaboration} who also adopt an information-design approach; the latter  ask how AI advice interacts with human decision-making in the presence of potential biases and when the decision-maker's effort in acquiring information is endogenous.

\section{Model}
A state $\omega$ is drawn from a finite state space $\Omega$, with $|\Omega|=N$, according to a full-support prior distribution $\mu_0 \in \Delta(\Omega)$. An adviser observes partial information about $\omega$, captured by a signal $s$ whose distribution is pinned down by a signal function $\tech:\Omega\to\Delta(S)$.\footnote{We allow $S$ to be infinite, which is useful for constructing tractable examples. Whenever we work with an infinite space, we endow it with the Borel $\sigma$-algebra, and require all sets and functions that we define to be measurable; statements involving ``for all'' should be interpreted as ``for almost all'' with respect to the underlying distributions.} We will identify the adviser's information with the posterior belief about the state that a signal realization induces; let $S=\Delta(\Omega)$ and renormalize so that $s$ is equal to the posterior belief about $\omega$ induced by $s$. Let $\tau$ denote the unconditional distribution of the adviser's posteriors $s$, with  $M=\supp(\tau)$. %To avoid trivial cases, we assume $|M|\geq 2$. {\color{red} Do we ever use that assumption?} %(We will often assume that $M$ is finite but allowing for continuous distributions of beliefs is useful for some examples.)

An agent takes an action $a\in A$, where $A$ is a compact metric set. %We assume that the problem is non-trivial in that there exist at least two strictly undominated actions. 
The agent observes a private type $\theta\in\Theta$, where $\Theta$ is a compact metric set, that captures the agent's own information about $\omega$ and her preferences. The distribution of the type $\theta$ is determined by a signal function $f:\Omega\to\Delta(\Theta)$. We assume that, conditional on the state, $s$ and $\theta$ are distributed independently. The agent's ex-post payoff is given by a utility function $u(a,\omega,\theta)$, assumed continuous in $a$. 

The adviser sends a message $m\in \M$ to the agent, where, without loss of generality, we take the message space to be the space of beliefs about the state. The agent chooses a strategy $\sigma:\M\times\Theta\rightarrow\Delta(A)$ that assigns a distribution over actions to each message--type pair. Let $\Sigma$ denote the set of all such strategies.

The adviser's strategy maps his beliefs into distributions over messages sent to the agent. With probability $\al$, the adviser is \textit{aligned} and non-strategically reports his belief according to the identity function  $\id:M\to M
$ such that $\id(m)=m$ for all $m\in M$.\footnote{It can be shown that the assumption of truthful reporting of the belief is equivalent (in terms of equilibrium payoff consequences) to assuming that the aligned adviser is attempting to maximize the agent's expected payoff. However, the assumption of truthful reporting is natural for an aligned AI system and useful, as it provides a natural meaning to each message (see \cite{sobel2020}).}  With probability $1-\al$, the adviser is \textit{misaligned} and sends a message according to some strategy $\beta:M\to\Delta(\M)$. Let $\mathcal{B}$ denote the set of all such strategies. 

Faced with non-Bayesian uncertainty about the form of misalignment, the agent adopts a cautious posture and aims to maximize her guaranteed payoff. Concretely, she evaluates each possible strategy $\sigma$ according to its worst-case payoff
\begin{align}\label{eq_payoff}
V(\sigma)\triangleq  \al\,\,\expect_{\id,\sigma}[u(a,\omega,\theta)]+(1-\al)\,\inf_{\beta\in \mathcal{B}}\expect_{\beta,\sigma}[u(a,\omega,\theta)],   
\end{align}
where the expectations are taken with respect to the underlying distributions of the primitive variables $\omega$, $s$, and $\theta$, as well as the respective adviser's and agent's strategies.\footnote{Formally, $\expect_{\beta,\sigma}[u(a,\omega,\theta)]= \sum_{\omega\in\Omega} \mu_0(\omega) \int_S \int_{\Delta(\Omega)}\int_\Theta \int_A u(a,\omega,\theta)\sigma(da| m,\theta)f(d\theta|\omega) \beta(dm|s) \pi(ds|\omega)$.} We will call any misaligned adviser's strategy $\beta$ that attains the infimum in expression  \eqref{eq_payoff} for a fixed strategy $\sigma$ of the agent an \textit{adversarial strategy} against $\sigma$.

Our main goal is to characterize the agent's \textit{optimal} strategy $\sigma^*$ that attains:
\begin{align}\label{eq:problem-primal}
V^*\triangleq\sup_{\sigma\in\Sigma} V(\sigma).
\end{align}

\section{Main Results}

\subsection{Trust Region Strategies}

In what follows, it will be convenient to separate the dependence of the agent's strategy on the adviser's message and the agent's private information. To this end, we call a \emph{private strategy} $\sigmah$ the mapping from types to actions  $\sigmah:\Theta\to\Delta(A)$ that specifies how the agent uses her private information. We will refer to the agent's belief about the state prior to updating based on her private type $\theta$ as the \textit{interim belief}. If the agent has an interim belief $\mu$ and uses a private strategy $\sigmah$, her expected payoff is:
\begin{align}
    U(\sigmah,\mu)\triangleq \expect_{\omega\sim \mu,\,\sigmah}[u(a,\omega,\theta)],
\end{align}
where the expectation is taken with respect to the conditional distribution of $\theta$ and the distribution of agent's actions induced by $\sigmah$.\footnote{Formally,  $\expect_{\omega\sim \mu,\,\sigmah}[u(a,\omega,\theta)]= \sum_{\omega\in\Omega} \mu(\omega) \int_\Theta \int_A u(a,\omega,\theta) \sigmah(da|\theta) f(d\theta|\omega)$.} A private strategy $\sigmah$ is called \textit{Bayes-optimal} for belief $\mu\in \Delta(\Omega)$ if it maximizes the agent's expected payoff when she holds an interim belief $\mu$: $\sigmah\in\arg\max_{\sigmah}  U(\sigmah,\mu)$. The agent's strategy can be viewed as a specification of a private strategy for each possible message received from the adviser,  $\sigma\sim(\sigmah(m))_{m\in \M}$. 

\begin{definition}
$\sigma\sim(\sigmah(m))_{m\in \M}$ is a trust region strategy
(TRS) if there exists a compact set $T\subset\Delta(\Omega)$ such
that
\begin{enumerate}
\item if $m\in T,$ $\sigmah(m)$ is Bayes-optimal for $m$, 
\item if $m\notin T,$ $\sigmah(m)$ is Bayes-optimal for $P(m)$, where $P(m)\in\arg\max_{m' \in T}\, U(\sigmah(m'),m).$ 
\end{enumerate}
\end{definition} 

Intuitively, under a TRS, the agent treats messages $m$  reported within the trust region $T$ ``at face value,'' i.e., she takes an optimal action treating $m$ as her correct interim belief about the state. If a message $m$ does not belong to the trust region $T$, the agent maps $m$ to the trust region by acting \textit{as if} her interim belief  were $P(m)\in T$. The point $P(m)$ is chosen to maximize, over all beliefs in the trust region, the agent's expected payoff under distribution $m$ when the action is taken to be optimal for $P(m).$

To provide further intuition, with slight abuse of notation, let $$U(\mu)\triangleq\max_{\sigmah}U(\sigmah,\mu)$$ be the payoff to the agent when she uses the Bayes-optimal strategy at belief $\mu$. Note that $U(\mu)$ is a convex function on $\Delta(\Omega)$; moreover, it is differentiable on the interior of the belief simplex if there exists a unique Bayes-optimal private strategy $\sigmah_0(\mu)$ at every belief $\mu$. In that case, we can define $\nabla U(\mu)$ as the gradient of the indirect payoff function, viewed as a function on $\mathbb{R}^N$. It maps each belief $\mu$ into the $N$-dimensional vector of state-contingent payoffs associated with  the Bayes-optimal strategy.\footnote{Formally, to define the gradient, we extend the function $U$ beyond the probability simplex by assuming that, for any non-negative measure $\mu$, $U(\mu)=\mu(\Omega)\,U\left(\mu/\mu(\Omega)\right)$.} In particular, $U(\mu)=\nabla U(\mu)\cdot \mu,$ where $\cdot$ denotes the standard dot product in $\mathbb{R}^N$. Moreover, for a TRS $\sigma$ and any $m'\in T$, $U(\sigmah(m'),m)=\nabla U(m')\cdot m$. Thus, 
\begin{align*}
\arg\max_{m' \in T}\, U(\sigmah(m'),m)=\arg\min_{m' \in T}\, \underbrace{U(m)-U(m')-\nabla U(m')\cdot (m-m')}_{D_{U}(m,m')}.
\end{align*}
The expression $D_{U}(m,m')$ is called the \textit{Bregman distance} (associated with function $U$) between beliefs $m$ and $m'$. Thus, under a TRS, messages outside of the trust region $T$ are mapped into the ``closest safe interpretation''---the belief in the trust region $T$ that is closest in the Bregman distance. In particular, $P(m)$ always lies on the visible part  of the boundary of $T$  from the perspective of point $m$.\footnote{Point $m'\in T$ is visible from $m$ if the line segment connecting $m'$ and $m$ does not intersect $T\setminus\{m'\}$.}

\subsection{Optimality of Trust Region Strategies}
We call two strategies of the agent \emph{equivalent} if, together with some corresponding adviser's adversarial strategies, they induce the same joint distribution over states, types, messages, and actions. The importance of TRSs stems from the following key result.
 \newpage
\begin{theorem}[Trust Region Solution]\label{thm:trust}
Any optimal strategy $\sigma^*$ is equivalent to a trust region strategy with a   connected trust region $T$.
\end{theorem}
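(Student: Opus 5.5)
Starting from an arbitrary optimal $\sigma^*$, I would build a trust region strategy equivalent to it in two moves: first make every private strategy Bayes-optimal for some belief, then show the set of beliefs actually used forms a trust region.

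\textbf{Step 1 (rewriting the payoff).} Fix $\sigma$ and write $g(m,s)\triangleq U(\sigmah(m),s)$, the agent's payoff when the message is $m$ and the true interim belief is the adviser's posterior $s$. This is well defined because $s$ and $\theta$ are conditionally independent given $\omega$. Then
\begin{align*}
V(\sigma)=\al\int g(s,s)\,\tau(ds)+(1-\al)\int \min_{m\in\M} g(m,s)\,\tau(ds).
\end{align*}
The misaligned adviser can therefore choose pointwise, and an adversarial strategy sends, at each $s$, a minimizer of $g(\cdot,s)$.

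\textbf{Step 2 (Bayes-optimality of the pieces).} Let $\mu(m)$ be the belief the agent would hold under $\sigma^*$ and some adversarial $\beta^*$ after message $m$, namely the $\al$/$(1-\al)$ mixture of the truthful and adversarial posteriors. Suppose $\sigmah(m)$ is not Bayes-optimal for $\mu(m)$. Replace it by a Bayes-optimal $\sigmah'$, or more carefully by a small perturbation toward one. This weakly raises the payoff against the fixed $\beta^*$. The risk is that it lowers the payoff at other beliefs $s$ and so gives the adversary new profitable deviations to $m$.

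To handle this, I would reparametrize strategies by state-contingent payoff vectors $x(m)\in X\triangleq\{(U(\sigmah,\delta_\omega))_\omega\}$, a convex set. The objective is then concave in the choice of vectors, since it is a linear term plus a pointwise minimum. Bayes-optimality of $\sigmah(m)$ for some belief is equivalent to $x(m)$ lying on the Pareto frontier of $X$.

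Given that, I would argue as follows. Let $R=\{x(m):m\in\M\}$ be the menu of vectors used, and take its closure. Replacing any dominated vector by a vector that Pareto-dominates it weakly raises every $g(m,s)$. Hence it cannot open new profitable deviations, and it weakly improves $V$. Thus $V$ depends only on the menu $R$ together with the map assigning truthful messages. The truthful message $s$ must get a vector with $x(s)\cdot s$ as large as possible subject to $\min_{r\in R} r\cdot s$, which equals the adversary's value, being fixed. So WLOG each $x(m)$ lies on the frontier.

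\textbf{Step 3 (defining $T$).} Let $T$ be the closure of the set of beliefs $m$ for which $x(m)=\nabla U(m)$, i.e.\ the messages taken at face value. For a message outside $T$, the adversary's best reply and the maximization of $x(s)\cdot s$ over the menu show the agent optimally uses the menu element best for $s$. This is exactly $P(s)\in\arg\max_{m'\in T}\nabla U(m')\cdot s$, the Bregman projection. Messages never sent on path can be reassigned to this projection without changing the induced joint distribution, which delivers equivalence. What remains is that every menu element is Bayes-optimal at some $m'$ that is itself trusted. Here I would show that if a frontier vector $r$ is optimal at belief $m'$ but $m'$ is not trusted, then adding $m'$ to $T$ (responding $r$ to $m'$) changes nothing, because $r$ is already on the menu.

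\textbf{Step 4 (connectedness).} This is the step I expect to be the main obstacle. Suppose $T=T_1\cup T_2$ with the two pieces disjoint and closed. My first attempt would be to fill in the gap: add beliefs on segments between the components, responding to them at face value. Adding such a point $m$ adds the vector $\nabla U(m)$ to the menu. This helps aligned messages near $m$, but it can hurt, since the adversary gains the option of sending $m$ to beliefs $s$ with $\nabla U(m)\cdot s<\min_R r\cdot s$.

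The argument I would try uses the concavity from Step 2. The optimal menu is the set of frontier points solving the problem, and by Step 3 the face-value region is $T=\{s:\nabla U(s)\cdot s=\text{adversary value at } s \text{ is not binding}\}$, a level-type set defined by the comparison $U(s)$ versus $\min_R r\cdot s$. I would show that the set of $s$ where the unconstrained Bayes response can be kept equals the set where $\min_{r\in R} r\cdot s \ge c(s)$ for a threshold determined by the Lagrangian. From this, $T$ should be path-connected via segments toward the prior, perhaps star-shaped-like with respect to $\mu_0$, and I would seek a perturbation showing that any isolated component can be shrunk to its boundary with no loss. I am least sure this last step goes through as stated.
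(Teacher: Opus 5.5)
\textbf{Steps 1--3.} These follow the paper's route closely. You recast responses as state-contingent payoff vectors in a convex set. You replace any response that is not Bayes-optimal for some belief by a frontier vector that dominates it; this weakly raises $g(m,s)$ for every $s$, so it creates no new adversarial deviations. You then send each off-$T$ message to the best menu element, which is the Bregman projection.

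Two small fixes:
\begin{itemize}
\item The relevant frontier is the \emph{weak} Pareto frontier.
\item Define $T$ through ``some menu element is Bayes-optimal at $m$'' rather than $x(m)=\nabla U(m)$, since the theorem does not assume a unique Bayes-optimal private strategy.
\end{itemize}

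\textbf{Step 4 is a genuine gap.} Your plan tries to show that the trust region produced by the optimal strategy is itself connected, either star-shaped toward $\mu_0$ or obtained by shrinking isolated components. That cannot work, because optimal trust regions need not be connected. In the binary-action case, full trust is equivalent to a two-point trust region: one belief where $a_1$ is strictly optimal and one where $a_2$ is. The paper also notes that with finite $M$ there is an optimal discrete $T$ with $|T|\le|M|$. Connectedness must instead be manufactured by \emph{enlarging} $T$ in a payoff-neutral way.

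Your first instinct, filling in the belief-space segment between components, does enlarge $T$. But, as you note, it can hand the adversary harmful new reports.

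\textbf{The missing idea is to fill in in dual coordinates.} Take $m_1,m_2$ in different components with menu vectors $w_1,w_2$. Add every weakly Pareto-optimal vector $w$ with $w\ge \gamma w_1+(1-\gamma)w_2$ for some $\gamma\in[0,1]$, and answer each belief at which such a $w$ is Bayes-optimal with $w$.
\begin{itemize}
\item The adversary's payoff is linear in $w$, so $w\cdot s\ge \gamma\, w_1\cdot s+(1-\gamma)\, w_2\cdot s\ge \min\{w_1\cdot s,w_2\cdot s\}$ for every $s$. Hence the misaligned-adviser payoff cannot fall.
\item The aligned payoff weakly rises because the menu grows.
\item Optimality of $\sigma^*$ then makes the change payoff-irrelevant.
\end{itemize}

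Connectedness transfers back to beliefs for two reasons. This set of vectors is connected, by convexity of the feasible set. The normal-cone correspondence $w\mapsto\{\text{beliefs at which } w \text{ is Bayes-optimal}\}$ is upper hemicontinuous with convex values. So the union of the added belief sets is connected and contains $m_1$ and $m_2$.

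The added beliefs typically do not lie on the segment $[m_1,m_2]$. That is exactly why convexifying in belief space fails while convexifying the menu of payoff vectors works.
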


\begin{proof}
See Appendix \ref{proof_thm:trust}.
\end{proof}

\autoref{thm:trust} states that any optimal strategy can be interpreted as a TRS for some connected trust region $T$. This result provides a sharp characterization of optimal robust behavior under misalignment risk. Messages in the trust region are taken at face value while messages outside the trust region are mapped into the closest safe interpretation within the trust region. Thus, the agent's problem reduces to choosing the trust region $T$.

If the adviser is always aligned, a TRS with the trust region equal to the entire belief space is trivially optimal. On the other extreme, if the adviser is always misaligned, the optimal TRS has a trust region equal to the prior belief---the agent always ignores the message of the adviser. In \autoref{sec:mva}, we explore conditions under which the trust region is guaranteed to be non-trivial. In general, however, it is difficult to pin down the exact shape of the optimal trust region. A trade-off is created by two opposing forces: when the trust region expands, the expected payoff of the agent weakly increases conditional on the adviser being aligned but weakly decreases conditional on the adviser being misaligned. In \autoref{sec:one-dim}, we study the binary-state case, in which the trust region is an interval; in \autoref{sec:multi-dim}, we look at the case of multiple states but binary actions, in which the trust region is either the prior belief or the entire simplex.

\autoref{thm:trust} shows that the trust region may be chosen to be connected. It need not, however, be convex in belief space. The reason is that convexifying the trust region by adding the line segment between two trusted beliefs may lead some types of the misaligned adviser to use those newly trusted intermediate reports; the resulting losses may outweigh the gains from additional truthful reports by the aligned adviser. Our proof instead establishes convexity in \textit{dual coordinates}, that is, in the space of state-contingent payoff vectors. Each belief can be associated with the state-contingent payoff induced by a Bayes-optimal private strategy at that belief; when the optimum is unique, this payoff vector is given by the gradient $\nabla U(\mu)$ of the indirect utility function. Although the misaligned adviser's payoff is generally nonlinear in the reported belief, it is \textit{linear} in the induced state-contingent payoff vector. In particular, an adviser with belief $\mu$ chooses among trusted reports so as to minimize $\mu\cdot w$, where $w$ ranges over the state-contingent payoff vectors induced by beliefs in the trust region. Thus, the object that can be convexified is not the trust region itself, but the associated set of induced payoff vectors. Convexifying this dual set, in turn, connects the corresponding set of beliefs in the trust region.

To further understand the geometry of the optimal trust region, note that the misaligned adviser with belief $\mu$ will choose a message $m\in T$ to minimize $U(\sigmah(m),\mu)$, or equivalently, to \textit{maximize} the Bregman distance between $\mu$ and $m$; in particular, the chosen message $m$ must lie on the boundary of the trust region (see \autoref{sec:rich} for an extended discussion). As a consequence, adding non-boundary points to a trust region can only weakly increase the agent's payoff. Formally, we say that a set $A\subset\mathbb{R}^N$ is \textit{non-hollow} if it contains all points $x\in \mathbb{R}^N$ with the property that every line going through $x$ intersects $A$ on both sides of $x$.

\begin{corollary}
\autoref{thm:trust} remains true with the additional requirement that $T$ is non-hollow. 
\end{corollary}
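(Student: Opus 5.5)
Starting from \autoref{thm:trust}, I take an optimal $\sigma^*$ and an equivalent TRS with connected trust region $T$. I then enlarge $T$ to its non-hollow closure: let $\hat T$ be the smallest non-hollow set containing $T$, obtained by iteratively adding every point $x$ such that every line through $x$ meets the current set on both sides of $x$, and taking the closure. Let $\hat\sigma$ be the TRS with region $\hat T$. On $T$, $\hat\sigma$ uses the same Bayes-optimal private strategies as $\sigma^*$. On $\hat T\setminus T$, it uses Bayes-optimal responses at face value. Outside $\hat T$, it uses the clipping map $P$. The goal is to show $V(\hat\sigma)\ge V(\sigma^*)$. Optimality of $\sigma^*$ then makes $\hat\sigma$ optimal, and applying the equivalence argument of \autoref{thm:trust} (or simply checking that $\hat\sigma$ is itself a TRS with connected region, since adding points surrounded by $T$ preserves connectedness) gives the corollary.

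The aligned term is easy. For each report $m$, the aligned part of $V$ is computed by evaluating the private strategy chosen at $m$ against the true belief $m$. Under $\hat\sigma$, every $m\in\hat T$ receives the Bayes-optimal strategy for $m$. Before enlargement, points $m \in \hat T\setminus T$ received the strategy optimal for $P(m)$, which is weakly worse. Points outside $\hat T$ are handled by a $P$ that now maximizes over a larger set, so they also do weakly better. The aligned term therefore weakly increases.

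The misaligned term is the main obstacle. I work in the dual coordinates suggested in the text. Let $W(T)$ be the set of state-contingent payoff vectors $w$ generated by the private strategies used on $T$, i.e.\ the vector with components $\int\int u(a,\omega,\theta)\,\hat\sigma(da|\theta)f(d\theta|\omega)$. An adviser with belief $\mu$ sending a message obtains $\mu\cdot w$ for the induced $w$. Messages outside the region produce vectors that are themselves in $W(T)$. So the adversary's value against a TRS with region $T$ is $\min_{w\in W(T)}\mu\cdot w$, which equals the minimum over the convex hull of $W(T)$. It therefore suffices to show that every $w\in W(\hat T)$ lies in $\overline{\mathrm{co}}\,W(T)$. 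The adversary's value at each $\mu$, and hence the misaligned term, is then unchanged.

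To prove this containment, take $x\in\hat T$ and $w=\nabla U(x)$ (or any supergradient selection when $U$ is not differentiable). By convexity of $U$, the linear function $\mu\mapsto\mu\cdot w$ lies below $U$ and touches it at $x$. Suppose $w\notin\overline{\mathrm{co}}\,W(T)$. Separation yields a direction $d$ with $d\cdot w< d\cdot w'$ for all $w'\in W(T)$. I then want to convert $d$ into a belief-space line through $x$ along which some point of $T$ on one side would contradict this strict inequality. Specifically, I would use the supporting-hyperplane characterization $U(y)\ge U(x)+w\cdot(y-x)$ together with $U(y)=\nabla U(y)\cdot y$ for the points $y\in T$ on both sides of $x$ along a suitably chosen line. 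This step is delicate, and I am not certain it closes as stated. If the direct separation argument fails, I would fall back on the approach flagged in the paper: the misaligned adviser's chosen messages maximize Bregman distance and hence lie on the boundary of $T$. In particular, for any $\mu$ the minimizer of $\mu\cdot\nabla U(m)$ over $\hat T$ is attained at a point visible from outside. Points of $\hat T\setminus T$ are enclosed by $T$ on every line, so they are never visible and can never be strictly better for the adversary. This also gives equality of the misaligned term.
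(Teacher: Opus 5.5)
\paragraph{Verdict.} There is a genuine gap. Your main line for the misaligned term aims at a false containment. Your fallback is the paper's route, but the step it relies on is misstated. Your treatment of the aligned term is fine.

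\paragraph{The main line fails.} The containment $W(\hat T)\subseteq\overline{\mathrm{co}}\,W(T)$ fails at \emph{every} $x\in\hat T\setminus T$ as soon as $U$ is strictly convex. The reasons are as follows.
\begin{itemize}
\item Each element of $\overline{W(T)}$ is Bayes-optimal at some point of the compact set $T$, by the closed graph of Bayes-optimal payoffs.
\item A payoff vector cannot be Bayes-optimal at two distinct beliefs without making $U$ affine on the segment between them.
\item Hence $x\cdot w'<U(x)=x\cdot w(x)$ for all $w'\in\overline{\mathrm{co}}\,W(T)$.
\end{itemize}
So the separating direction can be $d=-x$, which is not a belief. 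No argument along lines in belief space can then produce a contradiction, and this is exactly where your sketch stalls. The correct target is the weaker inclusion $W(\hat T)\subseteq\overline{\mathrm{co}}\,W(T)+\mathbb{R}^N_+$. Equivalently, $\mu\cdot w(x)\ge\inf_{w'\in W(T)}\mu\cdot w'$ for every belief $\mu$, and this is all the adversary's value depends on.

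\paragraph{The fallback rests on a false claim.} The fallback is the paper's route: the remark preceding the corollary, together with the observation in Section~\ref{sec:rich} that $D_U(\mu,\cdot)$ increases along rays from $\mu$. But the claim ``enclosed by $T$ on every line, hence never visible from outside'' is false. In a two-dimensional simplex, let $T$ consist of three pieces:
\begin{itemize}
\item a circle with a short arc removed;
\item a segment placed beyond the gap that meets every ray from the centre passing through the gap;
\item a path joining these two pieces away from the gap.
\end{itemize}
The centre is enclosed by $T$. Yet it is visible through the gap from points lying between the gap and the blocking segment, and those points are not in $\hat T$.

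\paragraph{The missing lemma.} What matters is direction relative to the adversary's own belief, not visibility. Write $w(\cdot)$ for the Bayes-optimal payoff vectors your strategy uses. Take $x\in\hat T\setminus T$ and $\mu\neq x$. Applying enclosure to the line through $\mu$ and $x$ gives $y\in T$ with $x=(1-\lambda)\mu+\lambda y$ for some $\lambda\in(0,1)$. Then
\begin{align*}
(1-\lambda)\,\mu\cdot w(x)+\lambda\,U(y) &\ge (1-\lambda)\,\mu\cdot w(x)+\lambda\, y\cdot w(x) = U(x)\\
&\ge x\cdot w(y) = (1-\lambda)\,\mu\cdot w(y)+\lambda\,U(y).
\end{align*}
Hence $\mu\cdot w(y)\le\mu\cdot w(x)$ for any Bayes-optimal selections. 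If $\mu=x$, then $\mu\cdot w(x)=U(\mu)$ is already maximal. This inequality is what your argument is missing, and it delivers the correct inclusion above.

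\paragraph{What remains.}
\begin{itemize}
\item Chain the lemma by induction through your iterative hull, since later-stage points are enclosed by the previous stage rather than by $T$.
\item Use the closed graph of Bayes-optimal payoffs to handle the closure step.
\item Argue connectedness of the enlarged region; your sketch only asserts it.
\end{itemize}
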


Note that being non-hollow is not implied by connectedness, although it is weaker than convexity. An example of a connected but hollow set is a sphere. If the trust region of some TRS is a sphere, then we can expand the trust region to the corresponding ball, since the misaligned adviser will never send messages in the interior of the ball. 

The trust region is typically not unique and our results in this section emphasized that it can be taken to be a relatively large set. However, when the support $M$ of the adviser's beliefs is finite, it is also possible to construct an optimal discrete trust region $T$ with $|T|\leq |M|$. Intuitively, at most one belief in the trust region is needed for every possible belief of the aligned adviser.\footnote{However, it is \textit{not} without loss of generality to assume that $T\subseteq M$.} In such cases, a connected trust region can still be constructed but most beliefs in the trust region are never reported by the adviser. Uniqueness of the trust region can sometimes be established if the adviser's beliefs have full support, $M=\Delta(\Omega)$ (see \autoref{sec:one-dim}).

\subsection{Robust Rationalizability}

Our model assumes that the agent commits to a strategy at the outset of the game, not knowing the strategy adopted by the misaligned adviser. As we show next, neither the commitment assumption nor the timing of moves matter for the value that the agent can achieve. This is because we can construct an optimal solution that is a saddle point of the zero-sum game between the agent and the misaligned adviser. For any strategy $\beta\in \mathcal{B}$ of the adversarial adviser,  we let $\mathbb{P}_{\beta}(\cdot|m)$ denote the agent's interim belief induced by message $m$ given the adviser's strategy.\footnote{Without loss of generality, we assume that $\beta$ uses only messages in $M$; any message $m\notin M$ cannot be sent by an aligned adviser and hence reveals that the adviser is misaligned.} 

\begin{definition}[Robustly Rationalizable Strategy]
A strategy $\sigma\sim(\sigmah(m))_{m\in \M}$ is \emph{robustly rationalizable} if there exists an adversarial strategy $\beta^*$ of the misaligned adviser against $\sigma$ such that for  all $m\in M$,  $\sigmah(m)\in\arg\max_{\sigmah'} U(\sigmah',\mathbb{P}_{\beta^*}(\cdot|m))$.
\end{definition}

The rationalizability condition means that the agent does not need commitment to follow the strategy. She can view the misaligned adviser as choosing an adversarial reporting strategy such that, after every message, the prescribed private strategy is myopically optimal.

\begin{theorem}[Robust Rationalizability]\label{thm:minimax}
Any robustly rationalizable strategy is optimal. If  $M$ and $\Theta$ are finite, a robustly rationalizable strategy exists.
\end{theorem}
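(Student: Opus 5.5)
The first claim reduces to a weak-duality argument. Fix a robustly rationalizable $\sigma$ together with the adversarial $\beta^*$ that rationalizes it, and define the agent's total expected payoff against a fixed misaligned strategy by
\[
W(\sigma',\beta)\triangleq \al\,\expect_{\id,\sigma'}[u]+(1-\al)\,\expect_{\beta,\sigma'}[u].
\]
Then $V(\sigma')=\inf_{\beta}W(\sigma',\beta)\le W(\sigma',\beta^*)$ holds for every $\sigma'\in\Sigma$.

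Next, decompose $W(\sigma',\beta^*)$ message by message. The joint distribution of $(\omega,m)$ is the mixture of truthful and $\beta^*$ reports, and $\theta$ is conditionally independent of $m$ given $\omega$. Hence
\[
W(\sigma',\beta^*)=\int U\big(\sigmah'(m),\mathbb{P}_{\beta^*}(\cdot|m)\big)\,\rho(dm),
\]
where $\rho$ is the marginal distribution of messages.

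Rationalizability says that $\sigmah(m)$ maximizes the integrand pointwise for all $m\in M$. Messages outside $M$ have $\rho$-measure zero, because we may assume $\beta^*$ uses only messages in $M$. Therefore $W(\sigma',\beta^*)\le W(\sigma,\beta^*)$. Since $\beta^*$ is adversarial against $\sigma$, we also have $W(\sigma,\beta^*)=V(\sigma)$. Combining these gives $V(\sigma')\le V(\sigma)$ for every $\sigma'$, so $\sigma$ is optimal. The only care needed here is measurability and the ``almost all'' caveat.

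For existence with $M$ and $\Theta$ finite, I would set up a finite-dimensional zero-sum game and apply a minimax theorem. Restrict messages to $M$; the misaligned strategy is then a stochastic matrix $\beta\in\Delta(M)^M$, which is compact and convex. The agent's strategy is $\sigma\in\Delta(A)^{M\times\Theta}$, which is compact and convex in the weak* topology, with $A$ compact metric and $u$ continuous in $a$. The payoff $W$ is bilinear in $(\sigma,\beta)$ and continuous. Sion's minimax theorem (or Glicksberg/Fan) then yields a saddle point $(\sigma^*,\beta^*)$.

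It remains to check that a saddle point gives robust rationalizability. Saddle property (i) says that $\beta^*$ minimizes $W(\sigma^*,\cdot)$, so it is adversarial against $\sigma^*$. Saddle property (ii) says that $\sigma^*$ maximizes $W(\cdot,\beta^*)$. By the decomposition above, (ii) forces $\sigmah^*(m)$ to be Bayes-optimal for $\mathbb{P}_{\beta^*}(\cdot|m)$ at every $m$ with $\rho(m)>0$.

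Every $m\in M$ has $\rho(m)\ge\al\tau(m)>0$ when $\al>0$. So $\rho(m)>0$ holds on all of $M$ in that case. If $\al=0$, messages in $M$ that are unsent would need redefinition. Since $\sigma^*$ is a best response only up to null sets, I would modify $\sigma^*$ at unsent messages and treat their beliefs as arbitrary. This risks breaking $\beta^*$'s adversarial property, so I would instead handle $\al=0$ separately: the agent ignores messages and plays the prior-optimal $\sigmah$, and the adversary reports the prior, i.e., sends a fixed message distribution independent of $s$, so every message induces the prior.

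I expect the main obstacle to be compactness and continuity for the agent's side when $A$ is infinite. The needed facts are that $\Delta(A)$ with the weak* topology is compact and that $W$ is continuous there. Continuity holds because $u(\cdot,\omega,\theta)$ is continuous on compact $A$, and $\Theta$ and $M$ are finite, so the integrals are finite sums of continuous linear functionals. Measurable selection issues disappear because everything is finite except $A$, which is why the finiteness hypotheses on $M$ and $\Theta$ are imposed.
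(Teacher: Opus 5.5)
Your proposal is correct and follows the paper's proof essentially step for step. For the first claim you use weak duality through $V(\sigma')\le W(\sigma',\beta^*)\le W(\sigma,\beta^*)=V(\sigma)$; for existence you apply Sion's minimax theorem to the finite game and use $\alpha>0$ to make every $m\in M$ on-path. Sion alone only equates the values, so you should add the attainment step from compactness and continuity that the paper spells out. Your message-by-message decomposition makes explicit a step the paper compresses into ``i.e., form a saddle point,'' and your separate $\alpha=0$ case, which the paper tacitly excludes, works once the fixed message distribution is taken to have full support on $M$.
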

\begin{proof}
See Appendix \ref{proof_thm:minimax}.
\end{proof}

Assuming finite support of beliefs,\footnote{The assumption of finite $M$ and $\Theta$ is made for technical reasons; verifying the assumptions of \cite{Sion1958Minimax}'s minimax theorem (in particular, its continuity requirements) is difficult for a cheap-talk-like game with infinite-dimensional strategy spaces since the impact of messages on payoffs is endogenous.}  \autoref{thm:minimax} implies that there exists an adversarial strategy $\beta^*$ for the misaligned adviser such that the agent's optimal strategy is sequentially rational: the agent can simply observe the adviser's message, update her beliefs using Bayes' rule given $\beta^*$, and then use the Bayes-optimal private strategy for the resulting interim belief. In particular, implementing the optimal strategy does not require commitment by the agent.

In light of \autoref{thm:trust}, the agent's equilibrium strategy can still be taken to be a TRS. Treating the problem as a zero-sum game between the agent and the  misaligned  adviser, we will call $(\sigma^*,\beta^*)$ a \textit{trust region equilibrium} (TRE) if $\sigma^*$ is a TRS that is robustly rationalizable against the adversarial strategy $\beta^*$.

In a TRE, messages $m\in M$ in the trust region are taken at face value because they are only reported by the aligned adviser (thus, Bayes' rule implies that $\mathbb{P}_{\beta^*}(\cdot|m)=m$). Messages $m\in M$ outside of the trust region are reported by both types of the adviser with probabilities such that $\mathbb{P}_{\beta^*}(\cdot|m)=P(m)$, where $P(m)$ is the mapping to the boundary of the trust region defined in \autoref{thm:trust}. Messages $m\notin M$ are sent with probability zero. In other words, the mappings from messages to beliefs induced by \textit{(i)} Bayes' rule and \textit{(ii)} minimizing Bregman distance to the trust region,  coincide on the equilibrium path of a TRE. In \autoref{sec:one-dim}, we use this structural property to characterize the trust region in a binary-state setting. 

From a technical perspective,  \autoref{thm:minimax} provides a practical way of certifying the optimality of solutions in applications, even with infinite belief and message spaces. To construct an optimal solution, it is sufficient to construct a saddle point of the zero-sum game between the agent and the misaligned adviser---verifying the mutual best-response property is often easier than evaluating the agent's objective for every possible strategy. 

Finally, \autoref{thm:minimax} implies that our problem is equivalent to a \textit{constrained persuasion problem} for the misaligned adviser. When the misaligned adviser moves first, he is effectively choosing a Bayes-plausible distribution of the agent's interim beliefs subject to the constraint that the signal must be truthful in every state with probability at least $\al$; the constraint reflects the presence of the aligned adviser. Thus, the misaligned adviser is effectively attempting to ``jam'' the signal sent by the aligned adviser. We exploit this perspective in the next section to derive thresholds on $\al$ below which no TRE can sustain informative communication.

\subsection{Minimal Viable Alignment}\label{sec:mva}

In this section, we derive bounds on the alignment probability $\al$ above which the agent finds it worthwhile to consult the adviser. Equivalently, we characterize the threshold at which the agent’s trust region becomes nontrivial.

Formally, define the value of an adviser as
\begin{align*}\Delta V\triangleq V^*-V_0,\end{align*}
where $V_0\triangleq\sup_{\sigma\in\Sigma} \expect_\sigma[ u(a,\omega,\theta)]$ is the agent's optimal payoff in the absence of the adviser. Since the agent can always ignore the adviser's messages, this value is non-negative, $\Delta V\geq 0$. We ask when this value is strictly positive, $\Delta V>0$ (cf. the value of information by \cite{Blackwell1951Comparison}). 

To answer this question, we assume that the adviser's  beliefs are finitely supported, $|M|=K<\infty$, and derive a bound on $\al$ that is independent of the agent's problem. If $\al$ is small enough, the misaligned adviser can use a strategy $\beta$ that  ``jams'' the signal created by truthful reporting of the aligned adviser. In such a case, the distribution of interim beliefs $\mathbb{P}_\beta(\cdot|m)$ held by the agent is degenerate: the trust region contains only the prior belief. However, if $\al$ is large enough, there exists no strategy for the misaligned adviser that makes the equilibrium message uninformative. In such cases, as long as information is useful to the agent ($U(\mu)$ is \textit{strictly} convex in the relevant range), $\Delta V$ must be strictly positive.  

\begin{definition}[Minimal Viable Alignment]
The minimal viable alignment $\MVA(\tau)$ is the smallest upper bound on $\al$ for which there exists a strategy $\beta$ of the misaligned adviser such that the induced interim belief satisfies $\mathbb{P}_\beta(\cdot| m)=\mu_0$ for every $m\in M$.  
\end{definition}
$\MVA$ depends on the adviser's information $\tau\in\Delta(\Delta(\Omega))$. Define the rank of the matrix of the adviser's posteriors $\mu_1,\dots,\mu_K\in\supp\,\tau$: 
\begin{align}
    R(\tau)\triangleq\rank \left(\begin{bmatrix}\mu_1 & \mu_2 & \cdots & \mu_K\end{bmatrix}\right).
\end{align}
Roughly, $R(\tau)$ captures the richness of the adviser's information: the adviser's beliefs are located in an $(R(\tau)-1)$-dimensional subspace of the $(N-1)$-dimensional belief simplex $\Delta(\Omega)$. For any $\tau$, $R(\tau)\leq \min\{K,N\}$. The rank weakly decreases when the adviser's information is garbled. In what follows, we assume that the adviser has some information, $K\geq 2$, so $R(\tau)\geq 2$.

\begin{theorem}[Minimal Viable Alignment]\label{thm:viable}
The agent strictly benefits from the presence of the adviser, $\Delta V>0$, in some decision problem (equivalently, in any decision problem with strictly convex $U$) if and only if $\al>\MVA(\tau)$. For any $\tau$, $\MVA(\tau)\in[1/N,1/2]$. Moreover, for any $\al\in[1/N,1/2]$, there exists $\tau$ such that $\MVA(\tau)=\al$. If $R(\tau)=K$, then $\MVA(\tau)=1/K$.

\end{theorem}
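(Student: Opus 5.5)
We would first recast $\MVA(\tau)$ as the value of a finite linear feasibility problem. Write $\supp\,\tau=\{\mu_1,\dots,\mu_K\}$ with weights $\tau_k$, so that $\mu_0=\sum_j\tau_j\mu_j$. As in the Robust Rationalizability discussion, we may restrict the misaligned adviser to messages in $M$, using a stochastic matrix $\beta(k|j)$. Message $\mu_k$ induces the interim belief $\mu_0$ if and only if
\begin{align*}
\al\tau_k\mu_k+(1-\al)\sum_j\beta(k|j)\tau_j\mu_j=c_k\mu_0 \quad\text{for some } c_k\geq 0 .
\end{align*}
Summing over $k$ gives $\sum_k c_k=1$. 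The set of feasible $\al$ is closed. It is also downward closed: at a smaller $\al'$, a fraction $(\al-\al')/(1-\al')$ of misaligned types mimics the truth and the rest plays the $\al$-jamming strategy. Hence the feasible set is $[0,\MVA(\tau)]$ and the supremum is attained.

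\textbf{Equivalence with $\Delta V>0$.} Suppose $\al\leq\MVA(\tau)$ and fix a jamming $\beta$. For any $\sigma$, $V(\sigma)$ is at most the payoff of $\sigma$ against $\beta$. Every on-path message then leaves the interim belief at $\mu_0$, so this payoff is at most $V_0$. Hence $\Delta V=0$ in every decision problem.

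Now suppose $\al>\MVA(\tau)$. We take a decision problem with finite $\Theta$ and strictly convex $U$, for instance no private information and quadratic loss. \autoref{thm:minimax} gives a robustly rationalizable strategy with saddle strategy $\beta^*$. Because $\al>\MVA(\tau)$, the strategy $\beta^*$ cannot jam, so some on-path message induces an interim belief different from $\mu_0$. The value then equals $\expect[U(\mathbb{P}_{\beta^*}(\cdot|m))]>U(\mu_0)=V_0$ by strict Jensen. For a general $\Theta$ with strictly convex $U$ we would run the same Jensen argument. Where finiteness of $\Theta$ is needed for the minimax step, we would approximate, or lower-bound $V^*$ directly using small trust regions around $\mu_0$. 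This is a technical point we expect to handle but have not fully checked.

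\textbf{The case $R(\tau)=K$.} Linear independence of the $\mu_j$ makes the representation of $\mu_0$ unique. The jamming condition therefore forces $\al\tau_k\delta_{jk}+(1-\al)\beta(k|j)\tau_j=c_k\tau_j$. This gives $\beta(k|j)=c_k/(1-\al)$ for $j\neq k$ and $\beta(k|k)=(c_k-\al)/(1-\al)$. Every row sums to one automatically, so the only constraint is $c_k\geq\al$ together with $\sum_k c_k=1$. Hence $\MVA(\tau)=1/K$.

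\textbf{Upper bound $1/2$.} Since $K\geq 2$, some linear functional $e$ is not constant on $M$. Put $g_j=e\cdot(\mu_j-\mu_0)$ and let $G\triangleq\sum_{g_j>0}\tau_jg_j=-\sum_{g_j<0}\tau_jg_j>0$. Apply $e$ to the jamming equations: for each $k$,
\begin{align*}
\al\tau_kg_k+(1-\al)\sum_j\beta(k|j)\tau_jg_j=0 .
\end{align*}
Sum these over the set of $k$ with $g_k>0$. The aligned contribution is $\al G$. The misaligned contribution is at least $-(1-\al)G$, because only negative-$g$ mass can offset. This yields $\al\leq 1-\al$, i.e.\ $\al\leq 1/2$.

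\textbf{Lower bound $1/N$.} The set of distributions on $M$ with barycenter $\mu_0$ is a polytope. Its extreme points have linearly independent supports, hence at most $N$ atoms. We would write $\tau$ as a mixture of such extreme distributions $\tau^i$ and apply the independent-support construction to each component at level $1/K_i\geq 1/N$. Messages are shared across components, but each component's conditional belief at every message is $\mu_0$, so the mixture also induces $\mu_0$. This gives $\MVA(\tau)\geq 1/N$.

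\textbf{Attainability of every $\al\in[1/N,1/2]$.} The independent-support case already delivers the values $1/K$ for $2\leq K\leq N$. To fill the intermediate values we would take a continuous path of $\tau$'s, moving atoms from an affinely independent configuration towards a lower-rank one. We would then argue that $\MVA$ varies continuously along the path via continuity of the LP value. We expect this continuity and interpolation step to be the main obstacle. The first thing we would try is an explicit one-parameter family in a small example, for instance $N=3$ with three collinear-degenerating posteriors, where the LP can be solved in closed form, and then extend to general $N$.
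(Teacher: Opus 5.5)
Most of your argument holds, and two steps take a different, more elementary route than the paper. For $\MVA(\tau)\le 1/2$ you project the jamming equations on a non-constant linear functional and sum over the messages where it is positive. The paper instead notes that the rows of $D(\Pi)$ are left eigenvectors of the misaligned strategy $B$ with eigenvalue $-\alpha/(1-\alpha)$, and then uses that a stochastic matrix has spectral radius $1$. For the lower bound you decompose $\tau$ into extreme Bayes-plausible components with linearly independent supports. You jam each component with the strategy that makes the message independent of the adviser's belief. This replaces the paper's Auerbach-basis construction of a single $B$, and since each such support has at most $R(\tau)$ atoms, it also gives the sharper bound $\MVA(\tau)\ge 1/R(\tau)$. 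Your $R(\tau)=K$ case is the paper's: the garbling must have identical rows.

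On the general-$\Theta$ point you left unchecked, no approximation is needed. A response matters only through a payoff vector in the compact convex set $W\subset\mathbb{R}^N$ of feasible state-contingent payoff profiles from Lemma~\ref{lem:bayes-plausible}. Restrict the agent to answer off-path messages as she answers some fixed $m\in M$; this is harmless for a lower bound on $V^*$. The restricted game is then bilinear on the compact convex finite-dimensional set $W^K\times\mathcal{B}$. Hence max-min equals min-max for arbitrary $\Theta$, the minimizing $\beta$ is attained, and your strict-Jensen step goes through.

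The genuine gap is attainability of every $\alpha\in(1/N,1/2)$, and the route you sketch cannot work. By your own $R(\tau)=K$ result, $\MVA\equiv 1/K$ on the whole set of configurations with linearly (equivalently, affinely) independent posteriors. So along a path that moves such atoms toward a lower-rank configuration, the value stays at $1/K$. When the rank drops to $R'$, it jumps to at least $1/R'>1/K$. The reason is that $D(\Pi)G=0$, with $G=\alpha I_K+(1-\alpha)B$, relaxes discontinuously when $\mathrm{rank}\,D(\Pi)$ falls. The LP value is therefore not continuous exactly where you need it.

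Your proposed test case shows this. With $N=3$ and three posteriors, $\MVA=1/3$ if they are not collinear. If they are collinear, then $R=2$, your two bounds coincide, and $\MVA=1/2$. Nothing in $(1/3,1/2)$ appears.

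Interior values require $K>R(\tau)$, together with a family in which the rank stays fixed while the linear dependence among posteriors moves. The paper builds such a family explicitly. It uses $K$ signals with $R=K-1$, chosen so that $D(\Pi)G=0$ says two things: rows $1,\dots,K-2$ of $G$ coincide, and row $K-1$ is the $\delta$-mixture of rows $1$ and $K$. The LP then solves in closed form to $\MVA=1/(2+\delta(K-3))$. This sweeps $[1/(K-1),1/(K-2)]$ as $\delta$ ranges over $[(K-4)/(K-3),1]$, and $K=4,\dots,N+1$ covers $[1/N,1/2]$. For $N=3$, four signals give $\MVA=1/(2+\delta)$ with $\delta\in[0,1]$.
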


\begin{proof}
    See Appendix \ref{proof_thm:viable}.
\end{proof}
The proof of \autoref{thm:viable} shows that, for any given $\tau$, $\MVA(\tau)$ can be computed as the solution to a finite-dimensional linear program. We establish bounds on this solution and then show that these bounds are tight by explicitly constructing adviser information structures that attain every $\MVA$ within the admissible range. In fact, the proof yields the stronger statement that, for any $\tau$, $\MVA(\tau)\in\big[1/R(\tau),1/2\big]$.

By \autoref{thm:viable}, if the alignment $\al$ exceeds $1/2$ (and information is strictly useful everywhere), then the agent always benefits from the presence of the adviser. Conversely, if the state is binary and $\al<1/2$, the agent cannot benefit from the adviser. In higher-dimensional problems, the adviser can be valuable at much lower alignment levels. In particular, if $R(\tau)=K=N$, then it suffices that $\al>1/N$. Thus, when the state space is very rich, even a small amount of trust is enough for the agent to benefit from the advice of a misaligned adviser.

\section{Binary State}\label{sec:one-dim}

Consider the case of a binary state, $\Omega=\{0,1\}$ (we can intuitively think of the state as capturing whether a given statement is false or true). The belief is effectively one-dimensional: with slight abuse of notation, let $\mu\in[0,1]$ denote the probability of state $\omega=1$. For expositional clarity,  we further  assume that the agent's indirect payoff function $U(\mu)$ is  strictly convex and twice differentiable, and the adviser's posterior is distributed over $M=[0,1]$ with a strictly positive probability density $\tau(\mu)$.\footnote{The strictly convex indirect payoff function can be a result of the agent having a continuum of actions or, as we show in Appendix \ref{app:convex-utility}, finitely many actions and a continuum of private types.}  In this case,  each $\mu\in[0,1]$ can be associated with a unique  Bayes-optimal private strategy $\sigmah_0(\mu)$. 

Since any connected one-dimensional compact set is  a closed interval, a straightforward corollary of  \autoref{thm:trust} is:

\begin{corollary}\label{cor:binary-interval}
If $|\Omega|=2$, any optimal strategy $\sigma^*$ is characterized by a trust region $T=[\muu,\muo]$. If $m\in[\muu,\muo]$, $\sigmah(m)=\sigmah_0(m)$; if $m<\muu$, $\sigmah(m)=\sigmah_0(\muu)$; if $m>\muo$, $\sigmah(m)=\sigmah_0(\muo)$.
\end{corollary}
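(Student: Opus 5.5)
The plan is to derive the corollary directly from \autoref{thm:trust}, using the one-dimensional structure of beliefs and the uniqueness of the Bayes-optimal private strategy. First, \autoref{thm:trust} says that any optimal $\sigma^*$ is equivalent to a TRS with a connected compact trust region $T\subset[0,1]$. Connected compact subsets of the line are closed intervals, possibly degenerate, so $T=[\muu,\muo]$. Since $U$ is strictly convex and twice differentiable, every belief $\mu$ has a unique Bayes-optimal private strategy $\sigmah_0(\mu)$. Hence, for $m\in T$, the definition of a TRS forces $\sigmah(m)=\sigmah_0(m)$.

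Second, I will identify the clipping map $P(m)$ for $m\notin T$. By the discussion after the TRS definition, $P(m)\in\arg\min_{m'\in T}D_U(m,m')$. For fixed $m$, the derivative of $m'\mapsto D_U(m,m')=U(m)-U(m')-U'(m')(m-m')$ in the one-dimensional parametrization is $-U''(m')(m-m')$. I will check that the extension convention makes the $N$-dimensional Bregman expression reduce to this scalar form on the simplex.

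If $m<\muu$, then $m-m'<0$ for every $m'\in T$, and $U''>0$ almost everywhere by strict convexity. So the derivative is positive and $D_U(m,\cdot)$ is strictly increasing on $T$. Its unique minimizer is therefore $\muu$, which gives $\sigmah(m)=\sigmah_0(\muu)$. The case $m>\muo$ is symmetric and yields $\muo$.

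If one prefers to avoid assuming $U''>0$ everywhere, there is a derivative-free route. The function $m'\mapsto U(m')+U'(m')(m-m')$ is the tangent line at $m'$ evaluated at $m$. Strict convexity makes this tangent value strictly decreasing as $m'$ moves away from $m$ on one side. That again gives a unique nearest endpoint.

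The main subtlety is the word ``equivalent'' in \autoref{thm:trust}. The optimal $\sigma^*$ coincides with the TRS only up to inducing the same joint distribution over states, types, messages and actions. Since $\tau$ has a strictly positive density on $[0,1]$, every message $m\in[0,1]$ is sent with positive density by the aligned adviser. Equality of the induced distributions therefore pins down $\sigmah(m)$ for almost every $m$, which is the sense of ``for all'' adopted in the model's footnote. I expect this step, together with checking that ties in $P(m)$ cannot occur, to be the only place needing care. Uniqueness of the Bregman minimizer from the strict monotonicity above settles the ties.
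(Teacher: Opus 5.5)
Your proposal is correct and follows the paper's route. The paper reads the corollary off \autoref{thm:trust} by noting that a compact connected subset of $[0,1]$ is a closed interval; uniqueness of $\sigmah_0(\cdot)$, together with the fact that $P(m)$ lies on the boundary of $T$ visible from $m$, then gives clipping to the nearest endpoint. Your monotonicity check of $m'\mapsto U(m')+U'(m')(m-m')$ and your almost-everywhere reading of ``equivalent'' simply make explicit what the paper leaves implicit.

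One small slip: strict convexity of a twice-differentiable $U$ does not give $U''>0$ almost everywhere. It only gives $U''\ge 0$ with $U''$ not vanishing on any interval. That weaker fact already yields strict monotonicity via the mean value theorem, and your derivative-free tangent-line argument avoids the issue anyway.
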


If $\muu=\muo$, the agent effectively ignores the adviser, implying $\muu=\muo=\mu_0$. If $\muu<\muo$, the agent plays according to the Bayes-optimal strategy $\sigmah_0(\muu)$  if $m\leq\muu$, and according to the Bayes-optimal strategy $\sigmah_0(\muo)$ if $m\geq\muo$. 

Recall that the adversarial strategy of the misaligned adviser induces a belief from the trust region that maximizes the Bregman distance from his true posterior belief; when the trust region is an interval, its boundary consists of the two endpoints, and the adversarial strategy admits a simple threshold characterization: 

\begin{lemma}\label{lem:binary-misaligned-adviser}
When the agent commits to a TRS with the trust region  $T=[\muu,\muo]$, the misaligned adviser with belief $\mu$ finds it optimal to send any message $m\geq \muo$ if $\mu\leq \muh(\muu,\muo)$ and any message $m\leq \muu$ if $\mu\geq \muh(\muu,\muo)$, where  
\begin{align}\label{def_muh}
    \muh(\muu,\muo)=\frac{\int_{\muu}^{\muo} \mu U''(\mu)d\mu}{\int_{\muu}^{\muo} U''(\mu)d\mu}.
\end{align}
\begin{proof}
    See Appendix \ref{proof_lem:binary-misaligned}.
\end{proof}
\end{lemma}
\autoref{lem:binary-misaligned-adviser} states that the misaligned adviser with high enough beliefs $\mu$ will induce the private strategy Bayes-optimal at the lowest belief in the trust region, $\muu$, by reporting some message $m$ lower than $\muu$; similarly, the misaligned adviser with low enough beliefs $\mu$ will induce the private strategy Bayes-optimal at the highest  belief in the trust region, $\muo$, by reporting some message $m$ higher than $\muo$. The threshold belief  is given by the conditional expectation of a random variable whose distribution is determined by the curvature of the indirect utility function:  $\muh(\muu,\muo)=\expect[\nu|\nu\in[\muu,\muo]]$, where $\nu$ is distributed with full support over $[0,1]$ according to probability density $U''(\cdot)/\int_0^1 U''(\mu)d\mu$.

To characterize the trust region's boundaries, we will use the observation from \autoref{thm:minimax} that it is sufficient to construct mutual best responses for the agent and the misaligned adviser. \autoref{lem:binary-misaligned-adviser} characterizes the best response of the misaligned adviser. A best response of the agent must use the Bayes-optimal strategies at each interim belief induced by the adviser's strategy. A necessary condition is that the average interim belief induced by messages $m\leq \muu$ is exactly $\muu$, and the average interim belief induced by messages $m\geq \muo$ is exactly $\muo$:\footnote{One way to see that is to use our observation that in a TRE, the mapping from messages to belief defined by Bayes' rule must agree with the mapping defined by minimizing the Bregman distance to the trust region.}
\begin{align}
\frac{\al\int_0^{\muu}\mu \tau(\mu)d\mu+(1-\al)\int_{\muh(\muu,\muo)}^1 \mu \tau(\mu)d\mu}{\al\int_0^{\muu}  \tau(\mu)d\mu+(1-\al)\int_{\muh(\muu,\muo)}^1  \tau(\mu)d\mu}&=\muu,\label{eq:cond_1}\\
\frac{\al\int_{\muo}^1 \mu \tau(\mu)d\mu+(1-\al)\int_0^{\muh(\muu,\muo)} \mu  \tau(\mu)d\mu}{\al\int_{\muo}^1  \tau(\mu)d\mu+(1-\al)\int_0^{\muh(\muu,\muo)}  \tau(\mu)d\mu}&=\muo\label{eq:cond_2}.
\end{align}
As it turns out, these conditions are also sufficient for a TRE:

\begin{proposition}[Binary-State Characterization]\label{prop:binary-trust-region}
An optimal strategy exists; it is unique and robustly rationalizable. Its trust region, $T=[\muu,\muo]$, is equal to the prior belief $\{\mu_0\}$ when $\al\leq 1/2$; otherwise,  it is defined by the unique solution to the system \eqref{eq:cond_1}-\eqref{eq:cond_2} that satisfies $\muu\leq  \mu_0\leq \muo$.
\end{proposition}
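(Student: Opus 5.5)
By Corollary~\ref{cor:binary-interval}, the problem reduces to choosing an interval $T=[\muu,\muo]$ with $\muu\le\mu_0\le\muo$. I would write the agent's value as an explicit function $V(\muu,\muo)$. The aligned part is
\[
\al\Big[\int_{\muu}^{\muo}U(\mu)\tau(\mu)d\mu+\int_0^{\muu}\big(U(\muu)+U'(\muu)(\mu-\muu)\big)\tau(\mu)d\mu+\int_{\muo}^1\big(U(\muo)+U'(\muo)(\mu-\muo)\big)\tau(\mu)d\mu\Big].
\]
The misaligned part, with $\muh=\muh(\muu,\muo)$ taken from Lemma~\ref{lem:binary-misaligned-adviser}, is
\[
(1-\al)\Big[\int_0^{\muh}\big(U(\muo)+U'(\muo)(\mu-\muo)\big)\tau(\mu)d\mu+\int_{\muh}^1\big(U(\muu)+U'(\muu)(\mu-\muu)\big)\tau(\mu)d\mu\Big].
\]
This function is continuous on the compact set $\{0\le\muu\le\mu_0\le\muo\le1\}$, so a maximizer exists, and by Theorem~\ref{thm:trust} it gives an optimal strategy.

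Next I would differentiate. The misaligned adviser is indifferent at $\muh$, so by an envelope argument the terms coming from $\partial\muh/\partial\muu$ cancel. The face-value terms at the endpoints also cancel by continuity. This leaves
\[
\frac{\partial V}{\partial\muu}=-U''(\muu)\Big[\al\int_0^{\muu}(\mu-\muu)\tau(\mu)d\mu+(1-\al)\int_{\muh}^1(\mu-\muu)\tau(\mu)d\mu\Big]=U''(\muu)\,D_l\,(\muu-r_l),
\]
and symmetrically for $\muo$. Here $r_l$ and $D_l>0$ are the ratio and denominator on the left of \eqref{eq:cond_1}. Thus an interior first-order condition is exactly \eqref{eq:cond_1}--\eqref{eq:cond_2}, and the sign of each partial is governed by whether the endpoint lies above or below the Bayesian mean $r_l$ (respectively $r_u$).

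Evaluating at the degenerate interval $\muu=\muo=\mu_0$ gives $\muh=\mu_0$. Then $r_l$ is the mean of $\mu$ under $\tau$ reweighted by $\al$ below $\mu_0$ and by $1-\al$ above $\mu_0$. Because the unweighted mean equals $\mu_0$, we get $r_l\ge\mu_0$ exactly when $\al\le1/2$, and the symmetric statement holds for $r_u$. So for $\al\le1/2$ shrinking toward the prior is locally weakly beneficial. To make this global, I would instead invoke Theorem~\ref{thm:viable}: with a binary state $R(\tau)=2$, so $\MVA(\tau)=1/2$ and $\Delta V=0$. With strictly convex $U$, any nondegenerate interval then yields a strict loss on the aligned jamming part, which gives $T=\{\mu_0\}$. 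For $\al>1/2$, the same computation shows the degenerate interval is not optimal, so every maximizer is interior and satisfies \eqref{eq:cond_1}--\eqref{eq:cond_2}. Rationalizability then comes by construction. Let the misaligned adviser report to $[0,\muu]$ when $\mu\ge\muh$ and to $[\muo,1]$ when $\mu\le\muh$, in proportion to $\tau$ so that every on-path message in a tail induces the same interim belief; conditions \eqref{eq:cond_1}--\eqref{eq:cond_2} make those beliefs $\muu$ and $\muo$. Messages inside $T$ are sent only by the aligned adviser. Hence the TRS is a best response, the strategy is robustly rationalizable, and Theorem~\ref{thm:minimax} certifies optimality.

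The main obstacle is uniqueness: that \eqref{eq:cond_1}--\eqref{eq:cond_2} has exactly one solution with $\muu\le\mu_0\le\muo$, and that the maximizer is unique. My plan is to show that for fixed $\muo$ the function $\muu-r_l(\muu,\muo)$ is single-crossing in $\muu$. It is negative at $\muu=0$; I expect to show its slope is positive wherever it vanishes, using that $\partial r_l/\partial\muu=\al\tau(\muu)(\muu-r_l)/D_l$ plus a $\muh$-term. This defines a best-response curve $\muu=g(\muo)$, and symmetrically $\muo=h(\muu)$. Both curves move away from $\mu_0$ as the other endpoint moves out, since raising $\muo$ raises $\muh$ and lowers $r_l$. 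I would then try to show the composite map $g\circ h$ has slope less than one, so the curves cross at most once; failing that, I would show $V$ is quasi-concave along the solution set. Uniqueness of the strategy follows from uniqueness of $T$, together with strict convexity of $U$, which pins down $\sigmah_0$.
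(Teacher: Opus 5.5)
Your overall route is the paper's: reduce to $(\muu,\muo)$, obtain envelope first-order conditions equivalent to \eqref{eq:cond_1}--\eqref{eq:cond_2}, and certify via a TRE and \autoref{thm:minimax}. Two steps, however, fail as written.

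\emph{The rationalizing strategy uses the wrong weights.} Suppose the misaligned adviser with $\mu\le\muh$ draws his report on $[\muo,1]$ with density proportional to $\tau(m)$. Writing $\kappa\triangleq\int_{\muo}^1\tau(\mu)\,d\mu$, the interim belief after $m\in[\muo,1]$ is
\[
\frac{\al\kappa\, m+(1-\al)\int_0^{\muh}\mu\,\tau(\mu)\,d\mu}{\al\kappa+(1-\al)\int_0^{\muh}\tau(\mu)\,d\mu},
\]
which is strictly increasing in $m$. Condition \eqref{eq:cond_2} only pins its \emph{average} at $\muo$, so $\sigmah_0(\muo)$ is not a best response message by message. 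The misaligned report density must instead be proportional to $(m-\muo)\tau(m)$ on $[\muo,1]$, and to $(\muu-m)\tau(m)$ on $[0,\muu]$. Then at each $m$ the aligned excess $\al(m-\muo)\tau(m)$ is exactly cancelled. Condition \eqref{eq:cond_2} is precisely the statement that $\nu(Y)=(1-\al)\int_Y(\muo-\mu)\tau(\mu)d\mu$ on $[0,\muh]$ and $\eta(X)=\al\int_X(\mu-\muo)\tau(\mu)d\mu$ on $[\muo,1]$ have equal mass, which is what makes these weights a valid strategy. The paper transports $\nu$ onto $\eta$ by a quantile map.

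\emph{\autoref{thm:viable} does not apply here.} It is stated and proved only for finitely supported $\tau$, whereas here $\tau$ has a density on $[0,1]$; the paper remarks that the finite case would follow from it and deliberately does not use it. So for $\al\le 1/2$ you have none of the following: $\Delta V=0$; the jamming strategy your strict-loss uniqueness argument needs; or robust rationalizability of ignoring the adviser. You must build the continuum jamming strategy directly. For $\al<1/2$, choose $\mu_L<\mu_0<\mu_H$ with $\int_0^{\mu_L}(\mu_0-\mu)\tau(\mu)d\mu=\frac{\al}{1-\al}\int_{\mu_0}^1(\mu-\mu_0)\tau(\mu)d\mu$, and symmetrically for $\mu_H$. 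Transport the two misaligned tails onto the opposite sides of $\mu_0$ with the weights above, and let the middle types report $\mu_0$.

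\emph{Uniqueness is closable along your own route, but it needs a determinant computation you have not done.} Let $F(x)\triangleq\int_0^x\tau(\mu)d\mu$ and define
\[
A=\al F(\muu)+(1-\al)(1-F(\muh))>0,\qquad B=\al(1-F(\muo))+(1-\al)F(\muh)>0,
\]
\[
x_1=(1-\al)\tau(\muh)\,\partial\muh/\partial\muu\ge0,\qquad x_2=(1-\al)\tau(\muh)\,\partial\muh/\partial\muo\ge0,
\]
\[
y=\muh-\muu\ge0,\qquad z=\muo-\muh\ge0.
\]
In this notation $-\Psi_{1\muu}=A+x_1y$, $-\Psi_{1\muo}=x_2y$, $-\Psi_{2\muu}=x_1z$ and $-\Psi_{2\muo}=B+x_2z$, so the Jacobian determinant is $AB+Ax_2z+Bx_1y>0$. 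At any crossing of your two best-response curves the slope of $g\circ h$ is $x_1x_2yz/[(A+x_1y)(B+x_2z)]<1$. Hence $g(h(\muu))-\muu$ has negative derivative at every zero and so has at most one zero. The paper uses the same determinant to show that the Jacobian of $-\Psi$ is a P-matrix on the rectangle $D$, and then applies Gale--Nikaido.

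\emph{Two smaller corrections.}
\begin{enumerate}
\item Your displayed derivative has the wrong sign: it is $\partial V/\partial\muu=U''(\muu)D_l(r_l-\muu)$. Your verbal conclusions match this correct sign, not your formula.
\item Restricting to $\muu\le\mu_0\le\muo$ and calling every maximizer interior needs more than the degenerate-interval check. The corners $\muu=0$ and $\muo=1$ are excluded immediately by the signs of the partials. The faces $\muu=\mu_0<\muo$ and $\muu<\mu_0=\muo$ are excluded by the paper's bounds $b_1\le\mu_0\le b_2$ on the partial best responses, which are strict when $\al>1/2$.
\end{enumerate}
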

\begin{proof}
    See Appendix \ref{app:proof-binary-trust}.
\end{proof}

Note that while the structure of the trust region characterized by \autoref{prop:binary-trust-region}  is simple, the underlying strategy of the misaligned adviser is quite complex in a TRE. By Lemma \ref{lem:binary-misaligned-adviser}, the misaligned adviser with belief $\mu\geq \muh(\muu,\muo)$ is indifferent between sending all messages $m\leq \muu$ since they all result in the same Bayes-optimal strategy $\sigmah_0(\muu)$. In a commitment solution, the misaligned adviser can send any of these messages; for example, he can always send $m=\muu$. But in a TRE, the strategy $\beta^*$ of the misaligned adviser must be such that \textit{every}  message $m\leq \muu$ induces the interim belief  $\muu$ via Bayes' rule. Since all messages $m\leq \muu$ are sent on equilibrium path (the aligned adviser simply reports his belief truthfully), $\beta^*$ must rely on the misaligned adviser's  indifference to put just enough probability mass on each of these messages to induce $\muu$.

\autoref{prop:binary-trust-region} fully characterizes the optimal trust region. A natural next question is how the trust region depends on the problem's parameters. We offer two comparative statics results, one related to the size of the trust region, and one related to its location.   

First, higher alignment results in more trust:

\begin{proposition}[Change in Alignment]\label{prop:trust-change}
When $\al\geq1/2$, $\muu(\al)$ is strictly and continuously decreasing in $\al$ and $\muo(\al)$ is strictly and continuously increasing in $\al$. At $\al=1/2$, $[\muu,\muo]=[\mu_0,\mu_0]$. At $\al=1$, $[\muu,\muo]=[0,1]$. 
\end{proposition}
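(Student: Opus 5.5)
The plan is to treat the system \eqref{eq:cond_1}--\eqref{eq:cond_2} as two equations $F_1(\muu,\muo,\al)=0$ and $F_2(\muu,\muo,\al)=0$, and to apply the implicit function theorem to the unique solution with $\muu\leq\mu_0\leq\muo$ given by \autoref{prop:binary-trust-region}. The two endpoint values are checked first.

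\textbf{Endpoints.} At $\al=1$ the misaligned terms vanish. Equation \eqref{eq:cond_1} then requires the conditional mean of $\mu$ on $[0,\muu]$ to equal $\muu$. Since $\tau$ has a positive density, this forces $\muu=0$; the degenerate limit is read through the form obtained after clearing denominators. By the same argument, $\muo=1$.

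At $\al=1/2$ we use the consistency identity. Multiply \eqref{eq:cond_1} and \eqref{eq:cond_2} through by their denominators and add. This gives $\muu D_1+\muo D_2=\al E_{\text{al}}+(1-\al)\mu_0$, where the $D_i$ are the message masses and $E_{\text{al}}$ is the aligned contribution from the tails. For $\al=1/2$ the tail masses pair up, and only $\muu=\muo=\mu_0$ is consistent. This is the same threshold computation as in \autoref{prop:binary-trust-region}, and I would reuse it.

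\textbf{Clearing denominators.} Write
\[
G_1\equiv \al\int_0^{\muu}(\mu-\muu)\tau\,d\mu+(1-\al)\int_{\muh}^1(\mu-\muu)\tau\,d\mu=0,
\]
and analogously $G_2$ for $\muo$, with $\muh=\muh(\muu,\muo)$.

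\textbf{Sign pattern.} The direct partial derivatives are
\[
\partial_{\muu}G_1=-\bigl[\al F(\muu)+(1-\al)(1-F(\muh))\bigr]+(1-\al)(\muu-\muh)\tau(\muh)\,\partial_{\muu}\muh,
\]
where $F$ is the cdf of $\tau$. The $\al$-derivative is
\[
\partial_\al G_1=\int_0^{\muu}(\mu-\muu)\tau-\int_{\muh}^1(\mu-\muu)\tau<0.
\]
Both integrals have signs that make this negative: the first is $\leq 0$ and the second is $>0$.

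For $G_2$ the signs are mirrored. Moreover, \eqref{def_muh} gives $\muh\in(\muu,\muo)$ with $\partial_{\muu}\muh,\partial_{\muo}\muh>0$, because raising either endpoint of the $U''$-weighted interval raises its weighted mean.

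The sign argument I would pursue is the following. Fix $\muo$. Then $G_1$, viewed as a function of $\muu$, crosses zero from above, and an increase in $\al$ shifts the crossing point down. Likewise, $G_2$ shifts its root in $\muo$ up. The cross effects through $\muh$ couple the two equations: raising $\muo$ raises $\muh$, which lowers the misaligned mass in $G_1$, and so on.

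To sign the full Jacobian determinant and the comparative statics, I would use Cramer's rule. The needed inequality is that the own effects dominate the cross effects. This should follow from $|\muu-\muh|\,\tau(\muh)\,\partial\muh$ being bounded by the mass terms, since $\muh$ lies strictly between the endpoints.

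\textbf{Fallback via uniqueness.} This dominance inequality is the step I expect to be hardest. If it fails, I would use a monotone-comparative-statics argument that exploits uniqueness. Define the map $\muo\mapsto\muu(\muo;\al)$ that solves $G_1=0$, and the map $\muu\mapsto\muo(\muu;\al)$ that solves $G_2=0$. Show that their composition is a contraction or has a unique fixed point, and that increasing $\al$ moves both maps outward. Outward shifts of both maps then imply that the unique solution moves outward.

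\textbf{Strictness and continuity.} Strictness follows from $\partial_\al G_i\neq 0$ whenever $\muu<\muo$. Continuity follows from uniqueness together with compactness: the solution correspondence has a closed graph and is single-valued, hence continuous. This argument also covers continuity at $\al=1/2$, where the interval collapses, and at $\al=1$, where the denominators degenerate.
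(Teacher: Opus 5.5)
Your main line is the paper's proof. You clear denominators to get $\Psi_1=\Psi_2=0$ (your $G_1,G_2$), apply the implicit function theorem on $(1/2,1)$, and sign $d\muu/d\al$ and $d\muo/d\al$ by Cramer's rule from $\Psi_{1\muu}<0$, $\Psi_{1\muo}\leq 0$, $\Psi_{2\muu}\leq 0$, $\Psi_{2\muo}<0$, $\Psi_{1\al}<0<\Psi_{2\al}$. Each Cramer numerator is then signed with no magnitude comparison, so all you need is $\det J>0$; the paper takes this from the P-matrix claim in the proof of \autoref{prop:binary-trust-region}. The step you flag as hardest is not a dominance bound, and your heuristic (cross terms small relative to the mass terms) is not the mechanism. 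Write $\kappa=(1-\al)\tau(\muh)$, $p=\partial\muh/\partial\muu$, $q=\partial\muh/\partial\muo$, $\delta_1=\muh-\muu$, $\delta_2=\muo-\muh$, $a_0=\al F(\muu)+(1-\al)(1-F(\muh))$ and $d_0=\al(1-F(\muo))+(1-\al)F(\muh)$. Your derivative formula and its mirror images give $\Psi_{1\muu}=-(a_0+\kappa p\delta_1)$, $\Psi_{1\muo}=-\kappa q\delta_1$, $\Psi_{2\muu}=-\kappa p\delta_2$ and $\Psi_{2\muo}=-(d_0+\kappa q\delta_2)$. Hence
\[
\det J=(a_0+\kappa p\delta_1)(d_0+\kappa q\delta_2)-\kappa^2 pq\,\delta_1\delta_2=a_0d_0+\kappa\,(a_0 q\,\delta_2+d_0 p\,\delta_1)>0,
\]
because the cross products cancel identically.

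Your fallback is also valid and is a genuinely different route. The reaction maps $b_1(\cdot;\al)$ and $b_2(\cdot;\al)$ from the proof of \autoref{prop:binary-trust-region} are continuous and decreasing, with $b_1$ valued in $[0,\mu_0]$ and $b_2$ in $[\mu_0,1]$ for $\al\geq 1/2$. Since $\Psi_{1\al}<0<\Psi_{2\al}$, they shift strictly outward as $\al$ rises. So $\phi_\al=b_1(b_2(\cdot;\al);\al)$ is increasing on $[0,\mu_0]$, with $\phi_{\al'}<\phi_\al$ pointwise for $\al'>\al$. Because $\phi_\al(0)\geq 0$, $\phi_\al(\mu_0)\leq\mu_0$ and the fixed point is unique, $\phi_{\al'}(x)<x$ only to the right of the $\al'$ fixed point. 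Hence $\muu$ moves strictly left, and then $\muo=b_2(\muu;\al)$ moves strictly right; no contraction property is needed. This gives strict monotonicity without differentiability, though the uniqueness it relies on was itself obtained via Gale--Nikaido from the same determinant.

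Your closed-graph continuity argument is a real plus. The paper's implicit-function step covers only the open interval $(1/2,1)$ and merely checks the endpoint values, whereas uniqueness plus compactness also gives continuity at $\al=1/2$ and $\al=1$. One small correction: at $\al=1/2$ the summed ``consistency identity'' does not by itself force $\muu=\muo=\mu_0$. What forces it is the inequality chain in the proof of \autoref{prop:binary-trust-region}: at $\al=1/2$ all its inequalities must bind, giving $\muu=\muh=\muo$, and then $\int_0^1(\mu-\muu)\tau(\mu)\,d\mu=0$.
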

\begin{proof}
    See Appendix \ref{app:proof-binary-state-change}.
\end{proof}

\autoref{prop:trust-change} shows that the trust region gradually and monotonically expands from the prior belief, at $\al\leq 1/2$, to the entire belief simplex. For any $\al<1$, the trust region excludes the most extreme beliefs. Intuitively, the aligned adviser is unlikely to hold such extreme beliefs, whereas the misaligned adviser would be relatively likely to report them if they were included in the trust region. %This is why extreme beliefs are included in the trust region only once the probability of alignment is high.  

Second, we show that the trust region tends to include beliefs at which the decision problem of the agent is less ``information-sensitive.'' In other words, the agent will avoid expanding the trust region to beliefs where small changes in information lead to large changes in the optimal action. We formalize this notion via the indirect utility function $U(\mu)$, noting that its curvature reflects the sensitivity of the agent's optimal private strategy  to her interim beliefs. 

\begin{definition}[Information Sensitivity]
    We say that the indirect utility function $U_1(\mu)$ is \textit{less information-sensitive at higher beliefs} than the indirect utility function $U_2(\mu)$ if
    \begin{align*}
    \frac{U_1^{\prime\prime}(\mu)}{U_2^{\prime\prime}(\mu)}\text{ is decreasing in }\mu.
    \end{align*}
\end{definition}
The definition states that the convexity of the indirect utility function $U_1$ relative to the convexity of $U_2$ is smaller at higher beliefs $\mu.$ Intuitively, under $U_1$, the decision of the agent is less sensitive to new information at higher beliefs. It turns out that in this case the trust region will be skewed towards higher beliefs. 

\begin{proposition}[Change in Information Sensitivity]\label{prop:sensitive-change}
Suppose that $U_1(\mu)$ is \textit{less information-sensitive at higher beliefs} than $U_2(\mu)$. Then, the trust region $T_1$ corresponding to $U_1$ is higher in the strong set order than the trust region $T_2$ corresponding to $U_2$.
\end{proposition}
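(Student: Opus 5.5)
The plan is to exploit the fact that, in the system \eqref{eq:cond_1}--\eqref{eq:cond_2}, the indirect utility function enters \emph{only} through the threshold $\muh(\muu,\muo)$. If $\al\le 1/2$, \autoref{prop:binary-trust-region} gives $T_1=T_2=\{\mu_0\}$ and there is nothing to prove, so assume $\al>1/2$. I would treat the threshold as a free parameter $b$ and proceed in two steps: first solve the ``Bayes-consistency'' equations for the endpoints as functions of $b$, and then close the loop through the fixed-point condition $b=\muh_U(\muu,\muo)$. The strong set order on intervals amounts to $\muu_1\ge\muu_2$ and $\muo_1\ge\muo_2$, so it suffices to show that the equilibrium threshold is lower under $U_1$ and that both endpoints are decreasing in the threshold.

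\textbf{Step 1 (endpoints as functions of $b$).} For fixed $b\in[\muu,\muo]$, equation \eqref{eq:cond_1} says that $\muu$ equals the mean of a mixture. One component is the aligned mass on $[0,\muu]$, whose mean lies below $\muu$. The other is the misaligned mass on $[b,1]$, whose mean lies above $b\ge\muu$. A standard fixed-point argument (the left side minus $\muu$ is monotone in $\muu$) gives a unique solution $L(b)$. Raising $b$ removes misaligned mass near $b$, which lies above $\muu$, so the mixture mean falls and $L$ is decreasing. Symmetrically, \eqref{eq:cond_2} defines $\muo=H(b)$. The relevant mixture is the aligned mass on $[\muo,1]$ together with the misaligned mass on $[0,b]$. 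Raising $b$ adds mass at $b\le\muo$, which lowers the mean, so $H$ is also decreasing. Neither $L$ nor $H$ depends on $U$.

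\textbf{Step 2 (comparison of thresholds).} By \eqref{def_muh}, $\muh_{U_i}(l,h)$ is the conditional mean on $[l,h]$ of the density proportional to $U_i''$. The hypothesis that $U_1''/U_2''$ is decreasing is a monotone likelihood ratio ordering. The normalized density of $U_1$ is therefore first-order dominated by that of $U_2$ on every interval, so $\muh_{U_1}(l,h)\le\muh_{U_2}(l,h)$ for all $l<h$.

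\textbf{Step 3 (fixed point, the main obstacle).} Define $\varphi_i(b)=b-\muh_{U_i}(L(b),H(b))$. The equilibrium thresholds $b_i$ are the zeros of $\varphi_i$, and \autoref{prop:binary-trust-region} guarantees that each is unique. Step 2 gives $\varphi_1\ge\varphi_2$ pointwise. If I can show that each $\varphi_i$ crosses zero from below to above, it follows that $b_1\le b_2$, and Step 1 then yields $\muu_1=L(b_1)\ge L(b_2)=\muu_2$ and $\muo_1\ge\muo_2$.

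Establishing this sign pattern is where I expect the difficulty. I would check the behaviour at the ends of the admissible range of $b$. As $b$ decreases toward the point where $L(b)$ and $H(b)$ pinch together, $\muh$ is forced into $[L(b),H(b)]$. At such small $b$ this should give $\varphi<0$, and symmetrically $\varphi>0$ at the top. Combining these endpoint signs with uniqueness of the zero (continuity of $\varphi_i$ comes from Step 1) gives the single crossing. If the endpoint analysis turns out to be delicate, the fallback is to reuse the monotonicity argument already used to prove uniqueness in \autoref{prop:binary-trust-region}. That argument should deliver exactly this crossing direction.
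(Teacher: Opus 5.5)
Your decomposition matches the paper's own. You treat the cutoff as a free parameter. You solve \eqref{eq:cond_1}--\eqref{eq:cond_2} for endpoints $L(b),H(b)$, which do not involve $U$ and are decreasing on the admissible range $\{b:L(b)\le b\le H(b)\}$ (the same range the paper restricts to). You use the likelihood-ratio argument to get $\muh_{U_1}\le\muh_{U_2}$, and you compare fixed points. Steps 1 and 2 are fine.

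\textbf{The gap is Step 3.} You leave it open, and the endpoint heuristic you sketch rests on a wrong picture: for $\al>1/2$ the endpoints never pinch together. Multiplying \eqref{eq:cond_1} through by its denominator gives $\al\int_0^{\muu}(\mu-\muu)\tau(\mu)\,d\mu+(1-\al)\int_b^1(\mu-\muu)\tau(\mu)\,d\mu=0$. The left side is strictly decreasing in $\muu$, and at $\muu=\mu_0$ it is at most $(1-2\al)C<0$, where $C\triangleq\int_{\mu_0}^1(\mu-\mu_0)\tau(\mu)\,d\mu$. Hence $L(b)<\mu_0$, and symmetrically $H(b)>\mu_0$, for every $b$. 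Moreover, on the admissible range both endpoints rise as $b$ falls, so the window translates rather than shrinks.

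The admissible range actually ends where $b=L(b)$ (bottom) or $b=H(b)$ (top). There the signs $\varphi_i<0$ and $\varphi_i>0$ do hold, because $\muh_{U_i}$ lies strictly inside $(L(b),H(b))$. Combined with continuity and uniqueness of the zero, this would repair your route: two zeros would give two distinct solutions of the system, contradicting \autoref{prop:binary-trust-region}. But this repair is unnecessary.

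\textbf{The missing observation is a one-liner.} $\muh_U(l,h)$ is the mean of the density proportional to $U''$ truncated to $[l,h]$, so it is increasing in each endpoint:
\begin{align*}
\frac{\partial \muh_U}{\partial l}=\frac{U''(l)\,(\muh_U-l)}{\int_l^h U''(\mu)\,d\mu}\ge 0,\qquad \frac{\partial \muh_U}{\partial h}=\frac{U''(h)\,(h-\muh_U)}{\int_l^h U''(\mu)\,d\mu}\ge 0.
\end{align*}
Combined with your Step 1, $b\mapsto\muh_{U_i}(L(b),H(b))$ is weakly decreasing on the admissible range. Therefore your $\varphi_i(b)=b-\muh_{U_i}(L(b),H(b))$ is strictly increasing there, and the crossing from below is automatic. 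Then $\varphi_1(b_2)\ge\varphi_2(b_2)=0$ gives $b_1\le b_2$, hence $\muu_1=L(b_1)\ge L(b_2)=\muu_2$ and $\muo_1\ge\muo_2$.

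This is precisely how the paper closes the argument in \autoref{lem:cutoff-shift}. It works with the composite map $\varphi_i(h)=\muh_i(\muu(h),\muo(h))$ rather than your difference: if $h_2<h_1$, then $h_2=\varphi_2(h_2)\ge\varphi_2(h_1)\ge\varphi_1(h_1)=h_1$, a contradiction. This needs neither an endpoint analysis nor the uniqueness part of \autoref{prop:binary-trust-region}.
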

\begin{proof}
    See Appendix \ref{app:proof-sensitive-change}.
\end{proof}

\autoref{prop:sensitive-change} shows that the trust region skews towards beliefs at which the agent's optimal action is less sensitive to new information. We illustrate the usefulness of the result in the next subsection, where we consider an application.

\subsection{Application: Medical Triage}\label{sec.triage}

The agent is a doctor deciding whether a patient should undergo additional testing, $a=1$, or not, $a=0$, based on an interview and a preliminary test result that can be analyzed by AI (e.g., an x-ray image; cf. \cite{AgarwalMoehringRajpurkarSalz_2025_Radiology}). We model this by assuming a binary state, where $\omega=1$ means that the patient is sick and $\omega=0$ means that the patient is healthy, and with conditionally independent signals for the doctor and AI, both inducing a full-support uniform distribution of posterior beliefs.  We let $\theta\in [0,1]$ denote the doctor's private belief and $\mu\in[0,1]$ denote AI's belief. The doctor's payoff is
\begin{align*}
u(a,\omega)=
\begin{cases}
r\geq 1 & \text{if } a=1\text{ and }\omega=1,\\
1 & \text{if } a=0\text{ and }\omega=0,\\
0 & \text{if } a\neq \omega.
\end{cases}
\end{align*}
Thus, the doctor would like to match the action to the state. When $r>1$, payoffs are more sensitive  to taking the correct action when the patient is sick, that is, when $\omega=1$. 

\paragraph{First best.} As a benchmark, consider the case in which the doctor has direct access to $\mu$. By Bayes' rule, the posterior belief that the state is $1$ after observing the realization $(\mu,\theta)$ is $p(\mu,\theta)\triangleq\mu\theta/(\mu \theta +(1-\mu)(1-\theta))$. The doctor chooses additional testing, $a=1$, when $p(\mu,\theta)\geq 1/(1+r)$. The doctor's indirect payoff from interim  belief $\mu$ is
\[
U(\mu)=1-\frac{(1-\mu)^2}{r\mu +1-\mu}.
\]
Note that $U$ is strictly convex: the doctor's rich  but imperfect private information makes additional information locally valuable at all beliefs. 

\paragraph{Explicit solution when $r=1$.} Suppose that the signal $\mu$ is reported by an AI system that is aligned with probability $\al$. In the symmetric case $r=1$, we can solve the system of equations \eqref{def_muh}-\eqref{eq:cond_2} and obtain a symmetric trust region $T_\al=[\underline{\mu}(\al), 1-\underline{\mu}(\al)]$, where 
\begin{align*}
\underline{\mu}(\al)=\begin{cases}
\frac{1}{2} & \alpha\leq \frac{1}{2},\\
\frac{\sqrt{(1-\al)(1+2\al)}-(1-\al)}{2\al} &  \alpha> \frac{1}{2}.
\end{cases}
\end{align*}
\autoref{fig1} depicts the optimal trust region, and  \autoref{fig2} illustrates the resulting decision rule for the doctor  as a function of the realized signals. In line with \autoref{prop:binary-trust-region} and \autoref{prop:trust-change}, the trust region is equal to the prior belief when the alignment probability is below $1/2$. In that case, the doctor should not use AI and instead rely exclusively on her own signal, as shown in the left panel of \autoref{fig2}. When the alignment probability is above $1/2$, the doctor trusts moderate AI reports. Extreme reports, namely those with $m<\underline{\mu}(\al)$ or $m>1-\underline{\mu}(\al)$, are clipped at the endpoints of the trust region. The optimal decision rule is therefore sensitive to AI's recommendations only in the intermediate range of beliefs. In particular, the AI's signal alone is never sufficient to induce testing without corroboration from the doctor's information, as shown in the middle panel of \autoref{fig2}. Finally, as $\al$ approaches $1$, the trust region converges to the entire simplex (\autoref{fig1}), and the optimal decision rule converges to the first-best decision rule, as shown in the right panel of \autoref{fig2}. 

\begin{figure}[t]
  \centering
\begin{tikzpicture}[x=8.4cm, y=4.9cm]
  \fill[screencolor]
    (0.000,0.5000) -- (0.100,0.5000) -- (0.200,0.5000) -- (0.300,0.5000) --
    (0.400,0.5000) -- (0.500,0.5000) -- (0.505,0.4975) -- (0.518,0.4912) --
    (0.530,0.4848) -- (0.543,0.4784) -- (0.555,0.4719) -- (0.568,0.4654) --
    (0.580,0.4588) -- (0.593,0.4522) -- (0.606,0.4454) -- (0.618,0.4386) --
    (0.631,0.4317) -- (0.643,0.4248) -- (0.656,0.4177) -- (0.668,0.4105) --
    (0.681,0.4032) -- (0.693,0.3957) -- (0.706,0.3882) -- (0.719,0.3804) --
    (0.731,0.3725) -- (0.744,0.3645) -- (0.756,0.3562) -- (0.769,0.3478) --
    (0.781,0.3391) -- (0.794,0.3301) -- (0.807,0.3208) -- (0.819,0.3113) --
    (0.832,0.3013) -- (0.844,0.2910) -- (0.857,0.2802) -- (0.869,0.2689) --
    (0.882,0.2569) -- (0.894,0.2442) -- (0.907,0.2307) -- (0.920,0.2160) --
    (0.932,0.2000) -- (0.945,0.1822) -- (0.957,0.1619) -- (0.970,0.1379) --
    (0.982,0.1072) -- (0.995,0.0589) -- (1.000,0.0000)
    -- (1.000,1.0000)
    -- (0.995,0.9411) -- (0.982,0.8928) -- (0.970,0.8621) -- (0.957,0.8381) --
    (0.945,0.8178) -- (0.932,0.8000) -- (0.920,0.7840) -- (0.907,0.7693) --
    (0.894,0.7558) -- (0.882,0.7431) -- (0.869,0.7311) -- (0.857,0.7198) --
    (0.844,0.7090) -- (0.832,0.6987) -- (0.819,0.6887) -- (0.807,0.6792) --
    (0.794,0.6699) -- (0.781,0.6609) -- (0.769,0.6522) -- (0.756,0.6438) --
    (0.744,0.6355) -- (0.731,0.6275) -- (0.719,0.6196) -- (0.706,0.6118) --
    (0.693,0.6043) -- (0.681,0.5968) -- (0.668,0.5895) -- (0.656,0.5823) --
    (0.643,0.5752) -- (0.631,0.5683) -- (0.618,0.5614) -- (0.606,0.5546) --
    (0.593,0.5478) -- (0.580,0.5412) -- (0.568,0.5346) -- (0.555,0.5281) --
    (0.543,0.5216) -- (0.530,0.5152) -- (0.518,0.5088) -- (0.505,0.5025) --
    (0.500,0.5000) -- (0.400,0.5000) -- (0.300,0.5000) -- (0.200,0.5000) --
    (0.100,0.5000) -- (0.000,0.5000) -- cycle;
  \draw[thick, color=black]
    plot coordinates {
      (0.000,0.5000) (0.100,0.5000) (0.200,0.5000) (0.300,0.5000)
      (0.400,0.5000) (0.500,0.5000) (0.505,0.4975) (0.518,0.4912)
      (0.530,0.4848) (0.543,0.4784) (0.555,0.4719) (0.568,0.4654)
      (0.580,0.4588) (0.593,0.4522) (0.606,0.4454) (0.618,0.4386)
      (0.631,0.4317) (0.643,0.4248) (0.656,0.4177) (0.668,0.4105)
      (0.681,0.4032) (0.693,0.3957) (0.706,0.3882) (0.719,0.3804)
      (0.731,0.3725) (0.744,0.3645) (0.756,0.3562) (0.769,0.3478)
      (0.781,0.3391) (0.794,0.3301) (0.807,0.3208) (0.819,0.3113)
      (0.832,0.3013) (0.844,0.2910) (0.857,0.2802) (0.869,0.2689)
      (0.882,0.2569) (0.894,0.2442) (0.907,0.2307) (0.920,0.2160)
      (0.932,0.2000) (0.945,0.1822) (0.957,0.1619) (0.970,0.1379)
      (0.982,0.1072) (0.995,0.0589) (1.000,0.0000)
    };
  \draw[thick, color=black]
    plot coordinates {
      (0.000,0.5000) (0.100,0.5000) (0.200,0.5000) (0.300,0.5000)
      (0.400,0.5000) (0.500,0.5000) (0.505,0.5025) (0.518,0.5088)
      (0.530,0.5152) (0.543,0.5216) (0.555,0.5281) (0.568,0.5346)
      (0.580,0.5412) (0.593,0.5478) (0.606,0.5546) (0.618,0.5614)
      (0.631,0.5683) (0.643,0.5752) (0.656,0.5823) (0.668,0.5895)
      (0.681,0.5968) (0.693,0.6043) (0.706,0.6118) (0.719,0.6196)
      (0.731,0.6275) (0.744,0.6355) (0.756,0.6438) (0.769,0.6522)
      (0.781,0.6609) (0.794,0.6699) (0.807,0.6792) (0.819,0.6887)
      (0.832,0.6987) (0.844,0.7090) (0.857,0.7198) (0.869,0.7311)
      (0.882,0.7431) (0.894,0.7558) (0.907,0.7693) (0.920,0.7840)
      (0.932,0.8000) (0.945,0.8178) (0.957,0.8381) (0.970,0.8621)
      (0.982,0.8928) (0.995,0.9411) (1.000,1.0000)
    };
  \draw[thin, dashed, color=axiscolor] (0.5,0) -- (0.5,1);
  \draw[->, >=stealth, thick, color=axiscolor] (0,0) -- (1.08,0)
    node[right, font=\large\itshape, color=black] {\quad $\alpha$};
  \draw[->, >=stealth, thick, color=axiscolor] (0,0) -- (0,1.1)
    node[above, font=\small\itshape, color=axiscolor] {};
  \foreach \v/\lbl in {0/0, 0.5/{$\tfrac{1}{2}$}, 1/1}{
    \draw[thin, color=axiscolor] (\v, 0.015) -- (\v, -0.015);
    \node[below, font=\small, color=axiscolor] at (\v, 0) {\lbl};
  }
  \foreach \v/\lbl in {0/0, 0.5/{$\tfrac{1}{2}$}, 1/1}{
    \draw[thin, color=axiscolor] (0.008, \v) -- (-0.008, \v);
    \node[left, font=\small, color=axiscolor] at (0, \v) {\lbl};
  }
  \node[right, font=\small\itshape, color=black] at (0.8, 0.15)
    {$\underline{\mu}(\alpha)$};
  \node[right, font=\small\itshape, color=black] at (0.8, 0.85)
    {$\bar{\mu}(\alpha)$};
  \node[font=\small] at (0.75, 0.5)
    {trust region};
  \node[below, font=\small, color=black] at (0.5, -0.015)
    {};

\end{tikzpicture} 
\caption{Evolution of the optimal trust region as a function of  alignment probability $\al$.} \label{fig1}
  \end{figure}

\paragraph{Comparative statics with respect to $r$.} When $r>1$, the system of equations \eqref{def_muh}-\eqref{eq:cond_2} no longer admits a closed-form solution. However, it is easy to verify that, as $r$ increases, $U(\mu)$ becomes less information-sensitive at higher beliefs. Intuitively, when $r>1$, the objective function makes the doctor effectively place more weight on state $1$, in which the patient is sick. As a result, she reacts less strongly to new information when she believes that state $1$ is likely. By \autoref{prop:sensitive-change}, increasing $r$ shifts the optimal trust region  upward. The doctor therefore trusts higher reports more than before and becomes more skeptical of low reports.

\begin{figure}[t]
\begin{tikzpicture}[x=0.8cm, y=0.8cm]
\begin{scope}[xshift=-2cm]
  \fill[noscreencolor] (0,0) rectangle (0.5*\sz,\sz);
  \fill[screencolor]   (0.5*\sz,0) rectangle (\sz,\sz);
  \draw[thick, color=boundarycolor] (0.5*\sz,0) -- (0.5*\sz,\sz);
  \drawaxesandlabels{$\alpha \leq \tfrac{1}{2}$}
  \node[font=\small\itshape, color=boundarycolor!80] at (0.75*\sz, 0.5*\sz) {$a=1$};
  \node[font=\small\itshape, color=boundarycolor!80] at (0.25*\sz, 0.5*\sz) {$a=0$};
\end{scope}
\def\tl{0.3604}
\def\th{0.6396}

\begin{scope}[xshift=3.5cm]
  \fill[noscreencolor]
    (0,0) -- (\th*\sz,0) -- (\th*\sz,\tl*\sz) --
    (\tl*\sz,\th*\sz) -- (\tl*\sz,\sz) -- (0,\sz) -- cycle;
  \fill[screencolor]
    (\th*\sz,0) -- (\sz,0) -- (\sz,\sz) --
    (\tl*\sz,\sz) -- (\tl*\sz,\th*\sz) -- (\th*\sz,\tl*\sz) -- cycle;
  \draw[thick, color=boundarycolor]
    (\th*\sz,0) -- (\th*\sz,\tl*\sz) --
    (\tl*\sz,\th*\sz) -- (\tl*\sz,\sz);
  \draw[thin,color=dashcolor] (-0.06,\tl*\sz)--(0.06,\tl*\sz);
  \node[left, font=\scriptsize] at (0,\tl*\sz) {$\underline{\mu}$};
  \draw[thin,color=dashcolor] (-0.06,\th*\sz)--(0.06,\th*\sz);
  \node[left, font=\scriptsize,] at (0,\th*\sz) {$\bar{\mu}$};
  \drawaxesandlabels{$\alpha=\tfrac{3}{4}$}
  \node[font=\small\itshape, color=boundarycolor!80] at (0.77*\sz, 0.5*\sz) {$a=1$};
  \node[font=\small\itshape, color=boundarycolor!80] at (0.23*\sz, 0.5*\sz) {$a=0$};
\end{scope}
\begin{scope}[xshift=9cm]
  \fill[noscreencolor] (0,0) -- (0,\sz) -- (\sz,0) -- cycle;
  \fill[screencolor]   (0,\sz) -- (\sz,\sz) -- (\sz,0) -- cycle;
  \draw[thick, color=boundarycolor] (0,\sz) -- (\sz,0);
  \drawaxesandlabels{$\alpha=1$}
  \node[font=\small\itshape, color=boundarycolor!80] at (0.72*\sz, 0.72*\sz) {$a=1$};
  \node[font=\small\itshape, color=boundarycolor!80] at (0.25*\sz, 0.25*\sz) {$a=0$};
\end{scope}

\end{tikzpicture}
\caption{Optimal decision rule in three cases: $\alpha\leq 1/2$, $\alpha=3/4$, and $\alpha=1$.}\label{fig2}
  \end{figure}

\paragraph{Takeaways.}
The example resonates with how imaging AI is often deployed in practice. Many real-world systems are used as second readers or prioritization aids rather than as autonomous decision-makers.\footnote{A recent systematic review of real-world AI deployment in medical imaging finds that AI most commonly serves as a secondary reader or a triage tool rather than as a fully autonomous reader; see \cite{Wenderott2024AIImagingEfficiency}. For a regulatory example, the FDA-cleared DrAid chest X-ray system is indicated as a triage and prioritization aid and is not intended for stand-alone clinical decision-making; see FDA 510(k) summary K221241, released in 2024.} The same caution is also consistent with growing evidence on automation bias: when clinicians are shown incorrect AI suggestions, their decisions can deteriorate, including among experienced readers.\footnote{See \cite{Dratsch}. An RSNA summary of the study reports that incorrect AI suggestions reduced reader accuracy even for experienced radiologists.}

Our analysis provides a simple economic rationale for such guardrails. 
Under the trust-region protocol, intermediate probabilities should be used as reported, whereas extremely low or extremely high probabilities should be replaced by the corresponding boundary values of the trusted range. This caps the operational leverage of near-certain predictions: an output presented as almost conclusive is treated as strong evidence, but not as decisive on its own. A conclusion approaching certainty requires corroboration from other sources, such as clinical context and human judgment.

\section{Multiple States}\label{sec:multi-dim}
In this section, we consider the general case $|\Omega|\geq 2$. First, we provide a full characterization of the robustly rationalizable solution in the case of binary private strategies. Second, we analyze the case of rich private strategies and develop the robustly rationalizable solution in a symmetric example.

\subsection{Binary Action}
Consider the setting in which the agent has only two pure private strategies, i.e., $A = \{a_1, a_2\}$ and $|\Theta|=1$. Thus, the agent has no private information and we drop the type throughout. 

Without loss of generality, we can normalize the agent's payoff from action $a_1$ to zero, $u(a_1,\omega) \equiv 0$, and denote the expected payoff from action $a_2$ when the adviser's posterior is $\mu$ by $v(\mu)\triangleq\expect_{\mu} [u(a_2,\omega)]$. Denote by $\tauv \in \Delta(\reals)$ the distribution of $v$ when $\mu$ is distributed according to $\tau\in\Delta(\Delta(\Omega))$.

Define the absolute losses and gains from taking the second action relative to the first one:
\begin{align}
    L(\tauv)= \int_{-\infty}^{0} (-v) \, \tauv(dv), \quad G(\tauv)= \int_{0}^{+\infty} v \, \tauv(dv).
\end{align}
Also, define the following threshold:
\begin{align}\label{eq:trust-region-two-action}
\alh(\tauv)=\frac{\max\{L(\tauv),G(\tauv)\}}{L(\tauv)+G(\tauv)}.
\end{align}
To rule out trivial cases and to simplify the exposition of the optimal strategy, we make the following assumption that holds in generic environments. 
\begin{assumption}[Genericity]\label{ass:binary-action-gener}
    $L(\tauv) > 0$, $G(\tauv) > 0$, $L(\tauv) \neq G(\tauv)$, $\tau(\{\mu : v(\mu) = 0\}) = 0$.
\end{assumption}

\begin{proposition}[Binary Action Solution]\label{prop:coarse-solution}
Suppose that \autoref{ass:binary-action-gener} holds. If $\al\neq \alh(\tauv)$, then the optimal solution exists, is unique, and is robustly rationalizable. In particular, if $\al>\alh(\tauv)$, then all messages are trusted, $T=\Delta(\Omega)$; if $\al<\alh(\tauv)$, then no messages are trusted, $T=\{\mu_0\}$. If $\al=\alh(\tauv)$, both full trust and no trust are optimal and robustly rationalizable.

\end{proposition}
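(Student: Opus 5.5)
Because the agent has only two actions and no private information, every strategy is described by the probability $q(m)\in[0,1]$ of taking $a_2$ after message $m$. The expected payoff against adviser posterior $\mu$ is linear in $v(\mu)$. I will therefore work directly in the one-dimensional space of values $v\sim\tauv$, and I plan to prove the claim in three steps.

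\textbf{Step 1: upper bound for any strategy.} Fix any $q$ and let $\bar q=\sup_m q(m)$ and $\underline q=\inf_m q(m)$. The misaligned adviser can send, up to $\varepsilon$, the message attaining $\underline q$ when $v(\mu)>0$ and the one attaining $\bar q$ when $v(\mu)<0$. His contribution is then at most $\underline q\, G-\bar q\, L$. The aligned contribution is $\int q(\mu) v(\mu)\,\tau(d\mu)\le \bar q\, G-\underline q\, L$. Hence
\[
V(q)\le \al(\bar q G-\underline q L)+(1-\al)(\underline q G-\bar q L).
\]
This bound is linear in $(\bar q,\underline q)$ over the region $0\le\underline q\le\bar q\le 1$, so its maximum is attained at a vertex: $(0,0)$, $(1,1)$, or $(1,0)$. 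The vertex values are $0$, $G-L$, and $\al G-\al L+(1-\al)(-L)\cdot$ adjusted. Concretely, the value at $(1,0)$ is $\al G-(1-\al)L$.

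The no-trust value is $V_0=\max\{0,G-L\}$, since $\int v\,d\tauv=G-L$. Suppose $G>L$. Full trust gives $\al G-(1-\al)L$, and this exceeds $G-L$ iff $\al L>(1-\al)G$ with the terms rearranged; precisely, it requires $\al G-(1-\al)L>G-L \iff \al(G+L)>G$. Thus the threshold is $\al>G/(G+L)=\alh$. The symmetric calculation handles $L>G$ against $0$: $\al G-(1-\al)L>0\iff \al>L/(G+L)=\alh$. In both cases the bound is achieved exactly by full trust or no trust. Full trust means $q=\mathbf 1\{v(m)>0\}$, which by genericity is well defined $\tau$-a.s.

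\textbf{Step 2: uniqueness.} When $\al\ne\alh$, the maximizing vertex of the linear bound is unique, because the genericity assumptions $L\ne G$ and $L,G>0$ make the vertex values distinct. Any optimal $q$ must therefore have $(\bar q,\underline q)$ at that vertex. Tightness of the aligned inequality $\int qv\le \bar qG-\underline qL$ then forces $q=\bar q$ on $\{v>0\}$ and $q=\underline q$ on $\{v<0\}$, $\tau$-a.s. This pins down the induced joint distribution, which gives uniqueness up to equivalence. It also shows $T=\Delta(\Omega)$ or $T=\{\mu_0\}$ as the corresponding TRS, using Theorem~\ref{thm:trust}.

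\textbf{Step 3: robust rationalizability, the main obstacle.} I need to exhibit, via Theorem~\ref{thm:minimax}, an adversarial $\beta^*$ under which the prescribed $q$ is Bayes-optimal at every message.

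\emph{No trust} ($\al<\alh$, say $G>L$, so the agent always takes $a_2$). I must construct $\beta^*$ making every on-path message $m$ have posterior value $\ge0$ (respectively $\le 0$ when $L>G$). Every $\beta$ is adversarial here, since the agent's action is constant. So I will use a jamming argument: the misaligned adviser redistributes mass from $\{v>0\}$, which has total value $(1-\al)G$, over messages with $v(m)<0$ so as to offset their aligned value $-\al L$. This is feasible because $(1-\al)G\ge \al L$ is equivalent to $\al\le G/(G+L)$. To build it, I would spread his positive-$v$ mass proportionally to $|v(m)|\tau(dm)$ on negative messages.

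\emph{Full trust} ($\al>\alh$). The adversary sends a message in $\{v<0\}$ when his $v>0$, and one in $\{v>0\}$ when his $v<0$. I must choose the conditional message distributions so that messages with $v(m)>0$ keep nonnegative posterior value, and those with $v(m)<0$ keep nonpositive value. This reduces to the same proportional construction, feasible iff $(1-\al)L\le\al G$ and $(1-\al)G\le\al L$, i.e.\ $\al\ge\alh$.

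At $\al=\alh$ both constructions remain feasible with equality, and the values coincide, so both strategies are optimal and rationalizable. The delicate part is this measure-theoretic proportional construction when $\tau$ is continuous. Existence of an optimum follows from the explicit attainment in Step 1.
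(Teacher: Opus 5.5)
Your proposal is correct and follows essentially the paper's route. You reduce any strategy to its extreme values $(\underline q,\bar q)$, which are the paper's $(\hat\sigma_L,\hat\sigma_H)$; you justify this with a self-contained upper bound, where the paper appeals to the argument behind Theorem~\ref{thm:trust}. You then optimize the resulting linear function over the triangle $0\le\underline q\le\bar q\le 1$ and rationalize the solution by proportionally reallocating the misaligned adviser's mass, as the paper does.

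The step you flag as delicate is actually immediate. Let the misaligned adviser randomize according to the fixed measures $q_+(X)=\int_X v(\mu)\,\tau(d\mu)/G$ on $\{v>0\}$ and $q_-(Y)=\int_Y (-v(\mu))\,\tau(d\mu)/L$ on $\{v<0\}$, independently of his own belief. In the no-trust case with $G>L$, send all types with $v<0$ to $q_+$, a fraction $\gamma=\alpha L/((1-\alpha)G)$ of types with $v>0$ to $q_-$, and the rest to $q_+$. Then the aggregate value of $a_2$ over any set of messages is a constant multiple of its $q_+$- or $q_-$-measure, with the constant's sign as required.
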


By \autoref{prop:coarse-solution}, generically, the optimal solution is stark: either all or none of the adviser's messages are trusted. This is in contrast to the binary-state case with a rich strategy space, where the trust region expanded continuously with the alignment probability (\autoref{prop:trust-change}). 

Notably, only the aggregate quantities $L(\tauv)$ and $G(\tauv)$ matter for the determination of the trust region; the detailed distribution of relative payoffs $\tauv$ is irrelevant. The threshold $\alh(\tauv)$ is minimized at $L(\tauv)=G(\tauv)$, in which case $\alh(\tauv)=1/2$. Hence, if $\al<1/2$, the agent never trusts the adviser, regardless of $\tauv$.\footnote{This conclusion is driven by the coarseness of the strategy space; we know from  \autoref{thm:viable} that, for any $\al>0$, the agent would optimally use a non-trivial trust region if the state space and her action space were sufficiently rich.} By contrast, for a given $\al>1/2$, the trust condition $\alh(\tauv)<\al$ is equivalent to $(L(\tauv),G(\tauv))$ lying in the cone defined by the two linear inequalities
\begin{align*}
(1-\al)L(\tauv) \leq \al G(\tauv),\quad (1-\al)G(\tauv) \leq \al L(\tauv).
\end{align*}
Thus, in binary decision problems, the adviser is beneficial only when the expected gains and losses of one action relative to the other are not too far apart.

\subsection{Rich Private Strategies}\label{sec:rich}

We now assume that the agent's indirect utility $U(\mu)$ is twice differentiable and strictly convex everywhere.  Denote by $h(\mu|\mu')$ the value of the supporting hyperplane to the graph of $U$ at $\mu'$ evaluated at $\mu$. Fixing the agent's TRS with trust region $T$, the set of messages that the misaligned adviser might send at belief $\mu$ is given by
\begin{align}
    M^*(\mu)=\arg\min_{\mu'\in T} h(\mu|\mu').
\end{align}
As we have argued earlier, a simple transformation establishes the following fact:
\begin{corollary}[Bregman distance]\label{lem:divergence}
$M^*(\mu)$ is the set of maximizers of the Bregman distance $D_U(\mu,\mu')$ between the adviser's true belief $\mu$ and a report $\mu'$ in the trust region.
\end{corollary}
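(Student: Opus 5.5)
The argument is a direct computation: for each report $\mu'$ in the trust region, rewrite the misaligned adviser's payoff as the agent's full-information value at $\mu$ minus a Bregman distance.

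\emph{Step 1: the misaligned adviser's objective.} Fix a TRS with trust region $T$. A misaligned adviser holding belief $\mu$ who sends $\mu'\in T$ induces the Bayes-optimal private strategy $\sigmah_0(\mu')$. Since the agent's type is conditionally independent of his signal given the state, the agent's expected payoff from his point of view is $U(\sigmah_0(\mu'),\mu)$. This payoff is linear in $\mu$, because $U(\sigmah,\mu)$ is an expectation over $\omega\sim\mu$. It coincides with $U(\mu')$ at $\mu=\mu'$ and lies weakly below $U$ everywhere, by the definition of $U$ as a pointwise maximum. Under the maintained assumption that $U$ is differentiable and strictly convex, it is therefore exactly the supporting hyperplane at $\mu'$:
\[
h(\mu|\mu')=U(\mu')+\nabla U(\mu')\cdot(\mu-\mu')=\nabla U(\mu')\cdot\mu .
\]
The last equality uses the homogeneous extension of $U$ from the earlier footnote, which gives $\nabla U(\mu')\cdot\mu'=U(\mu')$.

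Reports outside $T$ are mapped by $P$ into $T$ and hence induce $\sigmah_0(P(m))$ with $P(m)\in T$. They therefore cannot yield the adviser anything that some report in $T$ does not already yield, so minimizing over $T$ is without loss. This is exactly the definition of $M^*(\mu)$.

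\emph{Step 2: the Bregman identity.} By definition,
\[
D_U(\mu,\mu')=U(\mu)-U(\mu')-\nabla U(\mu')\cdot(\mu-\mu')=U(\mu)-h(\mu|\mu').
\]
Since $U(\mu)$ does not depend on the report $\mu'$, we get
\[
\arg\min_{\mu'\in T}h(\mu|\mu')=\arg\max_{\mu'\in T}D_U(\mu,\mu'),
\]
which is the claim. Existence of maximizers follows from compactness of $T$ and continuity of $h(\mu|\cdot)$, since $U$ is $C^2$.

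\emph{Main obstacle.} The only delicate point is the identification $U(\sigmah_0(\mu'),\mu)=h(\mu|\mu')$ at points $\mu'$ on the boundary of the simplex, where the gradient is taken in the homogeneous extension. There I would rely on the global twice-differentiability assumed in this subsection. Uniqueness of the Bayes-optimal strategy, which is implied by differentiability, ensures that the tangent plane at $\mu'$ is the payoff vector of the strategy actually played.
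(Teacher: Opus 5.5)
Your proof is correct and is essentially the paper's own argument. The paper likewise identifies the misaligned adviser's payoff from a trusted report $\mu'$ with the supporting hyperplane $h(\mu|\mu')=\nabla U(\mu')\cdot\mu$, which is its earlier observation that $U(\sigmah(m'),m)=\nabla U(m')\cdot m$ for $m'\in T$. It then uses $D_U(\mu,\mu')=U(\mu)-h(\mu|\mu')$, where $U(\mu)$ does not depend on $\mu'$, so minimizing $h$ over $T$ is the same as maximizing $D_U$ over $T$.
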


By \autoref{lem:divergence}, $M^*(\mu)$ are the furthest points from $\mu$ in $T$ with respect to Bregman distance. Bregman distance always strictly increases along each ray from $\mu$ and thus the misaligned adviser always chooses points on the ``opposite'' boundary of $T$. Therefore, $U$ determines the geometry of the trust region. For example, if $U(\mu)=\|\mu-\muh\|^2$ for Euclidean norm and some vector $\muh$, then the Bregman distance between $\mu$ and $\mu'$ coincides with the squared Euclidean distance between $\mu$ and $\mu'$. In such cases, the trust region can be taken to be convex.\footnote{To see why, note that any trust region $T$ can be convexified by replacing it with the intersection of sets $T_\mu$ over all $\mu \in \supp(\tau)$, where $T_\mu$  is the (convex) set of all points that are not further away from $\mu$ than any point in $T$. } However, in general, Bregman distance is not a (square of a) metric---it may not satisfy  the triangle inequality  or symmetry. Thus, the geometry of $T$ may be quite complex, and we do not expect a characterization of the trust region in full generality to be tractable.

The trust region can sometimes be found explicitly in symmetric environments---we illustrate this with an example.

\begin{example}[Spherical Environment]\label{example1} Let $U(\mu)=\val(\|\mu-\muh\|)$ for some $\muh$ and $\val$. Let the adviser's belief be symmetrically distributed over a ball $C=\{\mu:\|\mu-\muh\|\leq \rad_0\}$ with the radial density $\tau(\rad)$. Then, there exists a robustly rationalizable solution in which the trust region $T$ is a ball centered at $\muh$: $T=\{\mu:\|\mu-\muh\|\leq \rad^*(\al)\}$.

We will show this result via \autoref{thm:minimax} by explicitly constructing the corresponding TRE. The key observation, that we formalize and prove in  \autoref{lem:spherical} in Appendix \ref{proof_lem:spherical}, is that the misaligned adviser with belief $\mu$ induces an antipodal belief on the boundary of $T$. This fact combined with the symmetry of the problem implies that the adversarial strategy is the same on each ray going through the center of the ball, and hence analogous to the strategy constructed in \autoref{sec:one-dim}.

The radius $\rad^*(\al)$ can be found via the balancing condition applied to any ray going through the center. Indeed, consider any line passing through $\muh$. Consider a coordinate system on that line such that $\muh$ is located at $\rad=0$, and the points on the boundary of $C$ are located at coordinates $-\rad_0$ and $\rad_0$. Then, the belief at $\rad=\rad^*(\al)$ will be induced by the misaligned adviser only when his belief is at negative coordinates, and by the aligned adviser only when his belief is at positive coordinates. Thus, the agent's interim beliefs satisfy the TRE property if and only if:
\begin{align*}
    \rad^*=\frac{\al\int_{\rad^*}^{\rad_0}\rad \tau(\rad)d\rad-(1-\al)\int_{0}^{\rad_0}\rad \tau(\rad)d\rad}{\al\int_{\rad^*}^{\rad_0} \tau(\rad)d\rad+(1-\al)\int_{0}^{\rad_0} \tau(\rad)d\rad}.
\end{align*}
Rearranging, we obtain:
\begin{align}
(2\al-1)\int_{\rad^*}^{\rad_0}(\rad-\rad^*)\tau(\rad)d\rad=(1-\al)\left(\int_0^{\rad^*}(\rad+\rad^*)\tau(\rad)d\rad+\int_{\rad^*}^{\rad_0}2\rad^* \tau(\rad)d\rad\right).   
\end{align}
For $\al<1/2$, the equation does not admit a solution. At $\al=1/2$, $\rad^*=0$ is a solution. For $\al\in(1/2,1)$,
the left-hand side is continuously and strictly decreasing in $\rad^*$, and the right-hand side is continuously and strictly increasing in $\rad^*$, with a derivative with respect to $\rad^*$ that is strictly positive.
Therefore, the equation admits a unique solution $\rad^*(\al)$. As the left-hand side strictly increases in $\al$ and the right-hand side strictly decreases in $\al$, $\rad^*(\al)$ strictly increases in $\al$. Furthermore, $\rad^*(1)=\rad_0$.

This example features two notable properties. First,  the $\MVA$ does not depend on $U$ or the number of states; the threshold alignment probability is always $1/2$.
Second, the shape of $\val$, which captures the details of the decision problem, does not matter for the trust region $T$; the trust region is uniquely pinned down by $\tau(\rad)$. For example, if $\tau$ is uniform, $\tau\sim U[0,\rad_0]$, then $
\rad^*(\al)=\frac{1-\sqrt{1+\al-2\al^2}}{\al}\rad_0$.

\hfill$\blacksquare$
\end{example}

\section{Concluding Remarks}

\paragraph{Summary.} We studied robust decision-making when an agent relies on an informed adviser who may be misaligned. We characterized the decision rule that maximizes the agent's expected payoff guarantee over all possible forms of misalignment. We showed that every optimal policy is equivalent to a trust region policy in belief space: the agent limits exposure to manipulation while preserving value from moderately informative advice. We proved that the optimal solution can be implemented as an equilibrium of a zero-sum game between the agent and the misaligned adviser and derived minimal alignment probabilities required for advice to be robustly valuable.

\paragraph{Implications for AI use.}

Our results support a cautiously optimistic view about deploying AI in high-stakes settings. Even if misalignment is serious and   plausibly frequent, there are provably effective ways to limit the resulting harm while deriving value---provided that the human decision-maker retains final authority over actions. At the same time, our analysis makes clear that safe deployment requires concrete, pre-specified decision protocols rather than informal, case-by-case trust judgments.

The trust-region characterization translates into a simple design rule for AI-assisted choice under misalignment risk. The decision-maker should specify in advance a rule that maps model outputs into actions, separating a set of outputs that will be used directly from those that will be treated more conservatively. In some contexts, this can be implemented as a delegation-style guardrail. The AI can effectively control decisions within an approved operating range, but recommendations that push toward unusually aggressive actions are automatically clipped to the nearest admissible recommendation or escalated into a higher-friction path (additional tests, second reads, or explicit human sign-off).

More broadly, if the adviser is one component inside a larger AI system, the same idea suggests an architectural and training choice: include an interpretable interface layer that enforces the trust-region mapping between modules. This limits the chance that rare errors or adversarial behavior upstream translate into extreme downstream actions, and it provides a well-defined target for auditing and stress-testing the system as a whole.

\paragraph{Future research directions.} Our analysis points to at least two natural next steps. First, it would be useful to obtain comparative statics of the trust region with respect to the agent's decision problem, the adviser's informativeness, and alignment probability beyond the binary-state case, where the geometry of the trust region starts playing a central role. 
Second, with an eye toward applications, it is important to develop tractable computational methods for finding the trust region. Such methods would need to confront the fact that the value function mapping candidate trust regions into the agent's payoff is a convex combination of a supermodular and a submodular function, making many standard algorithms inappropriate. We leave these directions for future research.

\bibliographystyle{ecta} % Style BST file
\setlength{\bibsep}{0pt plus 0.3ex} % natbib spacing
\bibliography{bibliography.bib}

\appendix

\section{Proofs}

\subsection{Proof of   \autoref{thm:trust}}\label{proof_thm:trust}

We begin with a key lemma. 

\begin{lemma}\label{lem:bayes-plausible}
Any optimal solution $\sigma^*$ is equivalent to an optimal solution that uses Bayes-optimal private strategies for all $m\in \Delta(\Omega)$. 
\end{lemma}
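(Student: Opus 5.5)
Fix an optimal $\sigma^*\sim(\sigmah^*(m))_{m\in\M}$ and an adversarial strategy $\beta^*$ against it. The plan is to replace each non-Bayes-optimal private strategy $\sigmah^*(m)$ by a Bayes-optimal one that is at least as good for every adviser belief. The replacement must not raise the misaligned adviser's payoff from any message, and must not lower the aligned adviser's payoff. The obstacle is that $\sigmah^*(m)$ need not be optimal for any single belief. So I will first replace it by a private strategy that weakly dominates it \emph{state by state}, and then argue that undominated private strategies can be taken Bayes-optimal for some belief.

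\textbf{Step 1.} For a private strategy $\sigmah$, write $w(\sigmah)\in\mathbb{R}^N$ for its state-contingent payoff vector, $w_\omega(\sigmah)=\int_\Theta\int_A u(a,\omega,\theta)\sigmah(da|\theta)f(d\theta|\omega)$. Then $U(\sigmah,\mu)=\mu\cdot w(\sigmah)$. Let $W=\{w(\sigmah)\}$. This set is convex, since mixing private strategies type by type gives convex combinations, and compact by continuity of $u$ in $a$ and compactness of $A$ (weak-$*$ compactness of the relevant strategy spaces). Given $w^*=w(\sigmah^*(m))$, pick a point $\bar w\in W$ with $\bar w\geq w^*$ componentwise that is Pareto-maximal in $W$. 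Such a point exists because the set $\{w\in W: w\geq w^*\}$ is compact, so we can maximize a strictly positive linear functional over it. By the supporting hyperplane theorem applied to the convex set $W-\mathbb{R}^N_+$ at its boundary point $\bar w$, there is a nonzero $\lambda\geq 0$ with $\lambda\cdot\bar w=\max_{w\in W}\lambda\cdot w$. Normalizing gives a belief $\nu(m)=\lambda/\|\lambda\|_1\in\Delta(\Omega)$, and $\bar\sigmah(m)$ (the strategy with $w(\bar\sigmah(m))=\bar w$) is Bayes-optimal for $\nu(m)$. The supports may only be a face of the simplex, but $\nu(m)\in\Delta(\Omega)$ is all that is needed.

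\textbf{Step 2: payoff comparison.} Define $\sigma'$ by $\sigmah'(m)=\bar\sigmah(m)$. For the aligned adviser, every message's payoff $m\cdot w$ weakly rises, because $m\geq 0$ and $\bar w\geq w^*$. For the misaligned adviser, payoffs from each message rise as well, so the infimum over $\beta$ weakly increases: the adversary faces state-by-state better outcomes. Hence $V(\sigma')\geq V(\sigma^*)$, and $\sigma'$ is optimal.

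\textbf{Step 3: equivalence.} This is the delicate step. Because $V(\sigma')=V(\sigma^*)$ at the optimum, the weak inequalities must be equalities in expectation. In particular, on messages sent under $\id$ the strict improvement $\mu\cdot(\bar w-w^*)>0$ can happen only on a null set. Otherwise $V$ would strictly increase, since the misaligned term weakly increases as well. I will use this to argue that on-path behavior can be left unchanged: redefine $\sigmah'(m)=\sigmah^*(m)$ whenever $\sigmah^*(m)$ is already Bayes-optimal for some belief, and modify only on messages where that fails. For on-path messages where it fails, the equality $m\cdot\bar w=m\cdot w^*$ (a.s.) shows that the adversary's best responses to $\sigma^*$ remain best responses to $\sigma'$. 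So with $\beta^*$ chosen to avoid strictly improved messages (off-path messages $m\notin M$ are sent by nobody), the joint distributions over states, types, messages and actions coincide up to the change on null or off-path sets. Where literal equality of actions fails, I expect to fall back on the observation that the paper's equivalence concept allows choosing the corresponding adversarial strategies. Concretely, I would take $\beta^*$ for $\sigma'$ to be the same as for $\sigma^*$ and verify it remains adversarial. This step, pinning down equivalence rather than just optimality, is the main obstacle. Finally, messages with $m$ mapped to $\nu(m)$ give the claimed form: Bayes-optimal private strategies for all $m\in\Delta(\Omega)$.
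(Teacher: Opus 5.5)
Your Steps 1--2 correctly show that the modified strategy $\sigma'$ is optimal. Step 3, which you yourself flag as unresolved, has a genuine gap, and it comes from a choice made in Step 1: you only require $\bar w\geq w^*$ weakly. With weak domination, a modified message can leave the payoff unchanged at some beliefs while changing the private strategy. For instance, take $W=\mathrm{conv}\{(0,0),(0,3),(1,1)\}$ and $w^*=(0,0)$. Maximizing $w_1+w_2$ gives $\bar w=(0,3)$, so $\mu\cdot(\bar w-w^*)=0$ at $\mu=(1,0)$.

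This breaks both halves of Step 3. On the aligned side, your observation that strictly improved messages are $\tau$-null does not imply that \emph{modified} messages are null. On the adversary's side, the inference is a non sequitur. The misaligned adviser with belief $\mu$ evaluates a report $m$ by $\mu\cdot w(m)$, not by $m\cdot w(m)$. So $m\cdot\bar w=m\cdot w^*$ says nothing about whether $\beta^*$ stays adversarial; if $\beta^*$ sends a modified $m$ from some $\mu$ with $\mu\cdot(\bar w-w^*)>0$, it does not. Conversely, an adversary against $\sigma'$ may use a modified message at unchanged payoff but with different actions. Equivalence requires identical joint distributions \emph{including actions}, so the freedom to choose adversarial strategies does not rescue the argument on its own.

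The missing idea is \emph{strict} domination. Suppose $\sigmah^*(m)$ is Bayes-optimal for no belief. Then $w^*$ lies off the weak Pareto frontier of $W$, because your separation argument shows that frontier equals the set of maximizers of functionals $\lambda\cdot w$ with nonzero $\lambda\geq 0$. Hence some $w''\in W$ satisfies $w''_\omega>w^*_\omega$ for all $\omega$. Take $\bar w$ Pareto-maximal in $\{w\in W: w\geq w''\}$. This gives a Bayes-optimal replacement with $\mu\cdot(\bar w-w^*)>0$ for \emph{every} $\mu\in\Delta(\Omega)$.

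Equivalence then follows quickly, which is the paper's route. Since $V(\sigma')=V(\sigma^*)$ and both the aligned and misaligned terms rise weakly pointwise, two things hold. First, the modified messages must be $\tau$-null, or the aligned term would strictly increase. Second, $\inf_m \mu\cdot w'(m)=\inf_m \mu\cdot w^*(m)$ for $\tau$-a.e.\ $\mu$. Now let $m_0$ attain the infimum against $\sigma'$. It satisfies $\mu\cdot w'(m_0)\leq \mu\cdot w^*(m_0)$, so $m_0$ cannot be a modified message. Thus an adversarial strategy against $\sigma'$ uses only unmodified messages. It is therefore also adversarial against $\sigma^*$, and the two pairs induce the same joint distribution over states, types, messages, and actions.
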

\begin{proof}
Consider the set of state-contingent payoff profiles that are feasible for the agent (cf. \cite{DovalSmolin2024Persuasion}):
\begin{align*}
    W=\{\w\in\reals^{N}:\exists\,\sigmah,\  \w(\omega)=\expect_{\sigmah}[ u(a,\omega,\theta)|\omega],\ \forall\,\omega\in\Omega\}.
\end{align*}
Since $\theta$ and $s$ are conditionally independent, if the adviser has posterior $s$
and the agent plays a private strategy that corresponds to payoff profile $\w$, the resulting agent's expected payoff is $ \w\cdot s$.

The set $W$ is convex, because a convex combination of the private strategies delivers a convex combination of their respective payoff profiles. The set  $W$ is compact, because for any $\lambda\in \reals^{|\Omega|}$, $\max_{\w\in W} \lambda\cdot  \w$ exists and is attained by some $\w\in W$ by the boundedness and continuity of $u$ in $a$ and the measurable maximum theorem. 

Denote  the (weak) Pareto frontier of $W$ by  $W^P$: 
\begin{align*}
    W^P=\{\w\in W: \not\exists\, \w'\in W, \forall\,\omega\in\Omega, \w'(\omega)>\w(\omega) \}.
\end{align*}
Since $W$ is convex and compact, by the supporting hyperplane theorem, a private strategy $\sigmah$ is Bayes-optimal for some belief if and only if it delivers a payoff profile in $W^P$.  Therefore, if $\sigmah$ is not Bayes-optimal, there exists a dominating  
$\sigmah'$, which can be taken to be Bayes-optimal itself, such that for all $\omega\in 
\Omega$, $\expect_{\sigmah'}[ u(a,\omega,\theta)|\omega]>\expect_{\sigmah}[ u(a,\omega,\theta)|\omega]$.  

Take an optimal solution $\sigma^*$ and, for every message $m\in \Delta(\Omega)$, if $\sigmah^*(m)$ is not Bayes-optimal for some belief, replace it with a Bayes-optimal dominating strategy $\sigmah'(m)$. The new strategy, which we call $\sigma_0$, must still be optimal. Indeed,   the agent's payoff is 
\begin{align*}
   \expect_{\mu\sim \tau} [\al\, \w(\sigmah(\mu))\cdot\mu+(1-\al)\inf_{m\in \Delta(\Omega)} \{\w(\sigmah(m))\cdot\mu\}],
\end{align*}
which pointwise increases after the change. Moreover, the ex-ante expected payoff must stay the same since $\sigma^*$ was optimal to begin with; in particular, $\sigma_0$  makes changes to the strategy only for messages $m$ that have joint probability zero. Thus, $\sigma^*$ is equivalent to $\sigma_0$.
\end{proof}

We can now finish the proof of \autoref{thm:trust}. 
Pick any optimal solution $\sigma^*$. By \autoref{lem:bayes-plausible}, $\sigma^*$  is equivalent to an optimal strategy that uses only Bayes-optimal private strategies. Denote by $\Sigma_0$ the set of those private strategies, and let $T_0$ be the closure of the set of beliefs at which those private strategies are Bayes-optimal. By continuity, taking the closure does not  affect the expected payoff of the strategy $\sigma^*$ in the worst-case scenario.

Observe that the agent's expected payoff conditional on the adviser being misaligned is pinned down by the set $\Sigma_0$; it does not depend on how individual messages are mapped to different elements of $\Sigma_0$ (because the misaligned adviser can report any message).  Thus, the  \textit{mapping} from messages to the private strategies in $\Sigma_0$ must maximize the expected payoff conditional on the adviser being aligned. Since the aligned adviser is non-strategic, maximization can be performed pointwise, message by message (without loss of optimality, also for messages that are sent with probability zero by the aligned adviser). In particular, for $m\in T_0$, we can set $\sigmah^*(m)$ to be the Bayes-optimal strategy for $m$; for $m\notin T_0$, we can set $\sigmah^*(m)=\sigmah^*(P(m))$  where $P(m)\in\arg\max_{m'\in T_0} U(\sigmah^*(m'),m)$. This way we have constructed a TRS (with the trust region $T_0$) that is equivalent to $\sigma^*$---and is hence optimal.\footnote{It is equivalent to $\sigma^*$ because it is weakly better than $\sigma^*$ and $\sigma^*$ was optimal.}

We now show that for any optimal TRS $\sigma^*$, the trust region $T_0$ can be enlarged (while preserving the payoffs) to a connected trust region $T_1$. Assume that $T_0$ is not connected and take any $m_1,m_2\in T_0$ that belong to different connected components of $T_0$: $m_1\in T_0^1$ and $m_2\in T_0^2$.  Consider the welfare profiles $\w_1\triangleq\w(\sigmah^*(m_1))$ and $\w_2\triangleq\w(\sigmah^*(m_2))$ induced by those messages in the considered solution.  Define the  subset of Pareto optimal welfare profiles (as in the proof of Lemma \ref{lem:bayes-plausible}) that dominate some weighted average of those profiles $\w(\gamma)\triangleq \gamma \w_1+(1-\gamma)\w_2$:
\begin{align*}
    W^D(m_1,m_2)=\{\w\in W^P:\exists\,\gamma\in[0,1], \w\geq \w(\gamma)\}.
\end{align*}
Consider any $\w\in W^D(m_1,m_2)$ that dominates $\w(\gamma)$ for some $\gamma$. Since $\w\in W^P$, $\w$ is generated by a private strategy $\sigmah(\w)$ Bayes-optimal at a set of beliefs $M_1(\w)$. We  enlarge $T_0$ by adding to it the messages in $M_1(\w)\setminus T_0$  together with the prescription to play $\sigmah(\w)$ at those messages. Doing so does not decrease the payoff from the misaligned adviser because he could already send  messages $m_1$ and $m_2$, and it weakly increases the payoff from the aligned adviser  because the trust region increases (in the sense of set inclusion).  

Since $W$ is convex,  $W^D(m_1,m_2)$ is connected. Furthermore,  $M_1(\w)$ is upper-hemicontinuous with connected (convex) values because it is a normal-cone correspondence. Therefore, the union  $\bigcup_{\w\in W^D(m_1,m_2)} M_1(\w)$ is connected and contains $m_1$ and $m_2$. Therefore, adding these beliefs to the original trust region connects the components $T_0^1$ and $T_0^2$, with the trust region weakly expanding and remaining optimal. Since this modification can be performed for all connected components of $T_0$, this modification results in a connected optimal trust region $T_1$.

Finally, by continuity of payoffs, we can without loss of generality consider the closure of the set of used private strategies, and hence the trust region can be chosen to be equal to $T=\mathrm{cl}\,T_1$, which is a compact and connected subset of $\Delta(\Omega)$. Call the new strategy constructed this way $\sigma_1$.

By construction, the strategy $\sigma_1$ is optimal. Moreover, the expected payoff must stay the same since $\sigma^*$ was optimal to begin with; therefore, the new strategy  makes changes to the strategy only for messages that have joint probability zero in equilibrium. Thus, $\sigma^*$ is equivalent to $\sigma_1$.

\subsection{Proof of \autoref{thm:minimax}}\label{proof_thm:minimax}

Suppose $M$ and $\Theta$ are finite. For any given strategy of the misaligned adviser (which was assumed to only use messages in $M$) and the agent, $(\beta,\sigma)$, the agent's payoff is, with a slight overload of notation for $V$,
\begin{align*}
   V(\beta,\sigma)\triangleq\al& \sum_{\mu\in M,\omega\in\Omega,\theta\in\Theta}\tau(\mu)\mu(\omega)f(\theta|\omega)\int_{A}     u(a,\omega,\theta)  \sigma(da|\mu,\theta)+\\
   &(1-\al)\sum_{\mu,m\in M,\omega\in\Omega,\theta\in\Theta}\tau(\mu)\mu(\omega)\beta(m|\mu)f(\theta|\omega)\int_{A}     u(a,\omega,\theta)  \sigma(da|m,\theta).
\end{align*}
Clearly, $\mathcal{B}$ and $\Sigma$ are convex. Since $M$ is finite, $\mathcal{B}=\times_{m\in M} \Delta (M)$ is compact. Since $M$ and $\Theta$ are finite and $\Delta(A)$ can be equipped with the weak* topology, $\Sigma=\times_{m\in M,\theta\in\Theta} \Delta (A)$ is compact.  $V(\beta,\sigma)$ is affine in $\beta$ and in $\sigma$; therefore it is concave-convexlike in \cite{Sion1958Minimax}'s terminology.  For each $\sigma\in \Sigma$, $V(\beta,\sigma)$ is continuous in $\beta$. For each $\beta\in \mathcal{B}$, $V(\beta,\sigma)$ is continuous in $\sigma$.  Therefore, a minimax theorem applies in its infsup variation (e.g., Theorem 4.2', \cite{Sion1958Minimax}) and
\begin{align*}
\sup_{\sigma\in\Sigma}\inf_{\beta\in \mathcal{B}}V(\beta,\sigma)=\inf_{\beta\in \mathcal{B}}\sup_{\sigma\in\Sigma}V(\beta,\sigma).
\end{align*}
Furthermore, for any given $\beta$, $\phi(\beta)\triangleq\sup_{\sigma\in\Sigma}V(\beta,\sigma)$ is attained  because $\Sigma$ is compact and $V(\beta,\sigma)$ is continuous in $\sigma$. Similarly, for any given $\sigma$, $\psi(\sigma)\triangleq\inf_{\beta\in \mathcal{B}}V(\beta,\sigma)$ is attained because $\mathcal{B}$ is compact and $V(\beta,\sigma)$ is  continuous in $\beta$. Because $V(\beta,\sigma)$ is continuous, $\phi(\beta)$ is lower-semicontinuous and $\psi(\sigma)$ is upper-semicontinuous. Thus, we can choose  $\sigma^*\in\arg \max_{\sigma\in\Sigma} \psi(\sigma)$ and $\beta^*\in\arg\min_{\beta\in \mathcal{B}}\phi(\beta)$. Then, $(\beta^*,\sigma^*)$ form a saddle point:  
\begin{align}\label{eq:saddle-point}
    V(\beta^*,\sigma)\leq V(\beta^*,\sigma^*)\leq V(\beta,\sigma^*),\quad\forall\,\beta\in \mathcal{B},\sigma\in\Sigma.
\end{align}
Therefore, $\beta^*$ is an adversarial adviser's strategy to $\sigma^*$, whereas $\sigma^*$ is a best-response of the agent to $\beta^*$. Since $\al>0$ and all $m\in M$ are on-path, the latter implies that after any $m\in M$, the private strategy $\sigmah^*(m)$ is Bayes-optimal given  $\beta^*$, and hence $\sigma^*$ is robustly rationalizable. 

Conversely, for any $M$ and $\Theta$, consider  $(\beta^*,\sigma^*)$ such that $\sigma^*$ is robustly rationalizable and $\beta^*$ is adversarial against $\sigma^*$, i.e., $(\beta^*,\sigma^*)$ form a saddle point with property (\ref{eq:saddle-point}). Then, for any $\sigma\in\Sigma$:
\begin{align*}
    V(\sigma)=\inf_{\beta\in \mathcal{B}} V(\beta,\sigma)\leq V(\beta^*,\sigma)\leq V(\beta^*,\sigma^*)=\min_{\beta\in \mathcal{B}} V(\beta,\sigma^*)=V(\sigma^*),
\end{align*}
where the third comparison uses the saddle property and the fourth comparison uses the fact that $\beta^*$ is adversarial to $\sigma^*$. Therefore, $\sigma^*$ is an optimal solution.

\subsection{Proof of \autoref{thm:viable}}\label{proof_thm:viable}
\textbf{Notation:} In this section, we denote by $I_K$ a unit matrix of dimension $K$, by $1_K$ a vector of ones of dimension $K$, by $0_{N\times K}$ a matrix of zeros of dimension $N\times K$, by $e^i_K$ an $i$th standard basis vector of dimension $K$, by $x_{i,k}$ a $k$th element of vector $x_i$, by $\mathrm{diag}\,x$ a diagonal matrix with vector $x$ on the main diagonal, and by $X^\top$ a transpose of a matrix $X$.

Since $\mu_0$ has full support, we can equivalently identify adviser's information with a (row) stochastic $N\times K$ matrix $\Tech$, where $\Tech_{ij}$ is the probability of the $j$th signal observed by the adviser in the $i$th state. Moreover, $\rank \Tech=R(\tau)$.\footnote{A matrix of adviser's posteriors can be computed by Bayes' rule as $(\mu(s))_{s\in S}=(\mathrm{diag}(\mu_0(\omega))_{\omega\in\Omega})\Tech (\mathrm{diag}(\tau(s))_{s\in S})^{-1}$. The diagonal matrices are invertible and the multiplication by an invertible matrix preserves the rank.}

We identify the  strategy of the misaligned adviser with a stochastic $K\times K$ matrix $\Sen$.   Since the aligned adviser reports truthfully, the overall adviser's strategy can be written as a garbling of his information:
\begin{align}
    G(\Sen)\triangleq \al I_K+(1-\al)\Sen.
\end{align}
By \cite{Blackwell1951Comparison}, the $\MVA$ is a maximal $\al$ for which there exists a stochastic matrix $\Sen$ such that $\Tech G(\Sen)$ is Blackwell uninformative; we show below that it is attained. This also implies that $\MVA$ depends on $\tau$ only via $\Tech$, so we will write $\MVA(\Tech)$. 

We start with preliminary observations. First, note that $G_{kk}\geq\al$ and $G 1_K=1_K$. Second, note that $\Tech G(\Sen)$ is uninformative if and only if all of its rows are equal to each other, that is if and only if  
\begin{align}\label{eq:D_1}
    D(\Tech)G(\Sen)=D(\Tech)(\al I_K+(1-\al)\Sen)=0_{(N-1)\times K},
\end{align}
where $D(\Tech)$ is the row-difference matrix of $\Tech$:
\begin{align*}
    D(\Tech)\triangleq \begin{pmatrix}(\tech_2-\tech_1)^\top\\ \vdots\\ (\tech_N-\tech_1)^\top\end{pmatrix},
\end{align*}
and $\tech_i^\top$ is the $i$th row of $\Tech$. Consider the auxiliary finite linear program:
\begin{align}\label{eq:alpha-lp}
   \Lambda(\Tech)=\max_{G\in\reals^{K\times K},\,\al\in\reals}\quad & \al\\
\text{s.t.}\quad
& G\geq \al I_{K},\ G 1_K=1_K,\label{eq:alpha-lp-1}\\ 
& D(\Tech) G = 0_{(N-1)\times K}\label{eq:alpha-lp-2}.
\end{align}
\begin{lemma}\label{lem:alpha-1}
$\MVA(\Tech)=\Lambda(\Tech)$.
\end{lemma}
\begin{proof}
We need to show that there exists a stochastic matrix $\Sen$ such that $\Tech G(\Sen)$ is Blackwell uninformative if and only if $\al\leq \Lambda(\Tech)$.

\textbf{Only if:} For any given $\al$, if $\Sen$ is such that $\Tech G(\Sen)$ is Blackwell uninformative, then we showed that $G(\Sen)$ must satisfy  conditions  (\ref{eq:alpha-lp-1}-\ref{eq:alpha-lp-2}). By the maximization nature of the problem, if $\al>\Lambda(\Tech)$, those conditions cannot be satisfied.

\textbf{If:} If $\al\leq\Lambda(\Tech)$, then there exists $G$ that satisfies  conditions (\ref{eq:alpha-lp-1}-\ref{eq:alpha-lp-2}) (e.g., the argmax). If $\Lambda(\Tech)=1$, then $\Sen$ can be arbitrary. Otherwise, set $\Sen=(G-\al I_K)/(1-\al)$. It is straightforward that the so-defined $\Sen$ is a stochastic matrix and by construction  $\Tech G(\Sen)$ is uninformative.
\end{proof}
\autoref{lem:alpha-1} provides a computationally tractable characterization of $\MVA$ for any given $\Tech$ and sets the stage for the rest of the proof, which we split into two lemmas.
\begin{lemma}
$\MVA(\Tech)\in[1/R(\Tech),1/2]$. If $R(\Tech)=K$, then $\MVA(\Tech)=1/K$. 
\end{lemma}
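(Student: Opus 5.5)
The plan is to work entirely with the linear program $\Lambda(\Tech)$ from \autoref{lem:alpha-1}, and to use the column form of constraint \eqref{eq:alpha-lp-2}. Write $g_j$ for the $j$th column of $G$. Constraint \eqref{eq:alpha-lp-2} says exactly that $d^\top g_j=0$ for every $d$ in the row space of $D(\Tech)$. Equivalently, $\Tech g_j\in\mathrm{span}(1_N)$ for each $j$. The three claims, the upper bound $1/2$, the value $1/K$ when $R(\Tech)=K$, and the lower bound $1/R(\Tech)$, are handled separately.

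\emph{Upper bound $1/2$.} Since the adviser is informative, some row difference $d=\tech_i-\tech_1$ is nonzero. Because rows of $\Tech$ are stochastic, $d\cdot 1_K=0$. Let $P=\{k:d_k>0\}$ and $d_+=\sum_{k\in P}d_k>0$. Summing the identities $d^\top g_j=0$ over $j\in P$ gives $\sum_k d_k G_{k,P}=0$, where $G_{k,P}=\sum_{j\in P}G_{kj}$. Two bounds on $G_{k,P}$ follow from \eqref{eq:alpha-lp-1} together with $G\geq 0$ (which the LP inherits from $G=G(\Sen)$ for a stochastic $\Sen$):
\begin{itemize}
\item for $k\in P$, $G_{k,P}\geq G_{kk}\geq\al$;
\item for $k\notin P$, $G_{k,P}\leq 1-G_{kk}\leq 1-\al$.
\end{itemize}
Hence $0=\sum_k d_kG_{k,P}\geq \al d_+-(1-\al)d_+$, so $\al\leq 1/2$.

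\emph{The case $R(\Tech)=K$.} Now $\Tech$ has full column rank and $\Tech 1_K=1_N$. So $\Tech g_j=c_j1_N$ forces $g_j=c_j1_K$, i.e.\ $G=1_Kc^\top$. Then $G1_K=1_K$ gives $\sum_j c_j=1$, and $G_{jj}=c_j\geq\al$ gives $\al\leq 1/K$. The choice $c=1_K/K$ (the misaligned adviser reports a uniformly random message) attains $1/K$. Hence $\Lambda=1/K$.

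\emph{Lower bound $1/R(\Tech)$ in general.} This is the main obstacle, because we must exhibit a jamming $G$ with diagonal at least $1/R$ when columns are linearly dependent.

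My first attempt is LP duality. By Farkas' lemma, infeasibility of \eqref{eq:alpha-lp-1}--\eqref{eq:alpha-lp-2} at a given $\al$ is equivalent to a certificate. The certificate is a family of vectors $h_j$ lying in the $R$-dimensional space $\mathrm{rowspace}(\Tech)$, which is spanned by $1_K$ and the rows of $D(\Tech)$, satisfying an inequality of the form
\[
\al\sum_j \bigl(h_{jj}-\min_k h_{kj}\bigr)>\sum_k \max_j\bigl(h_{kj}-\min_{k'}h_{k'j}\bigr).
\]
Only the signs here matter: after normalizing each $h_j$ by subtracting its minimum, the left side weighs diagonal entries by $\al$ and the right side sums row maxima.

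I would then show that any such family of vectors in an $R$-dimensional space satisfies the reverse inequality at $\al=1/R$. The idea is a Carath\'eodory/rank argument. In an $R$-dimensional subspace of $\reals^K$ one can find at most $R$ coordinates that control the others, and each diagonal entry $h_{jj}$ is dominated by the row maxima of an $R$-element basis set of rows. This generalizes the $R=K$ computation above, where $h_j$ is forced to be constant.

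If this duality route becomes messy, the fallback is a direct construction. Choose $R$ linearly independent columns of $\Tech$ and express the others as linear combinations of them. Reduce to the full-rank case on that basis, where the uniform jamming $1_Kc^\top$ works. Then redistribute the mass of dependent signals using the combination coefficients, checking that nonnegativity and the diagonal bound $1/R$ survive.
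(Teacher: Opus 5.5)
Your upper bound is correct, and it takes a different route from the paper. The paper notes that the rows of $D(\Tech)$ are left eigenvectors of the stochastic matrix $\Sen$ with eigenvalue $-\al/(1-\al)$ and invokes the spectral-radius bound. You instead sum the column constraints over the positive support $P$ of a nonzero row difference and bound $G_{k,P}$ directly, which is elementary and self-contained. Your $R(\Tech)=K$ argument is essentially the paper's.

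\textbf{The gap.} The substantive part of the lemma is the lower bound $\MVA(\Tech)\geq 1/R(\Tech)$, and you do not prove it. The duality route fails as written. Taking $j=k$ in the row maximum shows that the right-hand side of your certificate inequality is at least $\sum_j(h_{jj}-\min_k h_{kj})$. That is at least the left-hand side whenever $\al\le 1$, so the strict inequality can never hold. Your claimed Farkas equivalence would then declare the LP feasible for every $\al\le 1$, contradicting the $1/2$ bound you just established; the dual is misderived, and a weight such as $1-\al$ is missing. The step ``$R$ coordinates that control the others'' is not an argument. The fallback also fails for two reasons:
\begin{itemize}
\item The sum of $R$ linearly independent columns of $\Tech$ is in general not proportional to $1_N$, so uniform mixing among those messages does not make them uninformative.
\item Expressing the remaining columns in that basis typically involves negative coefficients, so ``redistributing mass'' along them need not produce a nonnegative $G$.
\end{itemize}

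\textbf{What is needed.} You need an actual feasible $G$ with $G_{jj}\ge 1/R$ when columns are linearly dependent.
\begin{itemize}
\item \emph{The paper's construction.} Take an Auerbach basis $w_1,\dots,w_{R-1}$ of the row space of $D(\Tech)$ in $(\reals^K,\|\cdot\|_1)$, with biorthogonal vectors $x_i$ of $\ell_\infty$-norm one obtained via Hahn--Banach. Set $\Sen=\frac{1}{R-1}(1_K\overline w^\top-XW^\top)$. Nonnegativity uses $|x_{i,j}|\le 1$, stochasticity uses $W^\top 1_K=0$, and $D(\Tech)\Sen=-\frac{1}{R-1}D(\Tech)$ follows from $W^\top X=I_{R-1}$.
\item \emph{Making your Carath\'eodory idea rigorous on the primal side.} The cone $C=\{g\ge 0: D(\Tech)g=0\}$ is pointed and contains $1_K$. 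Each of its extreme rays has support of size at most $\rank D(\Tech)+1=R$. Write $1_K=\sum_\ell \lambda_\ell u_\ell$ with extreme rays $u_\ell$ and $\lambda_\ell>0$, and set $g_j=\sum_{\ell:\,j\in\mathrm{supp}\,u_\ell}\lambda_\ell u_\ell/|\mathrm{supp}\,u_\ell|$. Then $D(\Tech)G=0$, $G\ge 0$, $G1_K=1_K$, and $G_{jj}\ge \frac{1}{R}\sum_\ell \lambda_\ell u_{\ell,j}=\frac{1}{R}$.
\end{itemize}
Either argument closes the gap. As it stands, your proposal proves only $\MVA(\Tech)\le 1/2$ and the case $R(\Tech)=K$.
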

\begin{proof}To ease notation, in the proof we omit the dependence of $R$ on $\Tech$.

\noindent\textbf{1.)} $\MVA(\Tech)\leq 1/2$. 

If $\al$ and $G$ satisfy (\ref{eq:alpha-lp-1}-\ref{eq:alpha-lp-2}), then $\Sen=(G-\al I_K)/(1-\al)$ is a stochastic matrix and 
\begin{align}\label{eq:eigen}
    D(\Tech)\Sen=-\frac{\al}{1-\al}D(\Tech).
\end{align}
In other words, the rows of $D(\Tech)$ are left eigenvectors of $\Sen$ associated with eigenvalue $-\al/(1-\al)$. Since $\Sen$ is stochastic, its spectral radius equals $1$. Thus, $|-\al/(1-\al)|\leq 1$ and $\al\leq 1/2$. It follows that $\MVA(\Tech)\leq 1/2$.

\noindent\textbf{2.)} If $R=K$, then $\MVA(\Tech)=1/K$. 

If $R=K$, then $K\leq N$ and $\mathrm{rank} D(\Tech)=K-1$. Thus, $\mathrm{rank}\,\mathrm{ker} D(\Tech)=K-(K-1)=1$ and, because $D1_K=1_{N-1}-1_{N-1}=0_{N-1}$, $\mathrm{ker}D=\mathrm{span}\{1_K\}$. Thus, for $(G,\al)$ to satisfy (\ref{eq:alpha-lp-2}),  every column of $G$ must be a multiple of $1_K$. But since $G$ is stochastic, it follows that $\sum_{k=1}^K G_{kk}=1$ and $\min_k G_{kk}\leq 1/K$. To further satisfy (\ref{eq:alpha-lp-1}), it must be that  $\al\leq 1/K$. Thus, $\MVA(\Tech)\leq 1/K$.

At the same time, if $\al\leq 1/K$, then $(G,\al)$ satisfy  (\ref{eq:alpha-lp-1}-\ref{eq:alpha-lp-2}) for $G=1/K 1_K 1^\top_K$.  In this case, $G$ is uninformative, not only $\Tech G$,  so the misaligned adviser can make the signal to be uninformative about his estimate, not only about the state.  It follows that $\MVA\geq 1/K$ and, therefore, $\MVA(\Tech)=1/K$.

\noindent\textbf{3.)} $\MVA\geq 1/R$. 

Let  $\al=1/R$ (recall that $R\geq 2$). Consider the normed space $(\reals^K,\|\cdot\|_1)$ and its linear $(R-1)$-dimensional subspace $\mathbb W$ spanned by rows of $D(\Tech)$. By the Auerbach basis theorem, there exist vectors $w_1,\dots,w_{R-1}\in \mathbb W$ and $x_1,\dots,x_{R-1}\in \reals^K$ such that\footnote{By the Auerbach theorem, there exist  $v_1,\dots,v_{R-1}\in \mathbb W$ and $\phi_1,\dots,\phi_{R-1}\in \mathbb W^*$ such that $\|v_i\|_1=1$, $\|\phi_i\|_{\mathbb W^*}=1$, and $\phi_i(v_j)=\delta_{ij}$ (Section II.E, Lemma 11 in \cite{Wojtaszczyk1991Banach}; see also \cite{GershkovMoldovanuShi2025OrderIndependence} for another recent application). By Hahn-Banach theorem, these $\phi_i$, operating on $\mathbb W$, can be extended to $\tilde\phi_i$, operating on $\reals^K$, without a change in their norm. By the duality between spaces $l_1$ and $l_\infty$, for each $i$, there exists $x_i\in\reals^K$ such that $\tilde\phi_i(z)=z^\top x_i$ and $\|x_i\|_\infty=\|\tilde\phi_i\|=1$. Then, for $w\in \mathbb W$, $w_i^\top x_j=\tilde\phi_j(w_i)=\phi_j(w_i)=\delta_{ij}$.}
\begin{align*}
\|w_i\|_1=1,\quad \|x_i\|_\infty=1,\quad w_i^\top x_j=\delta_{ij},\quad 1\leq i,j\leq R-1.
\end{align*}
Define the corresponding matrices $W\triangleq (w_1,\dots,w_{R-1})$, $X\triangleq (x_1,\dots,x_{R-1})$. By construction,
\begin{align}\label{eq:auerbach-1}
    W^\top X=I_{R-1},
\end{align}
and by properties of $D(\Tech)$,
\begin{align}\label{eq:auerbach-2}
    W^\top 1_K=0_{R-1}.
\end{align}
Define the vector of weights of rows of $W$, $\overline w\in \reals^K$, as $\overline w_k\triangleq \sum_{i=1}^{R-1}|w_{i,k}|$. Since $\|w_i\|_1=1$, we have
\begin{align}\label{eq:auerbach-3}
    \sum_{k=1}^K \overline w_k=\sum_{i=1}^{R-1}\|w_i\|_1=R-1.
\end{align}
We explicitly construct the desired strategy of the misaligned adviser $\Sen$ as:
\begin{align}
    \Sen=\frac{1}{R-1}(1_K \overline w^\top-XW^\top).
\end{align}

\noindent\emph{Nonnegativity.} For all $j,k$,
\begin{align*}
    \Sen_{jk}=\frac{1}{R-1}(\sum_{i=1}^{R-1}|w_{i,k}|-\sum_{i=1}^{R-1} x_{i,j} w_{i,k})\geq0,
\end{align*}
because $|x_{i,j}|\leq \|x_i\|_\infty=1$.

\noindent\emph{Stochasticity.} By (\ref{eq:auerbach-2}) and (\ref{eq:auerbach-3}):
\begin{align*}
    \Sen 1_K=\frac{1}{R-1}(1_K (\overline w^\top 1_K)-X(W^\top 1_K))=\frac{1}{R-1}(1_K (R-1)-0_{K})=1_K.
\end{align*}

\noindent\emph{Uninformativeness.} By (\ref{eq:auerbach-1}) and  (\ref{eq:auerbach-2}):
\begin{align*}
    W^\top \Sen=\frac{1}{R-1}((W^\top 1_K) \overline w^\top-(W^\top X)W^\top)=\frac{1}{R-1}(0_{R-1}-W^\top)=-\frac{1}{R-1}W^\top.
\end{align*}
Since by construction columns of $W$ form a basis in the row space of $D(\Tech)$, it follows that
\begin{align*}
    D(\Tech)\Sen=-\frac{1}{R-1}D(\Tech).
\end{align*}
As $\al=1/R$, this corresponds exactly to constraint \eqref{eq:alpha-lp-2}. The result follows.
\end{proof}

\begin{lemma} For any $N\geq2$ and $\al\in[1/N,1/2]$, there exist $K$ and $\Tech$ such that $\MVA(\Tech)=\al$.
\end{lemma}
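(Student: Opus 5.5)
The plan is to fix $N$ and build a continuous one-parameter family of adviser information structures $\Tech_t$, $t\in[0,1]$. The family should have $\MVA(\Tech_0)=1/2$ and $\MVA(\Tech_1)=1/N$. An intermediate-value argument applied to the linear program $\Lambda(\Tech)$ of \autoref{lem:alpha-1} then delivers every $\al\in[1/N,1/2]$. Throughout, the proof of the previous lemma shows that the constraint \eqref{eq:alpha-lp-2} depends on $\Tech$ only through the row space $\mathbb W(\Tech)\subseteq 1_K^\perp$ of $D(\Tech)$. Equivalently, $\MVA$ is the largest $\al$ for which some stochastic $\Sen$ satisfies $w^\top \Sen=-\frac{\al}{1-\al}w^\top$ for all $w\in\mathbb W$.

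\emph{Step 1 (realizability).} I will first show that every subspace $\mathbb W\subseteq 1_K^\perp$ of dimension at most $N-1$ is the row space of $D(\Tech)$ for some strictly positive row-stochastic $N\times K$ matrix $\Tech$. Take a basis $d_2,\dots,d_N$ of $\mathbb W$, padded with repeats or zeros if needed, and set $\tech_1=\frac1K 1_K$ and $\tech_i=\frac1K 1_K+\varepsilon d_i$ for small $\varepsilon>0$. The rows sum to one because $d_i\perp 1_K$, and they stay positive for $\varepsilon$ small. Moreover, if $\mathbb W_t$ moves continuously with constant dimension, one can choose $d_i(t)$, and hence $\Tech_t$, continuously. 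This reduces the problem to choosing a continuous path of subspaces.

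\emph{Step 2 (endpoints).} I will take $K=2N-2$ and two $(N-1)$-dimensional subspaces of $1_K^\perp$.
\begin{itemize}
\item $\mathbb W_0=\mathrm{span}\{e^{2i-1}_K-e^{2i}_K: i=1,\dots,N-1\}$. The block swap matrix $\Sen$, which exchanges $2i-1$ and $2i$, is stochastic and has eigenvalue $-1$ on $\mathbb W_0$. Hence $\MVA\ge 1/2$, and part 1.) of the previous lemma gives equality.
\item $\mathbb W_1=\mathrm{span}\{e^j_K-e^1_K: j=2,\dots,N\}$. Here the signals $N+1,\dots,K$ carry no information. The same argument as in part 2.) of the previous lemma shows that the columns of $G$ restricted to the first $N$ signals must be constant on those $N$ coordinates. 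This forces $\al\le 1/N$, while $G=\frac1N 1_K(1_N^\top,0)$ attains it, so $\MVA=1/N$.
\end{itemize}
The Grassmannian of $(N-1)$-planes in $1_K^\perp$ is connected, so a continuous path $\mathbb W_t$ joining $\mathbb W_0$ to $\mathbb W_1$ exists.

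\emph{Step 3 (continuity of $\Lambda$ along the path).} This is the main obstacle. Upper semicontinuity of $t\mapsto\Lambda(\Tech_t)$ is routine: the feasible pairs $(G,\al)$ lie in a compact set because $G$ is stochastic, and the constraints are closed in $(G,\al,t)$. Lower semicontinuity can fail in general linear programs when the constraint matrix varies. However, along the path the rank of $D(\Tech_t)$ is constant, so the constraint reads $G^\top$ maps into the fixed-dimension subspace $\mathbb W_t^\perp$-type set, moving continuously with $t$.

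My first attempt is to write $\mathbb W_t=Q_t\mathbb W_0$ with $Q_t$ invertible, continuous, $Q_0=I$, and preserving $1_K$. A feasible $G$ at $t_0$ would then be transported to $Q G Q^{-1}$-type perturbations, corrected back into the stochastic polytope by mixing with a small amount of the uninformative matrix $\frac1K 1_K 1_K^\top$. That mixing is feasible for the kernel constraint and costs only $O(\|Q_t-Q_{t_0}\|)$ in $\al$.

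If this fails, the fallback is an explicit path. I would use a linear interpolation $d_i(t)=(1-t)(e^{2i-1}-e^{2i})+t(e^{i+1}-e^1)$, after checking that the dimension stays constant, and compute $\Lambda$ directly from the eigenvalue formulation. Alternatively, since the paper only needs existence, it suffices to exhibit a monotone family, for example mixtures of the two constructions, for which the optimal $\Sen$ can be written down by hand.
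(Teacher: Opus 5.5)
Your route is genuinely different from the paper's, and it goes through once Step 3 is made precise. The paper never uses continuity. For $N\ge 3$ it writes down an explicit $N\times K$ matrix $\Tech$, with $K\in\{4,\dots,N+1\}$ and a parameter $\delta$. In this family the kernel constraint forces rows $1,\dots,K-2$ of $G$ to be equal and row $K-1$ to be a $\delta$-mixture of rows $1$ and $K$. The LP then reduces to two probability vectors and gives $\MVA(\Tech)=1/(2+\delta(K-3))$ for $\delta\in[(K-4)/(K-3),1]$. This sweeps $[1/(K-1),1/(K-2)]$, and the union over $K$ is $[1/N,1/2]$.

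You instead fix $K=2N-2$, evaluate $\Lambda$ only at the two ends of a path in the Grassmannian of $(N-1)$-planes in $1_K^\perp$, and fill in the rest by the intermediate value theorem. The paper's argument is elementary and gives a concrete $\Tech$ and optimal $G$ for every target $\al$. Yours is existential and needs a stability result for an LP whose constraint matrix moves. In exchange it shows something the paper's family does not: all of $[1/N,1/2]$ is attained with $K$ fixed and full rank $R(\Tech)=N$ throughout. So once $K>N$, the rank alone does not determine $\MVA$; in the paper's family, $R=K-1$ changes with the target.

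\textbf{Step 3.} Your first attempt is correct, but the transport should be left multiplication, stated as follows. Near $t_0$, choose $Q_t\to I$ with $Q_t1_K=1_K$, $Q_t(1_K^\perp)=1_K^\perp$ and $Q_t\mathbb W_{t_0}=\mathbb W_t$. One way is to take a local continuous frame of $\mathbb W_t$ and complete it by fixed vectors.
\begin{itemize}
\item These properties imply $Q_t^\top 1_K=1_K$.
\item Hence $G\mapsto Q_t^{-\top}G$ maps matrices with columns in $\mathbb W_{t_0}^\perp$ to matrices with columns in $\mathbb W_t^\perp$.
\item It preserves unit row sums and moves entries by $O(\|Q_t-I\|)$.
\item Mixing with $\frac1K 1_K1_K^\top$ then restores nonnegativity and $G\ge\al' I_K$ with $\al'\to\al$, which gives lower semicontinuity.
\end{itemize}
Conjugation $QGQ^{-1}$ is only valid if $Q_t$ is orthogonal. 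You also do not need a globally continuous $\Tech_t$, because $\Lambda$ depends on $\Tech$ only through $\mathbb W_t$.

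\textbf{Two smaller errors.} At $\mathbb W_1$, your witness $G=\frac1N 1_K(1_N^\top,0)$ has $G_{jj}=0$ for $j>N$, so it violates $G\ge \al I_K$. Replace rows $j>N$ by $(e^j_K)^\top$. This keeps every column constant on coordinates $1,\dots,N$, which are the only ones the kernel constraint sees. Finally, discard the fallback interpolation. There $d_1(t)=(1-2t)(e^1_K-e^2_K)$ vanishes at $t=1/2$, so the dimension is not constant along that path.
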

\begin{proof}
The proof is by direct construction. For $N=2$, the result is trivial. For $N\geq3$, consider $K\in[4,N+1]$ and, for $\delta\in[0,1]$, the $N\times K$ matrix $\Tech$ such that 
\begin{align*}
\tech_i^\top &= \frac{1}{K} 1_K^\top,\quad i=1\ \textrm{or}\  i=K,K+1,\dots,N,\\
\tech_i^\top &= \frac{1}{K} (1_K+e^i_K-e^1_K)^\top, \quad i=2,\dots,K-2,\\
\tech_{i}^\top &=\frac{1}{K}( 1_K+e^{i}_K- \delta e^1_K-(1-\delta)e^K_K)^\top, \quad i=K-1,
\end{align*}
where $e^i_K$ is the $i$th basis vector of $\reals^K$.
By construction, $\Tech$ is a stochastic matrix. Consider $\MVA(\Tech)$ that solves the corresponding problem (\ref{eq:alpha-lp}).

The constraint $D(\Tech) G = 0_{(N-1)\times K}$ reduces to:
\begin{align}\label{eq:MVA-construction}
(e^i_K-e^1_K)^\top G=0,\quad i=2,\dots,K-2,
\qquad
(e^{K-1}_K-\delta e^1_K-(1-\delta)e^K_K)^\top G=0, 
\end{align}
which effectively states that the first $K-2$ rows are equal to each other and the $(K-1)$th row is a convex combination of the $1$st and the $K$th rows with weight $\delta$. Thus, the effective variables are the $1$st and the $K$th rows of the matrix $G$. The constraints $G\geq \al I_{K\times K}$ and $G 1_K=1_K$ then reduce to those rows being probability vectors, such that
\begin{gather*}
G_{1k}\geq \al,\ k=1,\dots,K-2,\ (\delta G_{1,K-1}+(1-\delta)G_{K,K-1})\geq \al,\ G_{KK}\geq \al.
\end{gather*}
Therefore,
\begin{align*}
\al \leq (\delta G_{1,K-1}+(1-\delta)G_{K,K-1})
\leq \delta(1-(K-2)\al)+(1-\delta)(1-\al)
=1-\al\big(1+\delta(K-3)\big).
\end{align*}
Rearranging yields
\begin{align}
\al \leq \al^\dagger\triangleq\frac{1}{2+\delta(K-3)},
\end{align}
and thus $\MVA(\Tech)\leq \al^\dagger$. Whenever $\delta\geq (K-4)/(K-3)$, $\al^\dagger\leq 1/(K-2)$ and the bound $\al^\dagger$ can be attained by $G$ with the $1$st and the $K$th rows being (the rest of $G$ is pinned down by condition \eqref{eq:MVA-construction}):
\begin{align*}
    G_{1k}&=\al^\dagger,\ k=1,\dots,K-2,&\ G_{1,K-1}&=1-(K-2)\al^\dagger,&\ G_{1K}&=0,\\
    G_{Kk}&=0,\ k=1,\dots,K-2,&\ G_{K,K-1}&=1-\al^\dagger,&\ G_{KK}&=\al^\dagger.
\end{align*}
Thus, $\MVA(\Tech)=\al^\dagger$. At $\delta=(K-4)/(K-3)$, $\al^\dagger=1/(K-2)$; at $\delta=1$, $\al^\dagger=1/(K-1)$.

This establishes that, for all $K\in[4,N+1]$, as $\delta$ spans $[(K-4)/(K-3),1]$, the proposed $\Tech$ achieves $\MVA(\Tech)$ that spans $[1/(K-1),1/(K-2)]$. Spanning $K$ from $4$ to $N+1$, we obtain the result.
\end{proof}

\subsection{On Strictly Convex Indirect Utility}\label{app:convex-utility}
In this section, we show that the indirect utility is strictly convex when the agent's private information induces a full-support distribution of beliefs.

Specifically, we assume that the agent's ex-post payoff is type-independent, $u(a,\omega)$, and identify $\theta$ with the belief it induces in the absence of any other information: $\Theta\subseteq\Delta(\Omega)$, $\theta(\omega)=\Pr(\omega|\theta)$.
 We denote by $\nu$ the final posterior belief that the agent forms, i.e., conditional on both the adviser's message and the agent's type:
\begin{align}
\nu_{\mu,\theta}\triangleq \Pr(\omega|\mu,\theta)=\frac{\mu(\omega)f(\theta|\omega)}{\sum_{\omega'\in\Omega}\mu(\omega')f(\theta|\omega')}.
\end{align}
A necessary and sufficient condition for a private strategy $\sigmah$ to be Bayes-optimal at any given interim belief $\mu$, $\sigmah\in \arg\max_{\sigmah'} U(\sigmah',\mu)$, is that $\sigmah(\cdot|\theta)\in\Delta(A)$ is an optimal best-response  with respect to $\nu_{\mu,\theta}$: for all $a\in\supp\,\sigmah(\cdot|\theta)$,
\begin{align}
a\in\arg\max_{a'\in A}\sum_{\omega\in\Omega} \nu_{\mu,\theta}(\omega)u(a',\omega).
\end{align}

\begin{assumption}\label{ass:convex-genericity}
$A$ is finite and there exist $a_1,a_2\in A$ and $\mu \in\mathrm{int}(\Delta(\Omega))$ such that $\expect_\mu [u(a_1,\omega)]=\expect_\mu[u(a_2,\omega)]>\expect_\mu[u(a,\omega)]$ for all $a\not\in \{a_1,a_2\}$. In addition, for each $\omega\in\Omega$ either $u(a_1,\omega)>u(a_2,\omega)$ or $u(a_2,\omega)>u(a_1,\omega)$.
\end{assumption}
\begin{lemma}
Suppose  $\theta$ has full support on $\Delta(\Omega)$ and \autoref{ass:convex-genericity} holds. Then, $U(\mu)$ is strictly convex in the interior of $\Delta(\Omega)$.
\end{lemma}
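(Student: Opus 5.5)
We prove strict convexity of $U$ on the interior by contradiction, using the representation $U(\mu)=\max_{\w\in W}\w\cdot\mu$ from the proof of Lemma~\ref{lem:bayes-plausible}. Since $U$ is convex and positively homogeneous, it fails to be strictly convex on the interior only if it is affine along some nondegenerate segment $[\mu',\mu'']\subset\mathrm{int}\,\Delta(\Omega)$. Take the midpoint $\bar\mu$ and a Bayes-optimal private strategy $\sigmah$ at $\bar\mu$ with payoff profile $\w$. The supporting hyperplane $\w\cdot\mu$ touches $U$ at $\bar\mu$, and affinity along the segment forces it to touch at both endpoints. Hence $\sigmah$ is Bayes-optimal at every $\mu$ in the segment. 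By the characterization in this appendix, for almost every $\theta$ the actions in $\supp\,\sigmah(\cdot|\theta)$ are then best responses to $\nu_{\mu,\theta}$ simultaneously for all $\mu\in[\mu',\mu'']$.

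Next we locate types $\theta$ at which this is impossible. Let $H=\{\nu:\expect_\nu[u(a_1,\omega)]=\expect_\nu[u(a_2,\omega)]\}$. By Assumption~\ref{ass:convex-genericity}, there is an interior point $\hat\mu\in H$ where $a_1,a_2$ are the only best actions. By continuity, there is an open neighborhood $O$ of $\hat\mu$ where only $a_1,a_2$ can be optimal. Within $O$, $a_1$ is strictly optimal on one side of $H$ and $a_2$ on the other.

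The map $\theta\mapsto\nu_{\mu,\theta}$ is Bayesian updating. Identifying $\theta$ with its stand-alone posterior, $\nu_{\mu,\theta}(\omega)\propto \mu(\omega)\theta(\omega)/\mu_0(\omega)$, a homeomorphism of the interior of the simplex. Now fix a direction $d=\mu''-\mu'\neq 0$ with $\sum_\omega d(\omega)=0$. The posteriors $\nu_{\mu,\theta}$ for $\mu$ along the segment form a curve through $\nu_{\bar\mu,\theta}$. In log-likelihood-ratio coordinates this curve is a smooth path with nonzero tangent, namely $d(\omega)/\bar\mu(\omega)$ up to adding a constant.

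We need this curve to cross $H$ transversally inside $O$. The normal to $H$ is $g(\omega)=u(a_1,\omega)-u(a_2,\omega)$, which is nonzero in every coordinate by the second part of Assumption~\ref{ass:convex-genericity}. The derivative of $g\cdot\nu$ along the curve at a point $\nu$ is $\sum_\omega g(\omega)\nu(\omega)\big(\ell(\omega)-\bar\ell\big)$, where $\ell(\omega)=d(\omega)/\bar\mu(\omega)$ and $\bar\ell$ is its $\nu$-mean.

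This derivative cannot vanish identically over $\nu\in H\cap O$. Suppose it did. Then, viewed as a function of $\nu$, it would have to vanish on an open piece of a hyperplane, which forces $g\,\ell$ to be collinear with $g$ modulo constants. That would make $\ell$ constant, i.e., $d=0$, a contradiction. So there is an open set of $\nu\in H\cap O$ where the derivative is nonzero.

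Choose $\theta$ in the corresponding open set, which has positive probability because $\theta$ has full support. For such $\theta$, as $\mu$ moves along the segment, $\nu_{\mu,\theta}$ crosses $H$ within $O$. The unique best action therefore switches from $a_1$ to $a_2$, so no single action is optimal across the whole segment. This contradicts the fixed $\sigmah$ remaining optimal on a positive-measure set of types, and proves strict convexity.

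The main obstacle is the transversality argument. One must show that for every feasible direction $d$, the updated posterior path is not tangent to $H$ for a positive-measure set of types. This is exactly where the coordinatewise strictness in Assumption~\ref{ass:convex-genericity} enters. I would try to verify it first in log-likelihood coordinates, where the Bayesian shift is a translation by $\ell$. In those coordinates $H$ becomes a smooth hypersurface whose normal at $\nu$ is $(g(\omega)\nu(\omega))_\omega$, and the remaining step is to show this normal is not orthogonal to $\ell$ modulo constants everywhere on an open piece of the surface.
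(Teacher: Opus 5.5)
\textbf{Gap: the positive-probability step.} The sentence that fails as written is ``choose $\theta$ in the corresponding open set, which has positive probability because $\theta$ has full support.'' The set you found is open only \emph{relative to} $H\cap O$, which is an $(N-2)$-dimensional slice of the $(N-1)$-dimensional simplex. Its preimage under the homeomorphism $\theta\mapsto\nu_{\bar\mu,\theta}$ is not open in the type space, so full support gives it no mass. It has probability zero whenever the type distribution has a density, as in the triage example. So, as written, you have not exhibited a positive-probability set of types on which Bayes-optimal strategies at different beliefs must differ.

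The rest of your argument follows the paper. Both reduce strict convexity to the claim that no private strategy is Bayes-optimal at two distinct interior beliefs. Your nondegeneracy computation is the infinitesimal version of the paper's: $\nu\cdot(g*\ell)$ cannot vanish on an open piece of $H$, because $g$ has no zero coordinate and $\ell=d/\bar\mu$ is nonconstant. The paper instead shows that $\{\nu\cdot g=0\}$ and $\{\nu\cdot(r*g)=0\}$ are distinct, with $r=\mu_2/\mu_1$ (the paper writes $d$ for your $g$). One small correction: vanishing on $H\cap O$ forces $g*\ell=c\,g$ exactly. An additive constant $k$ would give $\nu\cdot(g*\ell)=k\neq0$ on $H$, so drop ``modulo constants.''

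\textbf{Repair: thicken the set.} Let $\phi(\theta,t)\triangleq g\cdot\nu_{\bar\mu+td,\theta}$. Take $\hat\theta$ with $\nu_{\bar\mu,\hat\theta}\in H\cap O$ and, say, $\partial_t\phi(\hat\theta,0)=c_0>0$. By continuity there exist $\epsilon\in(0,1/2]$ and an open neighborhood $\mathcal N$ of $\hat\theta$ such that, for all $\theta\in\mathcal N$ and $|t|\leq\epsilon$:
\begin{itemize}
\item $\partial_t\phi(\theta,t)\geq c_0/2$;
\item $\nu_{\bar\mu+td,\theta}\in O$;
\item $|\phi(\theta,0)|<c_0\epsilon/2$.
\end{itemize}
Then $\phi(\theta,\epsilon)>0>\phi(\theta,-\epsilon)$ for every $\theta\in\mathcal N$. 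Hence $a_1$ is the unique best response at interim belief $\bar\mu+\epsilon d$ and is strictly beaten by $a_2$ at $\bar\mu-\epsilon d$. Since $\mathcal N$ is open, it has positive probability, which gives the contradiction. Comparing only these two beliefs also removes the need for your claim that optimality holds ``simultaneously for all $\mu$'' off a single null set.

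The paper avoids this crossing argument by using the finite Bayes map $\Gamma$ between two fixed beliefs. It picks $\bar\nu\in H\cap O$ with $\bar\nu\cdot(r*g)\neq0$. It then takes a full-dimensional open set $A\subset O\cap\{\nu\cdot g>0\}$ near $\bar\nu$ on which $\nu\cdot(r*g)<0$. This gives $\Gamma(A)\subset\{\nu\cdot g<0\}$ directly, with no derivative or intermediate-value step.
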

\begin{proof}
A sufficient condition for strict convexity of $U(\mu)$ in the interior of $\Delta(\Omega)$ is that for any $\mu_1,\mu_2\in \mathrm{int}(\Delta(\Omega))$, $\mu_1\neq\mu_2$, $$\arg\max_{\sigmah} U(\sigmah,\mu_1)\cap \arg\max_{\sigmah} U(\sigmah,\mu_2)=\emptyset.$$
Fix any such $\mu_1,\mu_2$. Let $\mu\in \mathrm{int}(\Delta(\Omega))$ be the belief from \autoref{ass:convex-genericity} and define
$d(\omega)\triangleq u(a_1,\omega)-u(a_2,\omega)$, $r(\omega)\triangleq \mu_2(\omega)/\mu_1(\omega)$.
By \autoref{ass:convex-genericity} and continuity of the expected payoff in belief, there exists an open neighborhood
$O\subset \mathrm{int}(\Delta(\Omega))$ of $\mu$ such that for every $\nu\in O$, action $a_1$ is uniquely optimal whenever $\nu\cdot d>0$, and  not optimal whenever
$\nu\cdot d<0$, because it is outperformed by $a_2$. Define
\begin{align*}
R_1\triangleq \{\nu\in O:\nu\cdot d>0\},
\qquad
R_2\triangleq \{\nu\in \Delta(\Omega):\nu\cdot d<0\}.
\end{align*}
Bayes' rule implies that for every $\omega$ and  $\theta$, $\nu_{\mu_2,\theta}=\Gamma(\nu_{\mu_1,\theta})$,
where $\Gamma:\mathrm{int}(\Delta(\Omega))\to \mathrm{int}(\Delta(\Omega))$ is the map defined by
\begin{align*}
\Gamma(\nu)(\omega)\triangleq \frac{\nu(\omega)r(\omega)}{\sum_{\omega'}\nu(\omega')r(\omega')}.
\end{align*}

Since $\mu_1\neq \mu_2$, $r$ is not constant; because $d(\omega)\neq 0$ for all $\omega$, the hyperplanes
$\{\nu:\nu\cdot d=0\}$ and $\{\nu:\nu\cdot (r* d)=0\}$, where $*$ denotes the component-wise product, are distinct.
As $\mu\in O\cap\{\nu:\nu\cdot d=0\}$, we can choose $\overline\nu\in O$ such that $\overline\nu\cdot d=0$ and $\overline\nu\cdot (r* d)\neq 0$.
Without loss of generality, suppose $\overline\nu\cdot (r* d)<0$; otherwise swap the labels of $a_1$ and $a_2$.
By continuity, there exists a nonempty open set $A\subset R_1$ such that $\nu\cdot (r* d)<0$ for all $\nu\in A$.
For every $\nu\in A$,
\begin{align*}
\Gamma(\nu)\cdot d=\frac{\nu\cdot (r* d)}{\nu\cdot r}<0,
\end{align*}
so $\Gamma(A)\subset R_2$. The map $\theta\mapsto \nu_{\mu_1,\theta}$ is continuous and onto $\mathrm{int}(\Delta(\Omega))$.
Hence $\Theta_0\triangleq\{\theta:\nu_{\mu_1,\theta}\in A\}$ is nonempty and open; since $\theta$ has full support on $\Delta(\Omega)$, it has strictly
positive probability. 

For every $\theta\in\Theta_0$ we have $\nu_{\mu_1,\theta}\in R_1$ and $\nu_{\mu_2,\theta}\in R_2$. This means that the private strategies optimal at $\mu_1$ and $\mu_2$ must necessarily differ on $\theta\in\Theta_0$. The result follows.
\end{proof}

\subsection{Proof of \autoref{lem:binary-misaligned-adviser}}\label{proof_lem:binary-misaligned}

The misaligned adviser with signal realization $\mu$ minimizes $U(\sigmah(\mu'),\mu)$ over $\mu'$ in the trust region. Recall that the function $U(\sigmah(\mu'),\mu)$ is linear in $\mu$, and we assumed that $U(\mu)=\max_{\sigmah}U(\sigmah,\mu)$ is strictly convex and twice differentiable in $\mu$. This means that $U(\sigmah(\mu'),\mu)$ is the value at $\mu$ of the hyperplane supporting $U$ at $\mu'$. Under our convention that $\mu$ is the probability of state $1$, this means that 
\begin{align*}
U(\sigmah(\mu'),\mu)=U(\mu')+U'(\mu')(\mu-\mu').
\end{align*}
By convexity of $U$, this function is quasi-concave in $\mu'$, and hence 
for all $\mu'\in[\muu,\muo]$, $U(\sigmah(\mu'),\mu)\geq \min \{U(\sigmah(\muu),\mu),U(\sigmah(\muo),\mu)\}$. Thus, the misaligned adviser's strategy takes a threshold form. The threshold $\muh(\muu,\muo)$ is the intersection point of the supporting lines to $U$ at points $\muu$ and $\muo$:
\begin{align*}
    U(\muu)+U'(\muu)(\muh(\muu,\muo)-\muu)=U(\muo)+U'(\muo)(\muh(\muu,\muo)-\muo).
\end{align*}
If $\muu=\muo=\mu$, $\muh(\muu,\muo)=\mu$, coinciding with (\ref{def_muh}) by continuity. Otherwise, rearranging, we obtain:
\begin{align*}
\muh(\muu,\muo)=\frac{\muo U'(\muo)-\muu U'(\muu)-(U(\muo)-U(\muu))}{U'(\muo)-U'(\muu)}.
\end{align*}
Applying integration by parts, the numerator equals  $\int_{\muu}^{\muo} \mu U''(\mu)d\mu$ and the denominator equals $\int_{\muu}^{\muo} U''(\mu)d\mu$. The result follows.

\subsection{Proof of \autoref{prop:binary-trust-region}}\label{app:proof-binary-trust}

By \autoref{lem:binary-misaligned-adviser} and \autoref{cor:binary-interval}, the choice of an optimal strategy for the agent reduces to optimization over the extreme points $\muu$, $\muo$ of the trust interval with the corresponding payoff:
\begin{align*}
    V&(\muu,\muo)\triangleq\\
    &\al\left(\int_0^{\muu} (U(\muu)+U'(\muu)(\mu-\muu)) \tau(\mu)d\mu +\int_{\muu}^{\muo}U(\mu)\tau(\mu)d\mu+\int_{\muo}^1 (U(\muo)+U'(\muo)(\mu-\muo))\tau(\mu)d\mu \right)\\
    &+(1-\al)\left( \int_0^{\muh(\muu,\muo)}(U(\muo)+U'(\muo)(\mu-\muo))\tau(\mu)d\mu +\int_{\muh(\muu,\muo)}^1  (U(\muu)+U'(\muu)(\mu-\muu))\tau(\mu)d\mu \right).
\end{align*}

The function $V(\muu,\muo)$ is continuously differentiable with  partial derivatives (whenever $\muu<\muo$):
\begin{align*}
    \frac{\partial V}{\partial \muu}&=U''(\muu)\left(\al\int_0^{\muu}(\mu-\muu) \tau(\mu)d\mu +(1-\al)\int_{\muh(\muu,\muo)}^1 (\mu-\muu) \tau(\mu)d\mu \right),\\
    \frac{\partial V}{\partial \muo}&=U''(\muo)\left(\al \int_{\muo}^1  (\mu-\muo)\tau(\mu)d\mu+(1-\al) \int_{0}^{\muh(\muu,\muo)} (\mu-\muo)\tau(\mu)d\mu\right).
\end{align*}
Intuitively, the first-order impact of a change in the trust boundary equals the change in the action played at that boundary, measured by $U''(\cdot)$, integrated over the belief regions in which the aligned and misaligned advisers induce that action, weighted by the alignment parameter. (Terms involving $\partial\muh/\partial\muu$ and $\partial\muh/\partial\muo$ vanish because at $\mu=\muh(\muu,\muo)$ the misaligned adviser is indifferent between the two messages.)

Whenever the trust region is non-singleton, $\muu<\muo$, at the optimal choice of $\muu$ and $\muo$ these partial derivatives must equal zero, $\partial V/\partial\muu=0$ and $\partial V/\partial\muo=0$. Since $U''(\cdot)>0$, these first-order conditions can be rearranged as follows. Define functions $\Psi_1$ and $\Psi_2$ as
\begin{align*}
      \Psi_1(\muu,\muo,\al)&\triangleq\al\int_0^{\muu}(\mu-\muu) \tau(\mu)d\mu +(1-\al)\int_{\muh(\muu,\muo)}^1 (\mu-\muu) \tau(\mu)d\mu,\\
      \Psi_2(\muu,\muo,\al)&\triangleq
    \al\int_{\muo}^{1}(\mu-\muo) \tau(\mu)d\mu +(1-\al)\int_0^{\muh(\muu,\muo)} (\mu-\muo) \tau(\mu)d\mu.
\end{align*}
Then, $ \Psi_1(\muu,\muo,\al)= \Psi_2(\muu,\muo,\al)=0$ is equivalent to  conditions (\ref{eq:cond_1}) and (\ref{eq:cond_2}).

First, we show that conditions (\ref{eq:cond_1}) and (\ref{eq:cond_2}) are incompatible with $\al<1/2$. (If $M$ were finite, this would follow directly from \autoref{thm:viable}.) Indeed, if those conditions hold then (for the rest of the proof, we will often omit the arguments of the function $\muh$ for brevity):
\begin{align*} 
    \al\left(\int_0^{\muh} (\muh-\mu)\tau(\mu)d\mu+\int_{\muh}^1 (\mu-\muh)\tau(\mu)d\mu\right)
    &\geq \al\left(\int_0^{\muu} (\muu-\mu)\tau(\mu)d\mu+\int_{\muo}^1 (\mu-\muo)\tau(\mu)d\mu\right)\\
    &= (1-\al)\left(\int_0^{\muh} (\muo-\mu)\tau(\mu)d\mu+\int_{\muh}^1 (\mu-\muu)\tau(\mu)d\mu\right)\\
    &\geq (1-\al)\left(\int_0^{\muh} (\muh-\mu)\tau(\mu)d\mu+\int_{\muh}^1 (\mu-\muh)\tau(\mu)d\mu\right),
\end{align*}
where the inequalities hold because $\muu\leq\muh(\muu,\muo)\leq\muo$ and the equality is a consequence of (\ref{eq:cond_1}) and (\ref{eq:cond_2}). Because $\tau$ has full support, the multipliers on both sides of the inequality are strictly positive, and thus $\al\geq 1-\al$, i.e., $\al\geq1/2$.

Now we argue that for $\al\geq1/2$ the solution to (\ref{eq:cond_1}) and (\ref{eq:cond_2}) such that $\muu\leq\muo$ exists.  Note that at $\al=1/2$, $[\muu,\muo]=[\mu_0,\mu_0]$ is a solution.  
For the rest of this proof, we omit the dependence of $\Psi_i$ on $\al$. By \autoref{lem:binary-misaligned-adviser} and direct inspection, $\muh(\muu,\muo)$ is strictly and continuously increasing in its arguments, so $\Psi_1(\muu,\muo)$ is strictly and continuously decreasing in $\muu$ for each $\muo$. Furthermore, 
\begin{align*}
    \Psi_1(0,\muo)&=(1-\al)\int_{\muh(0,\muo)}^1\mu \tau(\mu)d\mu\geq 0,\\
    \Psi_1(1,\muo)&=\al\int_0^{1}(\mu-1) \tau(\mu)d\mu < 0.
\end{align*}
Therefore, for each $\muo$, a best-response $b_1(\muo)$ such that $\Psi_1(b_1(\muo),\muo)=0$ exists and is unique. Since $\Psi_1(\muu,\muo)$ strictly decreases in $\muo$, $b_1(\muo)$ strictly decreases in $\muo$. Finally, for any $\muo$,
\begin{align*}
   \int_{b_1(\muo)}^1 (\mu-b_1(\muo))\tau(\mu)d\mu\geq \int_{\muh(b_1(\muo),\muo)}^1 (\mu-b_1(\muo))\tau(\mu)d\mu\geq \int_0^{b_1(\muo)}(b_1(\muo)-\mu)\tau(\mu)d\mu,
\end{align*}
where the second inequality holds because $\al\geq1/2$ and $\Psi_1(b_1(\muo),\muo)=0$. Thus, for any $\muo$, $b_1(\muo)\leq \mu_0$.

Analogously, for each $\muu$, a best-response $b_2(\muu)$ such that $\Psi_2(\muu,b_2)=0$, exists, is unique, strictly decreases in $\muu$, and is everywhere greater than $\mu_0$.

Therefore, a solution to (\ref{eq:cond_1}) and (\ref{eq:cond_2}) is any $\muu\in[0,\mu_0]$ and $\muo=b_2(\muu)\in[\mu_0,1]$ such that $b_1(b_2(\muu))=\muu$. By the established properties of $b_1$ and $b_2$, $b_1(b_2(\muu))$ is continuous  in $\muu$ with $b_1(b_2(\muu))\in[0,\mu_0]$ for all $\muu\in[0,\mu_0]$; hence, $b_1(b_2(0))-0\geq 0$ and $b_1(b_2(\mu_0))-\mu_0\leq 0$. By the intermediate value theorem, there exists $\muu\in[0,\mu_0]$ such that $b_1(b_2(\muu))=\muu$.

So far, we showed that for $\al\geq1/2$, a solution exists and belongs to a closed rectangular set $D=\{(\muu,\muo):\muu\in[0,\mu_0],\muo\in[\mu_0,1]\}$. To establish uniqueness, consider the function $-\Psi=(-\Psi_1,-\Psi_2)$ on $D$. Any solution must satisfy $-\Psi(\muu,\muo)=(0,0)$. Observe that for any $(\muu,\muo)\in D$,
\begin{align*}
    \frac{\partial [-\Psi_1]}{\partial \muu}&=\al\int_0^{\muu}\tau(\mu)d\mu+(1-\al)\frac{\partial\muh}{\partial\muu}\tau(\muh)(\muh-\muu)+(1-\al)\int_{\muh}^1 \tau(\mu)d\mu>0,\\
    \frac{\partial [-\Psi_1]}{\partial \muo}&=(1-\al)\frac{\partial\muh}{\partial\muo}\tau(\muh)(\muh-\muu)\geq 0,\\
    \frac{\partial [-\Psi_2]}{\partial \muu}&=-(1-\al)\frac{\partial\muh}{\partial\muu}\tau(\muh)(\muh-\muo)\geq 0,\\
    \frac{\partial [-\Psi_2]}{\partial \muo}&=\al\int_{\muo}^1 \tau(\mu)d\mu+(1-\al)\frac{\partial\muh}{\partial\muo}\tau(\muh)(\muo-\muh)+(1-\al)\int_{0}^{\muh} \tau(\mu)d\mu>0.
\end{align*}
Moreover, for all $(\muu,\muo)\in D$, the Jacobian of $[-\Psi]$ is a P-matrix, i.e., it has strictly positive principal minors:
\begin{align*}
    \frac{\partial [-\Psi_1]}{\partial \muu}>0,\quad \frac{\partial [-\Psi_1]}{\partial \muu}\frac{\partial [-\Psi_2]}{\partial \muo}-\frac{\partial [-\Psi_1]}{\partial \muo} \frac{\partial [-\Psi_2]}{\partial \muu}>0.
\end{align*}
By the Gale-Nikaido Theorem  (\cite{GaleNikaido1965Jacobian}, Theorem 4), it follows that $[-\Psi]$ is injective on $D$, and thus there exists at most one solution to the equation $-\Psi(\muu,\muo)=(0,0)$.

Finally, we show that the proposed trust region strategy is robustly rationalizable by explicitly constructing a TRE.  
For $\al\geq 1/2$, we need to construct a measurable strategy of the misaligned adviser $\beta:[0,\muh]\to[\muo,1]$ such that for every  set $X\subseteq[\muo,1]$ with $\al \tau(X)+(1-\al)\tau(\beta^{-1}(X))>0$,
\begin{align}\label{eq:cond_2_rat}
    \frac{\al\int_X \mu \tau(\mu)d\mu+(1-\al)\int_{\beta^{-1}(X)} \mu  \tau(\mu)d\mu}{\al\int_X  \tau(\mu)d\mu+(1-\al)\int_{\beta^{-1}(X)}   \tau(\mu)d\mu}=\muo.
\end{align}
(The  construction of $\beta:(\muh,1]\to[0,\muu]$ is analogous.) To this end, define two finite atomless nonnegative measures:
\begin{align*}
 \nu(Y)&\triangleq(1-\al)\int_Y (\muo-\mu)\tau(\mu)d\mu,\quad Y\subseteq[0,\muh]\\
\eta(X)&\triangleq\al\int_X (\mu-\muo)\tau(\mu)d\mu,\quad  X\subseteq[\muo,1].
\end{align*}
Observe that condition (\ref{eq:cond_2}) is precisely $\eta([\muo,1])=\nu([0,\muh])$ whereas condition (\ref{eq:cond_2_rat}) is the pushforward identity:
\begin{align*}
    \eta(X)=\nu(\beta^{-1}(X)),\quad X\subseteq[\muo,1].
\end{align*}
In other words, we need to find $\beta$ that transports $\nu$ to $\eta$. It is always possible. For a canonical quantile construction, define the cumulative mass functions $F_\nu(\mu)\triangleq\nu([0,\mu])$ for $\mu\in[0,\muh]$ and $F_\eta(\mu)\triangleq\eta([\muo,\mu])$ for $\mu\in[\muo,1]$. The transport map can then be set:
\begin{align*}
    \beta(\mu)=F_\eta^{-1}(F_\nu(\mu)),\quad \mu\in[0,\muh],
\end{align*}
where $F_\eta^{-1}(\cdot)$ is the generalized inverse: $F_\eta^{-1}(q)=\inf\{\mu\in[\muo,1]:F_\eta(\mu)\geq q\}$.

For $\al<1/2$, $T=\{\mu_0\}$, so the misaligned adviser is indifferent between all messages and it suffices to construct a strategy $\beta:[0,1]\to[0,1]$ such that for all $X\subseteq[0,1]$ with $\al\int_X  \tau(\mu)d\mu+(1-\al)\int_{\beta^{-1}(X)}   \tau(\mu)d\mu>0$,
\begin{align}\label{eq:cond_2_rat_2}
    \frac{\al\int_X \mu \tau(\mu)d\mu+(1-\al)\int_{\beta^{-1}(X)} \mu  \tau(\mu)d\mu}{\al\int_X  \tau(\mu)d\mu+(1-\al)\int_{\beta^{-1}(X)}   \tau(\mu)d\mu}=\mu_0,
\end{align}
which is equivalent to:
\begin{align*}
  \al\int_X (\mu-\mu_0) \tau(\mu)d\mu+(1-\al)\int_{\beta^{-1}(X)} (\mu-\mu_0)  \tau(\mu)d\mu  =0.
\end{align*}

To do that, observe that $\int_0^1 (\mu-\mu_0)\tau(\mu)d\mu=0$ and $\int_0^{\mu_0} (\mu_0-\mu)\tau(\mu)d\mu=\int_{\mu_0}^1 (\mu-\mu_0)\tau(\mu)d\mu>0$. Since $\al\in(0,1/2)$, $\al/(1-\al)\in(0,1)$ and by the intermediate value theorem, there exist $\mu_L\in(0,\mu_0)$ and $\mu_H\in(\mu_0,1)$ such that
\begin{align*}
    \int_0^{\mu_L} (\mu_0-\mu)\tau(\mu)d\mu&=\frac{\al}{1-\al}\int_0^{\mu_0} (\mu_0-\mu)\tau(\mu)d\mu,\\
    \int_{\mu_H}^1 (\mu-\mu_0)\tau(\mu)d\mu&=\frac{\al}{1-\al}\int_{\mu_0}^1 (\mu-\mu_0)\tau(\mu)d\mu.
\end{align*}
By construction,
\begin{align}
\int_0^{\mu_L} (\mu_0-\mu)\tau(\mu)d\mu&=\frac{\al}{1-\al}\int_{\mu_0}^1 (\mu-\mu_0)\tau(\mu)d\mu,\label{eq:transport-mass-1}\\
    \int_{\mu_H}^1 (\mu-\mu_0)\tau(\mu)d\mu&=\frac{\al}{1-\al}\int_0^{\mu_0} (\mu_0-\mu)\tau(\mu)d\mu,\label{eq:transport-mass-2}\\
    \int_{\mu_L}^{\mu_H} (\mu-\mu_0)\tau(\mu)d\mu&=0 \label{eq:transport-mass-3}.
\end{align}

We can set $\beta(\mu)=\beta_L(\mu)$ when $\mu\in[0,\mu_L]$, $\beta(\mu)=\mu_0$, when $\mu\in(\mu_L,\mu_H)$, and $\beta(\mu)=\beta_H(\mu)$, when $\mu\in[\mu_H,1]$.
Here, $\beta_L$ is a quantile transport map that transports measure $\nu_L(Y)=(1-\al)\int_Y(\mu_0-\mu)\tau(\mu)d\mu$ on $[0,\mu_L]$ to measure $\eta_L(X)=\al\int_X(\mu-\mu_0)\tau(\mu)d\mu$ on $[\mu_0,1]$, just like in the case of $\al\geq1/2$; it ensures that (\ref{eq:cond_2_rat_2}) holds for all $X\subseteq(\mu_0,1]$. Similarly, $\beta_H$ is a quantile transport map that transports measure $\nu_H(Y)=(1-\al)\int_Y(\mu-\mu_0)\tau(\mu)d\mu$ on $[\mu_H,1]$ to measure $\eta_H(X)=\al\int_X(\mu_0-\mu)\tau(\mu)d\mu$ on $[0,\mu_0]$; it ensures that (\ref{eq:cond_2_rat_2}) holds for all $X\subseteq[0,\mu_0)$. (The transported masses match the targets by equations \eqref{eq:transport-mass-1} and \eqref{eq:transport-mass-2}.) Finally, by equation \eqref{eq:transport-mass-3}, condition \eqref{eq:cond_2_rat_2} holds for $\mu=\mu_0$. The result follows.

\subsection{Proof of \autoref{prop:trust-change}}\label{app:proof-binary-state-change}
 
At $\al=1/2$, $[\muu,\muo]=[\mu_0,\mu_0]$ satisfies  conditions (\ref{eq:cond_1}) and (\ref{eq:cond_2}). At $\al=1$, $[\muu,\muo]=[0,1]$ satisfies conditions (\ref{eq:cond_1}) and (\ref{eq:cond_2}).

For $\al\in(1/2,1)$, denote by $\Psi_{i1}$, $\Psi_{i2}$, and $\Psi_{i\al}$ the partial derivatives of $\Psi_i$ with respect to  $\muu$,  $\muo$, and $\al$ respectively. Define the Jacobian: 
\begin{align*}
    J(\muu,\muo,\al)\triangleq 
    \begin{pmatrix}
 \Psi_{11} & \Psi_{12}\\
\Psi_{21} & \Psi_{22}
\end{pmatrix}.
\end{align*}
As we argued in the proof of \autoref{prop:binary-trust-region}, for all $\muu$, $\muo$, $\al>1/2$, $\det J(\muu,\muo,\al)>0$, and therefore, by the implicit function theorem, optimal $\muu(\al)$ and $\muo(\al)$ are continuously differentiable and\footnote{Differentiating the optimality conditions with respect to $\al$ we obtain $\Psi_{11}\frac{d \muu}{d\al}+\Psi_{12}\frac{d \muo}{d\al}+\Psi_{1\al}=0, \Psi_{21}\frac{d \muu}{d\al}+\Psi_{22}\frac{d \muo}{d\al}+\Psi_{2\al}=0.$}
\begin{align*}
\begin{pmatrix} 
d\muu/d\al\\
d\muo/d\al 
\end{pmatrix}
= -J(\muu,\muo,\al)^{-1}
\begin{pmatrix}
\Psi_{1\al}\\ 
\Psi_{2\al}
\end{pmatrix}.
\end{align*}
Consequently,
\begin{align*}
    \frac{d \muu}{d\al}&=-\frac{\Psi_{2\muo}\Psi_{1\al}-\Psi_{1\muo}\Psi_{2\al}}{\Psi_{1\muu}\Psi_{2\muo}-\Psi_{1\muo}\Psi_{2\muu}}<0,\\
    \frac{d \muo}{d\al}&=\frac{\Psi_{2\muu}\Psi_{1\al}-\Psi_{1\muu}\Psi_{2\al}}{\Psi_{1\muu}\Psi_{2\muo}-\Psi_{1\muo}\Psi_{2\muu}}>0,
\end{align*}
where the inequalities hold because, as we already showed, $\Psi_{1\muu}<0$, $\Psi_{1\muo}\leq 0$, $\Psi_{2\muu}\leq0$, $\Psi_{2\muo}< 0$, and 
\begin{align*}
    \Psi_{1\al}= \int_0^{\muu}(\mu-\muu) \tau(\mu)d\mu -\int_{\muh}^1 (\mu-\muu) \tau(\mu)d\mu<0,\\
    \Psi_{2\al}= 
 \int_{\muo}^{1}(\mu-\muo) \tau(\mu)d\mu-\int_0^{\muh} (\mu-\muo) \tau(\mu)d\mu >0.
\end{align*}
The result follows.

\subsection{Proof of \autoref{prop:sensitive-change}}\label{app:proof-sensitive-change}

Throughout, fix $\alpha>1/2$; the proposition is trivially true otherwise. We begin with a simple lemma. 

\begin{lemma}\label{lemma_lr}
Let $U_1,U_2$ be twice differentiable and strictly convex on $[0,1]$. 
Assume that the ratio 
${U_1''(\mu)}/{U_2''(\mu)}$
is decreasing. Then, with $\muh_i$ defined for each $U_i$, $i\in \{1,\,2\}$, by equation \eqref{def_muh}, we have that $
\muh_1(\muu,\muo)\leq\muh_2(\muu,\muo)$ for all $\muu\leq\muo$.
\end{lemma}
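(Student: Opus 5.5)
The plan is to read the threshold $\muh_i(\muu,\muo)$ from \autoref{lem:binary-misaligned-adviser} as a conditional expectation. Then the comparison becomes a standard monotone likelihood ratio (first-order stochastic dominance) argument.

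If $\muu=\muo$, both thresholds equal $\muu$ by the continuity convention in the proof of \autoref{lem:binary-misaligned-adviser}, so there is nothing to prove. Fix $\muu<\muo$ and define probability densities on $[\muu,\muo]$ by $g_i(\mu)\triangleq U_i''(\mu)/\int_{\muu}^{\muo}U_i''(\nu)d\nu$, for $i\in\{1,2\}$. These are well defined and strictly positive because each $U_i$ is strictly convex and twice differentiable. (Strictly, one should also ensure $U_i''>0$ almost everywhere, which strict convexity gives on an interval.) By \eqref{def_muh}, $\muh_i(\muu,\muo)=\int_{\muu}^{\muo}\mu\, g_i(\mu)d\mu$, so it suffices to show that the distribution with density $g_2$ first-order stochastically dominates the one with density $g_1$.

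The likelihood ratio $g_1/g_2$ is a positive constant multiple of $U_1''/U_2''$, which is decreasing by assumption. Hence $g_1$ is smaller than $g_2$ in the monotone likelihood ratio order. The standard implication, MLRP implies FOSD, then gives $G_1(x)\geq G_2(x)$ for all $x$, where $G_i$ are the corresponding CDFs.

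For completeness I would verify this directly. Both densities integrate to one and $g_1/g_2$ is decreasing, so there is a crossing point $c\in[\muu,\muo]$ with $g_1\geq g_2$ on $[\muu,c]$ and $g_1\leq g_2$ on $[c,\muo]$. It follows that $G_1-G_2$ is nondecreasing up to $c$ and nonincreasing afterwards. Since $G_1-G_2$ vanishes at both endpoints, it is nonnegative throughout. Integrating the identity function against the two CDFs (via $\int\mu\, dG=\muo-\int_{\muu}^{\muo}G(\mu)d\mu$) yields $\muh_1\leq\muh_2$.

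The only step requiring care is the single-crossing argument when $U_i''$ may vanish at isolated points or the ratio is only weakly monotone. I would handle this by working with the cumulative form: show $\int_{\muu}^{x}g_1\int_x^{\muo}g_2\geq\int_{\muu}^x g_2\int_x^{\muo}g_1$. This follows from the pointwise inequality $g_1(s)g_2(t)\geq g_2(s)g_1(t)$ for $s\leq t$, which is exactly the decreasing-ratio hypothesis written without division, so no crossing point needs to be located explicitly.
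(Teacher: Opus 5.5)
Your proposal is correct and is essentially the paper's proof: the paper likewise writes $\muh_i(\muu,\muo)$ as the mean of the density proportional to $U_i''$ on $[\muu,\muo]$, then argues that a decreasing ratio $U_1''/U_2''$ gives likelihood-ratio dominance, hence first-order stochastic dominance, hence ordered means; your crossing-point and cross-multiplied arguments simply spell out the standard middle step that the paper takes as known. One minor aside: strict convexity alone does not give $U_i''>0$ almost everywhere (a $C^2$ strictly convex function can have $U''=0$ on a fat Cantor set), but your argument never needs this, since the normalizing constant is $U_i'(\muo)-U_i'(\muu)>0$ and the hypothesis that $U_1''/U_2''$ is defined and decreasing already supplies whatever positivity the crossing argument uses.
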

\begin{proof}
By equation  \eqref{def_muh}, for $i\in \{1, 2\}$, $\muh_i(\muu,\muo)=\mathbb{E}_{\mu\sim f_i}[\mu]$,
where $f_{i}(\mu)$ is a density of a probability measure on $[\muu,\,\muo]$ defined as
\begin{align*}
f_{i}(\mu)\triangleq\frac{U_i^{\prime\prime}(\mu)}{\int_{\muu}^{\muo}U_i^{\prime\prime}(\mu)\,d\mu}.
\end{align*}
By assumption, ${f_{1}(\mu)}/{f_{2}(\mu)}$
is decreasing, and hence $f_2$ likelihood-ratio dominates $f_1$. This implies that the probability distribution $f_1$ is first-order stochastically dominated by $f_2$; in particular, it has a lower mean. The result follows.
\end{proof}
We now take the second step by showing a monotone relationship between the cutoff function $\muh$ and the trust region.  

\begin{lemma}\label{lem:cutoff-shift}
Consider two decision problems $U_i$, $i\in\{1,2\}$, and 
let $(\muu_i,\muo_i)\in D$ denote the  unique  solution to the system
\begin{align*}
\Psi^{i}_1(\muu,\muo,\alpha)=0,\qquad \Psi^{i}_2(\muu,\muo,\alpha)=0,
\end{align*}
where $\Psi^{i}_1,\Psi^{i}_2$ and $D$ are defined as in the proof of \autoref{prop:binary-trust-region} for each $U_i$.
If $\muh_1(\muu,\muo) \leq  \muh_2(\muu,\muo)$  for all $\muu\leq \muo$, 
then $\muu_2\leq \muu_1$ and $\muo_2\leq \muo_1$.
\end{lemma}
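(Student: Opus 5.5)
The plan is a monotone comparative statics argument on the system $\Psi^i_1=\Psi^i_2=0$. We view that system as a pair of best-response curves, as in the proof of \autoref{prop:binary-trust-region}, and exploit the signs of the partial derivatives established there.

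\textbf{Step 1: sign of the dependence on the cutoff.} Note that $\muh$ enters $\Psi_1$ and $\Psi_2$ only through the integration limits. Write $\Psi_j(\muu,\muo;b)$ with $b$ treated as a free parameter:
\begin{align*}
\Psi_1(\muu,\muo;b)&=\al\int_0^{\muu}(\mu-\muu)\tau(\mu)d\mu+(1-\al)\int_b^1(\mu-\muu)\tau(\mu)d\mu,\\
\Psi_2(\muu,\muo;b)&=\al\int_{\muo}^1(\mu-\muo)\tau(\mu)d\mu+(1-\al)\int_0^{b}(\mu-\muo)\tau(\mu)d\mu.
\end{align*}
Then
\begin{align*}
\partial_b\Psi_1=-(1-\al)(b-\muu)\tau(b)\leq 0,\qquad \partial_b\Psi_2=(1-\al)(b-\muo)\tau(b)\leq 0,
\end{align*}
whenever $\muu\leq b\leq\muo$, which holds on $D$. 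Since $\muh_1\leq\muh_2$ pointwise, we get $\Psi^1_j(\muu,\muo)\geq\Psi^2_j(\muu,\muo)$ for $j=1,2$ at every $(\muu,\muo)\in D$.

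\textbf{Step 2: shift of the best-response curves.} Recall that $\Psi^i_1$ is strictly decreasing in $\muu$, so its root defines $b^i_1(\muo)$. From $\Psi^1_1\geq\Psi^2_1$ we obtain $b^1_1(\muo)\geq b^2_1(\muo)$ for every $\muo$. Similarly $\Psi^i_2$ is strictly decreasing in $\muo$, so $b^1_2(\muu)\geq b^2_2(\muu)$ for every $\muu$. Both curves are weakly decreasing, because $\Psi_1$ is decreasing in $\muo$ and $\Psi_2$ is decreasing in $\muu$. So both reaction curves shift up when we pass from problem $2$ to problem $1$.

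\textbf{Step 3: comparing the fixed points.} Because the best responses are decreasing in each other, the game is a submodular (strategic-substitutes) pair, and an upward shift of both curves has an ambiguous effect in general. This is the main obstacle. I will handle it with the P-matrix/implicit-function structure rather than lattice arguments. Take the homotopy $b_t=(1-t)\muh_2+t\muh_1$ and the unique solution $(\muu(t),\muo(t))$. Uniqueness follows from the same Gale--Nikaido argument: $\partial_b\Psi$ has the correct sign and the direct derivatives dominate. Differentiating gives
\begin{align*}
\begin{pmatrix}\dot\muu\\ \dot\muo\end{pmatrix}=-J^{-1}\begin{pmatrix}\partial_t\Psi_1\\ \partial_t\Psi_2\end{pmatrix},
\end{align*}
where $\partial_t\Psi_j=\partial_b\Psi_j\,(\muh_1-\muh_2)\geq 0$, and $J$ is the full Jacobian including the $\muh$ terms. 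Its entries satisfy $\Psi_{11},\Psi_{22}<0$, $\Psi_{12},\Psi_{21}\leq 0$, and $\det J>0$.

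The sign of $\dot\muu$ is governed by $\Psi_{22}\partial_t\Psi_1-\Psi_{12}\partial_t\Psi_2$. With $\Psi_{12}\leq 0$ and $\Psi_{22}<0$, this expression is not signed without more work. The key extra fact to verify is a dominance inequality. I expect $\partial_t\Psi_1$ and $\Psi_{12}$ to be proportional: both involve $(1-\al)\tau(\muh)(\muh-\muu)$, times $(\muh_1-\muh_2)$ and $\partial\muh/\partial\muo$ respectively. The same holds for $\partial_t\Psi_2$ and $\Psi_{21}$, with the factor $(\muh-\muo)$. Writing $c=(\muh_2-\muh_1)\geq 0$, we get $\partial_t\Psi=c\,(\Psi_{12}/\muh_{\muo},\,\Psi_{21}/\muh_{\muu})$ up to sign. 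After substitution, the formula reduces to showing that the direct effect dominates; the numerator then simplifies via $\det J$ terms. If the signs come out as desired, $\dot\muu\geq 0$ and $\dot\muo\geq 0$ follow along $t\in[0,1]$, because lowering the cutoff moves both endpoints up. Integrating from $t=0$ (problem $2$) to $t=1$ (problem $1$) then gives $\muu_2\leq\muu_1$ and $\muo_2\leq\muo_1$.

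If this algebra fails to sign, the fallback is a direct monotonicity check on the composite map $\muu\mapsto b_1(b_2(\muu))$ for each problem. That map is increasing, since it composes two decreasing functions. It shifts up, because $b^1_1\geq b^2_1$ is decreasing and composed with $b^1_2\geq b^2_2$. The fallback compares the unique fixed points via the sign change of $b_1(b_2(\muu))-\muu$, and then obtains $\muo$ from $b_2$.
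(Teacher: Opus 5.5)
Steps 1 and 2 are correct, and in Step 3 you rightly note that two downward-sloping reaction curves that both shift up do not, in general, order the fixed point. But the proposal stops exactly there. The signs of $\dot{\muu}$ and $\dot{\muo}$ are only conjectured (``I expect\dots'', ``if the signs come out as desired''), so the decisive step is missing. The fallback does not supply it. The claim that $\muu\mapsto b_1(b_2(\muu))$ shifts up does not follow: $b^1_1(b^1_2(x))\geq b^2_1(b^1_2(x))$, but $b^2_1(b^1_2(x))\leq b^2_1(b^2_2(x))$, because $b^2_1$ is decreasing and $b^1_2\geq b^2_2$. The two effects oppose each other. This is the same strategic-substitutes ambiguity you flagged, so the reaction-curve shifts of Step 2 alone cannot settle the comparison.

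The missing idea is that $\Psi_1$ depends on $\muo$, and $\Psi_2$ on $\muu$, \emph{only through the cutoff}. The paper exploits this globally. For a fixed cutoff $h$, the two equations decouple into $\muu(h)$ and $\muo(h)$. These are weakly decreasing in $h$ and do not depend on which $U_i$ is used. The equilibrium cutoff of problem $i$ is then a fixed point of the scalar decreasing map $\varphi_i(h)=\muh_i(\muu(h),\muo(h))$. Since $\varphi_1\leq\varphi_2$, this forces $h_1\leq h_2$, whence $\muu_2\leq\muu_1$ and $\muo_2\leq\muo_1$.

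Your homotopy can also be completed with this fact. Write $A_1=\partial_{\muu}\Psi_1<0$ and $A_2=\partial_{\muo}\Psi_2<0$ with the cutoff held fixed. Let $B_j=\partial_b\Psi_j\leq 0$ as in your Step 1, $p=\partial b_t/\partial\muo\geq 0$, $q=\partial b_t/\partial\muu\geq 0$, and $d=\muh_1-\muh_2\leq 0$. Then $J$ has entries $A_1+B_1q$, $B_1p$, $B_2q$, $A_2+B_2p$, and $\partial_t\Psi=(B_1d,B_2d)$. Hence $\det J=A_1A_2+A_1B_2p+A_2B_1q>0$, and the cross terms cancel exactly, so no dominance inequality is needed:
\begin{align*}
\dot{\muu}=-\frac{A_2B_1\,d}{\det J}\geq 0,\qquad \dot{\muo}=-\frac{A_1B_2\,d}{\det J}\geq 0.
\end{align*}
You would still have to verify existence and uniqueness for every intermediate $t$. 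The paper's intermediate-value and Gale--Nikaido arguments carry over, because $b_t$ is continuous, increasing in both endpoints, and lies in $[\muu,\muo]$. You would also need the implicit function theorem to get a $C^1$ path. The paper's scalar fixed-point argument avoids all of this and never differentiates $\muh_i$.
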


\begin{proof}
For any $h\in[0,1]$, define the auxiliary functions
\begin{align*}
\widetilde\Psi_1(\muu,h)
&\triangleq \alpha \int_0^{\muu} (\mu-\muu)\tau(\mu)\,d\mu
      + (1-\alpha)\int_h^1 (\mu-\muu)\tau(\mu)\,d\mu,\\
\widetilde\Psi_2(\muo,h)
&\triangleq \alpha \int_{\muo}^1 (\mu-\muo)\tau(\mu)\,d\mu
      + (1-\alpha)\int_0^h (\mu-\muo)\tau(\mu)\,d\mu.
\end{align*}

For each $h$, let $\muu(h)$ be the unique solution to $\widetilde\Psi_1(\muu,h)=0$ in $[0,\mu_0]$,
and let $\muo(h)$ be the unique solution to $\widetilde\Psi_2(\muo,h)=0$ in $[\mu_0,1]$, with the 
 existence and uniqueness following from an argument analogous to that used in the proof of \autoref{prop:binary-trust-region}.

For each $i\in\{1,2\}$ define the scalar map
\begin{align*}
\varphi_i(h)\ \triangleq\ \muh_i\big(\muu(h),\muo(h)\big).
\end{align*}
Let $h_i\triangleq \muh_i(\muu_i,\muo_i)$ be the cutoff evaluated at the optimal endpoints of problem $i$.
Then $(\muu_i,\muo_i)$ solves $\Psi_1^i=\Psi_2^i=0$ if and only if $\muu_i=\muu(h_i)$, $\muo_i=\muo(h_i)$, and $h_i=\varphi_i(h_i)$.
By the assumption $\muh_1\leq \muh_2$ pointwise, for every $h$,
\begin{align*}
\varphi_1(h)=\muh_1(\muu(h),\muo(h))\ \le\
\muh_2(\muu(h),\muo(h))=\varphi_2(h).
\end{align*}

Define a region $H\triangleq\{h\in[0,1]:\muu(h)\leq h\leq\muo(h)\}$. Note that $H$ is an interval and $h_1,h_2\in H$. For $h\in H$,
implicit differentiation yields
\begin{align*}
\muu'(h)= -\frac{\partial \widetilde\Psi_1/\partial h}{\partial \widetilde\Psi_1/\partial \muu}\leq 0,
\qquad
\muo'(h)= -\frac{\partial \widetilde\Psi_2/\partial h}{\partial \widetilde\Psi_2/\partial \muo}\leq 0.
\end{align*}

Thus, $\varphi_i$ is weakly decreasing in $h$ on $H$: both $\muu(h)$ and $\muo(h)$ are weakly
decreasing in $h$, while $\muh_i(\muu,\muo)$ is weakly increasing in each endpoint, so the composition
$h\mapsto \varphi_i(h)$ is weakly decreasing.

It follows that $h_2\geq h_1$: if $h_2<h_1$, then, since $\varphi_2$ is decreasing on $H$,
\begin{align*}
h_2=\varphi_2(h_2)\geq \varphi_2(h_1)\geq \varphi_1(h_1)=h_1,
\end{align*}
which is a contradiction. Since $\muu(\cdot)$ and $\muo(\cdot)$ are weakly decreasing,
\begin{align*}
\muu_2=\muu(h_2)\leq \muu(h_1)=\muu_1,
\qquad
\muo_2=\muo(h_2)\leq \muo(h_1)=\muo_1,
\end{align*}
completing the proof.
\end{proof}

By \autoref{lemma_lr}, $U_1''/U_2''$ decreasing implies $\muh_1(\muu,\muo)\le \muh_2(\muu,\muo)$ for all $\muu\le\muo$.  \autoref{lem:cutoff-shift} then yields $\muu_2\le \muu_1$ and $\muo_2\le \muo_1$. This proves \autoref{prop:sensitive-change}.

\subsection{Proof of \autoref{prop:coarse-solution}}\label{app:proof-binary-action}
With a small abuse of notation, we can parameterize each private strategy by $\sigmah=\Pr(a=a_2)$. We also drop the dependence of $G$ and $L$ on $\tauv$ in the notation. Then, by the arguments behind \autoref{thm:trust}, if the agent employs the set of private strategies $\hat\Sigma_0=\{\sigmah(m)\}_{m\in \Delta(\Omega)}$, the payoffs coming from both aligned and misaligned adviser depend only on $\sigmah_L\triangleq\inf \hat\Sigma_0$ and  $\sigmah_H\triangleq\sup \hat\Sigma_0$, and the optimal payoffs from using $\hat\Sigma_0$ are the same as if the agent  plays $\sigmah(m)=\sigmah_L$ when $v(m)<0$ and $\sigmah(m)=\sigmah_H$ when $v(m)\geq 0$. This payoff is:
\begin{align}
    &\int_{-\infty}^{0} (\al \sigmah_L+(1-\al) \sigmah_H) v\, \tauv(dv) + \int_{0}^{+\infty} (\al \sigmah_H+(1-\al) \sigmah_L) v\, \tauv(dv)\notag\\
    &=\sigmah_L((1-\al)G-\al L)+\sigmah_H(\al G-(1-\al) L)\label{eq:binary_action_payoff}.
\end{align}
The optimal choice of $\sigmah_L$ and $\sigmah_H$ must maximize \eqref{eq:binary_action_payoff} subject to $\sigmah_L,\sigmah_H\in[0,1]$ and $\sigmah_L\leq\sigmah_H$.
This is a linear optimization subject to $(\sigmah_L,\sigmah_H)$ being in a triangle  with vertices $(0,0)$, $(0,1)$, and $(1,1)$. A straightforward calculation gives the following solution:

If $G=L$: if $\al<\alh\triangleq 1/2$, then any $\sigmah_L=\sigmah_H$ is optimal; if $\al>\alh$, then $\sigmah_L=0$ and $\sigmah_H=1$; if $\al=\alh$, then any $(\sigmah_L,\sigmah_H)$ is optimal.
If $G>L$: if $\al<\alh$, then $\sigmah_L=\sigmah_H=1$; if $\al>\alh$, then $\sigmah_L=0$ and $\sigmah_H=1$; if $\al=\alh$, then $\sigmah_H=1$ and any $\sigmah_L$ is optimal.
If $G<L$: if $\al<\alh$, then $\sigmah_L=\sigmah_H=0$; if $\al>\alh$, then $\sigmah_L=0$ and $\sigmah_H=1$; if $\al=\alh$, then $\sigmah_L=0$ and any $\sigmah_H$ is optimal.

The cases $\sigmah_L=\sigmah_H$ correspond to not trusting any message and always acting in the same way, optimal at the prior belief, so $T=\{\mu_0\}$. The cases $\sigmah_L=0$ and $\sigmah_H=1$  correspond to trusting all messages, so $T=\Delta(\Omega)$.  Since we assumed that the probability of $v(\mu)=0$ is $0$ and $G\neq L$, the corresponding optimal strategy is uniquely determined.

It is left to show that those strategies are robustly rationalizable. Define $M_0=\{\mu:v(\mu)=0\}$, $M_-=\{\mu:v(\mu)<0\}$, and $M_+=\{\mu:v(\mu)>0\}$. Define probability measures $q_+(X)=\int_X v(\mu)\tau(d\mu)/G$ for $X\subseteq M_+$, $q_-(Y)=\int_Y (-v(\mu))\tau(d\mu)/L$ for $Y\subseteq M_-$.

For $\al>\alh$, the agent fully trusts the adviser. Consider the following strategy of the misaligned adviser. If $\mu\in M_0$, then $\beta(\mu)=\mu$. If $\mu\in M_-$, then $\beta$ randomizes over messages $m\in M_+$ according to $q_+$. If $\mu\in M_+$, then $\beta$ randomizes over messages $m\in M_-$ according to $q_-$. This strategy is clearly adversarial. Furthermore, since $\al>\alh$,  after any message $m\in M_+$, the posterior expected payoff from action $a_2$ is strictly positive: for any $X\subseteq M_+$  with $\tau(X)>0$,
\begin{align*}
  \al\int_X v(m)\tau(dm)+(1-\al)\int_{\Delta(\Omega)} v(\mu)\beta(X|\mu)\tau(d\mu)=\al G q_+(X)-(1-\al)Lq_+(X)>0.
\end{align*}
Analogously, after any message $m\in M_-$, the posterior expected payoff from action $a_2$ is strictly negative:
for any $Y\subseteq M_-$ with $\tau(Y)>0$,
\begin{align*}
  \al\int_Y v(m)\tau(dm)+(1-\al)\int_{\Delta(\Omega)} v(\mu)\beta(Y|\mu)\tau(d\mu)=-\al L q_-(Y)+(1-\al)Gq_-(Y)<0.
\end{align*}
And by construction, after any message $m\in M_0$, the posterior expected payoff from action $a_2$ is zero. 

For $\al<\alh$, the agent doesn't trust the adviser so any adviser's strategy is adversarial. Consider the case $G>L$; the complementary case is analogous. We need to construct the misaligned adviser strategy that makes communication not valuable. To do so, define $\gamma=\al L/((1-\al)G)\in[0,1]$. Consider the following strategy of the misaligned adviser. If $\mu\in M_-$,   then $\beta$ randomizes over messages $m\in M_+$ according to $q_+$. If $\mu\in M_+$, then with probability $\gamma$, $\beta$ randomizes over messages $m\in M_-$ according to $q_-$ and with probability $(1-\gamma)$, $\beta$ randomizes over messages $m\in M_+$ according to $q_+$. This strategy makes the posterior expected payoff from action $a_2$ zero after every message $m\in M_-$: for any $Y\subseteq M_-$ with $\tau(Y)>0$,
\begin{align*}
  \al\int_Y v(m)\tau(dm)+(1-\al)\int_{\Delta(\Omega)} v(\mu)\beta(Y|\mu)\tau(d\mu)=-\al L q_-(Y)+(1-\al)\gamma Gq_-(Y)=0.
\end{align*}
After any message $m\in M_+$, the posterior expected payoff from action $a_2$ is strictly positive: for any $X\subseteq M_+$ with $\tau(X)>0$,
\begin{align*}
  \al\int_X v(m)\tau(dm)+(1-\al)\int_{\Delta(\Omega)} v(\mu)\beta(X|\mu)\tau(d\mu)=(G-L)q_+(X)>0.
\end{align*}
Thus, this $\beta$ robustly rationalizes the strategy. 

Finally, at $\al=\alh$, both full trust and no trust are optimal, along a continuum of other strategies, and robustly rationalizable by the same strategies $\beta$ as above.

\subsection{Supporting calculations for \autoref{example1}}\label{proof_lem:spherical}

The key step in the construction of the optimal trust region in \autoref{example1} comes from the following simple lemma. 

\begin{lemma}[Spherical U]\label{lem:spherical}
    Let $U(\mu)=\val(\|\mu-\muh\|)$ for some vector $\muh$ and function $\val$. Let $\mu'(\rad,n)=\muh+\rad n$, where $n\in\reals^N$  with $\| n\|=1$ and $\rad\in\reals$. Then, (i)  $D_U(\mu,\mu'(\rad,n))$ is strictly increasing in $n\cdot(\muh-\mu)$ whenever $\rad\neq 0$, and (ii) $D_U(\mu,\mu'(\rad,n))$ is a unimodal function in $\rad$ with a minimum at  $\rad= n\cdot (\mu-\muh)$. 
\end{lemma}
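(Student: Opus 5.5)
The plan is to compute the Bregman distance explicitly. Write $\mu'=\muh+\rad n$, set $x\triangleq\mu-\muh$, and treat $U$ as a function on $\reals^N$ of the form $\val(\|\cdot-\muh\|)$. I will assume $\val$ is differentiable and increasing on $[0,\infty)$, as is implicit in the example's convexity requirements; by symmetry I may also take $\rad\geq 0$, since replacing $(\rad,n)$ with $(-\rad,-n)$ leaves $\mu'$ unchanged.

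For $\rad\neq 0$ the gradient is
\begin{align*}
\nabla U(\mu')=\val'(|\rad|)\,\mathrm{sgn}(\rad)\,n .
\end{align*}
Substituting into the definition of the Bregman distance gives
\begin{align*}
D_U(\mu,\mu')=\val(\|x\|)-\val(|\rad|)-\val'(|\rad|)\,\mathrm{sgn}(\rad)\,(n\cdot x-\rad).
\end{align*}
The first two terms do not depend on the direction $n$. The last term is $-\val'(|\rad|)\,\mathrm{sgn}(\rad)\,n\cdot x$, which equals $\val'(|\rad|)\,\mathrm{sgn}(\rad)\, n\cdot(\muh-\mu)$. When $\val'(|\rad|)>0$ and $\rad>0$, this is strictly increasing in $n\cdot(\muh-\mu)$, which proves (i). I expect the need for strict monotonicity of $\val$ (equivalently, $\val'>0$ away from $0$, which holds whenever $\val$ is strictly increasing and convex with $U$ strictly convex) to be the only assumption that must be made explicit. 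The sign convention matters here: the sign of $\rad$ must be absorbed so that the statement is read with $\rad>0$.

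For (ii), fix $n$ and let $t\triangleq n\cdot x$. Regard $g(\rad)\triangleq D_U(\mu,\muh+\rad n)$ as a function of the scalar $\rad$. Along the line, $\phi(\rad)\triangleq\val(|\rad|)$ is convex and even. Writing $\mu=\muh+tn+x_\perp$, we have
\begin{align*}
D_U(\mu,\mu')=\val(\|x\|)-\phi(\rad)-\phi'(\rad)(t-\rad).
\end{align*}
Differentiating in $\rad$ gives $g'(\rad)=-\phi''(\rad)(t-\rad)$. Since $\phi''\geq 0$, with strict inequality under strict convexity, $g$ decreases for $\rad<t$ and increases for $\rad>t$. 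Hence $g$ is unimodal with its minimum at $\rad=t=n\cdot(\mu-\muh)$, as claimed.

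The main obstacle is regularity at $\rad=0$, where $\val(|\rad|)$ may fail to be twice differentiable. I would handle it by the same derivative-sign argument applied to one-sided derivatives, using the fact that $\phi'$ is monotone, which is all that the sign computation needs. Alternatively, one can invoke the general fact that the Bregman distance is increasing along rays away from the minimizer, as noted before \autoref{lem:divergence}.
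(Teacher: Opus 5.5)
Your proposal is correct and is essentially the paper's proof: you compute $D_U(\mu,\muh+\rad n)$ using that $\nabla U(\muh+\rad n)$ is parallel to $n$, read (i) off the sign of the coefficient on $n\cdot(\muh-\mu)$, and obtain (ii) from $\partial D_U/\partial\rad=\val''(\rad)\,(\rad-n\cdot(\mu-\muh))$. Your $\mathrm{sgn}(\rad)$ bookkeeping is more careful than the paper's, which writes $\nabla U=\val'(\rad)n$ and asserts $\val'(\rad)>0$ for all $\rad\neq0$, so (i) is implicitly restricted to $\rad>0$ (for $\rad<0$ the monotonicity reverses, as you observe); also, $\val'>0$ on $(0,\infty)$ is not an extra assumption, since it follows from strict convexity of the even function $\rad\mapsto\val(|\rad|)$.
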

\begin{proof}
In this case, $\nabla U(\mu'(\rad,n))=\val'(\rad)n$, and thus
\begin{align*}
    D_U(\mu,\mu'(\rad,n))=\val(\|\mu-\muh\|)-\val(\rad)-\val'(\rad)(n\cdot (\mu-\muh)-\rad).
\end{align*}
Since $U(\mu)$ is strictly convex, $\val'(\rad)>0$ whenever $r\neq 0$ and $\val''(\rad)>0$. Thus $D_U(\mu,\mu'(\rad,n))$ is strictly increasing in $n\cdot (\muh-\mu)$. Furthermore,
\begin{align*}
    \frac{\partial D_U(\mu,\mu'(\rad,n))}{\partial \rad}=-\val'(\rad)-\val''(\rad)(n\cdot (\mu-\muh)-\rad)+\val'(\rad)=\val''(\rad)(\rad-n\cdot (\mu-\muh)).
\end{align*}
Because $\val''(\rad)>0$, the unimodality follows.
\end{proof}

Suppose that the misaligned adviser holds belief $\mu$. By \autoref{lem:divergence}, the adviser will try to maximize $D_U(\mu,\mu')$ over $\mu'$ on the boundary of the trust region. Parameterizing $\mu' = b + \rad n$ and applying \autoref{lem:spherical} yields that $n$ is optimally chosen to be $(b-\mu)/\|b-\mu\|$, whereas $\rad$ is chosen to be maximal within the trust region, $\rad = \rad^*(\al)$. Thus, the optimal $\mu'$ is given uniquely by $b + \rad^*(\al)(b-\mu)/\|b-\mu\|$.

\end{document}